\declaretheorem[numbered=yes,name=Lemma,qed=$\blacksquare$]{lemma}
\declaretheorem[numbered=yes,name=Theorem,qed=$\blacksquare$]{theorem}
\declaretheorem[numbered=yes,name=Definition,qed=$\blacksquare$]{definition}
\newcommand{\STLC}{STLC}
\newcommand{\mulr}{$\mu$LR}
\newcommand{\var}[1]{\mathit{#1}}
\newcommand{\return}{\textit{Landin's}}
\newcommand{\dummy}{\var{dummy}}
\newcommand{\knot}{\var{knot}}
\newcommand{\domstyle}[1]{\mathrm{#1}}
\newcommand{\ClosedVal}{\domstyle{ClosedCal}}
\newcommand{\Type}{\domstyle{Type}}
\newcommand{\Loc}{\domstyle{Location}}
\newcommand{\Val}{\domstyle{Value}}
\newcommand{\UPred}[1]{\domstyle{UPred}(#1)}
\newcommand{\Pred}[1]{\domstyle{Pred(#1)}}
\newcommand{\World}{\domstyle{World}}
\newcommand{\blater}{\mathop{\blacktriangleright}}
\newcommand{\nats}{\mathbb{N}}
\newcommand{\typestyle}[1]{\mathit{#1}}
\newcommand{\bool}{\typestyle{bool}}
\newcommand{\integer}{\typestyle{int}}
\newcommand{\listt}{\typestyle{list}}
\renewcommand{\int}{\integer}
\newcommand{\tarray}{\typestyle{array}}
\newcommand{\tarrow}[2]{ #1 \rightarrow #2}
\newcommand{\tree}{\typestyle{tree}}
\newcommand{\Istack}{\typestyle{stack}}
\newcommand{\Tref}[1]{\typestyle{ref} \; #1}
\newcommand{\cut}[2][k]{\lfloor #2 \rfloor_{#1}}
\newcommand{\sem}[1]{\ensuremath{\llbracket #1 \rrbracket}}
\newcommand{\tuple}[1]{\ensuremath{\langle #1 \rangle}}
\newcommand{\curly}[1]{\mathcal{#1}}
\newcommand{\fun}{\rightarrow}
\newcommand{\finfp}{\xrightharpoonup{\textit{\tiny{fin}}}}
\newcommand{\monfp}{\xrightarrow{\textit{\tiny{mon}}}}
\DeclareMathOperator{\FTV}{FTV}
\DeclareMathOperator{\Rel}{Rel}
\DeclareMathOperator{\dom}{dom}
\DeclareMathOperator{\safe}{safe}
\DeclareMathOperator{\irred}{irred}
\DeclareMathOperator{\SNPred}{SN}
\DeclareMathOperator{\val}{Val}
\DeclareMathOperator{\inl}{inl}
\DeclareMathOperator{\inr}{inr}
\DeclareMathOperator{\scase}{case}
\DeclareMathOperator{\caseof}{of}
\DeclareMathOperator{\epack}{pack}
\DeclareMathOperator{\eunpack}{unpack}
\DeclareMathOperator{\ein}{in}
\DeclareMathOperator{\enot}{not}
\newcommand{\progstyle}[1]{\texttt{#1}}
\newcommand{\fold}{\progstyle{fold}}
\newcommand{\unfold}{\progstyle{unfold}}
\newcommand{\fst}{\progstyle{fst}}
\newcommand{\snd}{\progstyle{snd}}
\newcommand{\mk}{\progstyle{mk}}
\newcommand{\push}{\progstyle{push}}
\newcommand{\pop}{\progstyle{pop}}
\newcommand{\sort}{\progstyle{sort}}
\newcommand{\sortstring}{\progstyle{sortstring}}
\newcommand{\sortint}{\progstyle{sortint}}
\newcommand{\pack}[3]{\progstyle{pack} \; \tuple{#1,#2} \; \progstyle{as} \; #3}
\newcommand{\unpack}[4]{\progstyle{unpack} \; \tuple{#1,#2} = #3 \; \progstyle{in} \; #4 }
\newcommand{\Ealloc}[1]{\progstyle{ref} \; #1}
\newcommand{\Eassign}[2]{#1 \; \progstyle{:=} \; #2}
\newcommand{\Ederef}[1]{\progstyle{!} #1}
\newcommand{\evalto}{\rightarrow}
\newcommand{\evaltos}[1][*]{\evalto^{#1}}
\newcommand{\extendh}[3]{#1[#2\mapsto #3]}
\newcommand{\warning}[1]{{\color{red} !` #1 !} \\}
\newcommand{\mtenv}{{\raisebox{1.5pt}{$\scriptstyle\bullet$}}}
\newcommand{\case}[1]{\\{\bf Case} #1,}
\newcommand{\nequal}[1][k]{\stackrel{\tiny{#1}}{=}}
\newcommand{\eqdef}{\stackrel{\textit{\tiny{def}}}{=}}
\newcommand{\future}{\sqsupseteq}
\newcommand{\subst}[3]{#1[^{\textstyle #2}/_{\textstyle #3}]} %
\newcommand{\labs}[2]{\lambda #1 \ldotp #2}
\newcommand{\tlabs}[3]{\lambda #1 : #2 \ldotp \; #3}
\newcommand{\tLabs}[2][\alpha]{\Lambda #1 . \; #2}
\newcommand{\eif}[3]{\progstyle{if}\; #1 \; \progstyle{then} \; #2 \; \progstyle{else} \; #3}
\newcommand{\true}{\progstyle{true}}
\newcommand{\tstring}{\text{string}}
\newcommand{\false}{\progstyle{false}}
\newcommand{\SN}[2]{\SNPred_{#1}(#2)}
\newcommand{\pred}[2]{\curly{#1}\sem{#2}}
\newcommand{\pres}[3]{\ensuremath{\curly{#1}_{#2}\sem{#3}}}
\newcommand{\prep}[3]{\ensuremath{\curly{#1}\sem{#3}_{#2}}}
\newcommand{\epred}[1]{\pred{E}{#1}}
\newcommand{\epres}[2][k]{\pres{E}{#1}{#2}}
\newcommand{\eprep}[2][\rho]{\prep{E}{#1}{#2}}
\newcommand{\vpred}[1]{\pred{V}{#1}}
\newcommand{\vpres}[2][k]{\pres{V}{#1}{#2}}
\newcommand{\vprep}[2][\rho]{\prep{V}{#1}{#2}}
\newcommand{\gpred}[1]{\pred{G}{#1}}
\newcommand{\gpres}[2][k]{\pres{G}{#1}{#2}}
\newcommand{\gprep}[2][\rho]{\prep{G}{#1}{#2}}
\newcommand{\dpred}[1]{\pred{D}{#1}}
\newcommand{\hsat}[3][k]{#2 :_{#1} #3}
\newcommand{\sub}[3]{\subst{#1}{#2}{#3}}
\newcommand{\extsub}[3]{\ensuremath{#1 \lbrack #2 \mapsto #3 \rbrack}}
\newcommand{\equivalence}[3]{\ensuremath{#1 \approx^{#2} #3}}
\newcommand{\ctxeq}[2]{\equivalence{#1}{ctx}{#2}}
\newcommand{\lreq}[2]{\equivalence{#1}{LR}{#2}}
\newcommand{\TTrue}{\ensuremath{
    \inferrule*[right=T-True]{ }
               {\Gamma \vdash \true : \bool}}}
\newcommand{\TFalse}{\ensuremath{
    \inferrule*[right=T-False]{ }
               {\Gamma \vdash \false : \bool}}}
\newcommand{\TVar}{\ensuremath{
    \inferrule*[right=T-Var]{\Gamma(x) = \tau}
                            {\Gamma \vdash x : \tau}}}
\newcommand{\TIf}{\ensuremath{
    \inferrule*[right=T-If]{\Gamma \vdash e : \bool \and \Gamma \vdash e_1 : \tau \and \Gamma \vdash e_2 : \tau}
               {\Gamma \vdash \eif{e}{e_1}{e_2} : \tau}}}
\newcommand{\TApp}{\ensuremath{
    \inferrule*[right=T-App]{\Gamma \vdash e_1 : \tarrow{\tau_2}{\tau} \and
                            \Gamma \vdash e_2 : \tau_2}
                           {\Gamma \vdash e_1 \; e_2 : \tau}}}
\newcommand{\TAbs}{\ensuremath{\inferrule*[right=T-Abs]{\Gamma, x: \tau_1 \vdash e : \tau_2}
                           {\Gamma \vdash \tlabs{x}{\tau_1}{e} : \tarrow{\tau_1}{\tau_2}}}}
\newcommand{\TFold}{\ensuremath{
    \inferrule*[right=T-Fold]{\Gamma \vdash e : \tau[\mu\alpha. \; \tau/\alpha]}
                             {\Gamma \vdash \fold \; e : \mu\alpha. \; \tau}}}
\newcommand{\TUnfold}{\ensuremath{
    \inferrule*[right=T-Unfold]{\Gamma \vdash e : \mu\alpha. \; \tau}
                               {\Gamma \vdash \unfold \; e : \tau[\mu\alpha. \; \tau/\alpha]}}}
\newcommand{\FTTrue}{\ensuremath{
    \inferrule*[right=T-True]{ }
               {\Delta ; \Gamma \vdash \true : \bool}}}
\newcommand{\FTFalse}{\ensuremath{
    \inferrule*[right=T-False]{ }
               {\Delta ; \Gamma \vdash \false : \bool}}}
\newcommand{\FTVar}{\ensuremath{
    \inferrule*[right=T-Var]{\Gamma(x) = \tau}
                            {\Delta ; \Gamma \vdash x : \tau}}}
\newcommand{\FTApp}{\ensuremath{
    \inferrule*[right=T-App]{\Delta ; \Gamma \vdash e_1 : \tarrow{\tau_2}{\tau} \and
                            \Delta ; \Gamma \vdash e_2 : \tau_2}
                           {\Delta ; \Gamma \vdash e_1 \; e_2 : \tau}}}
\newcommand{\FTAbs}{\ensuremath{
    \inferrule*[right=T-Abs]{\Delta ; \Gamma, x: \tau_1 \vdash e : \tau_2}
                           {\Delta ; \Gamma \vdash \tlabs{x}{\tau_1}{e} : \tarrow{\tau_1}{\tau_2}}}}
\newcommand{\FTIf}{\ensuremath{
    \inferrule*[right=T-If]{\Delta ; \Gamma \vdash e : \bool \and 
                            \Delta ; \Gamma \vdash e_1 : \tau \\ 
                            \Delta ; \Gamma \vdash e_2 : \tau}
               {\Delta ; \Gamma \vdash \eif{e}{e_1}{e_2} : \tau}}}
\newcommand{\FTtApp}{\ensuremath{
    \inferrule*[right=T-TAbs]{\Delta,\alpha;\Gamma \vdash e : \tau}
                             {\Delta; \Gamma \vdash \tLabs{e} : \forall \alpha.\tau}}}
\newcommand{\FTtAbs}{\ensuremath{
    \inferrule*[right=T-TApp]{\Delta; \Gamma \vdash e : \forall \alpha . \tau \and
                              \Delta \vdash \tau'}
                             {\Delta ; \Gamma \vdash e [\tau'] : \subst{\tau}{\tau'}{\alpha}}}}
\author{Lau Skorstengaard\\lau@cs.au.dk}
\title{An Introduction to Logical Relations\\\large{Proving Program Properties Using Logical Relations}}
\begin{document}
\maketitle 
\tableofcontents
\section{Introduction}
\label{sec:introduction}
The term logical relations stems from Gordon Plotkin's memorandum \emph{Lambda-definability and logical relations written} in 1973.
However, the spirit of the proof method can be traced back to Wiliam W.
Tait who used it to show strong normalization of \emph{System T} in 1967.


Names are a curious thing.
When I say ``chair'', you immediately get a picture of a chair in your head.
If I say ``table'', then you picture a table.
The reason you do this is because we denote a chair by ``chair'' and a table by ``table'', but we might as well have said ``giraffe'' for chair and ``Buddha'' for table.
If we encounter a new word composed of known words, it is natural to try to find its meaning by composing the meaning of the components of the name.
Say we encounter the word ``tablecloth'' for the first time, then if we know what ``table'' and ``cloth'' denotes we can guess that it is a piece of cloth for a table.
However, this approach does not always work.
For instance, a ``skyscraper'' is not a scraper you use to scrape the sky.
Likewise for logical relations, it may be a fool's quest to try to find meaning in the name.
Logical relations are relations, so that part of the name makes sense.
They are also defined in a way that has a small resemblance to a logic, but trying to give meaning to logical relations only from the parts of the name will not help you understand them.
A more telling name might be Type Indexed Inductive Relations.
However, Logical Relations is a well-established name and easier to say, so we will stick with it (no one would accept ``giraffe'' to be a chair).

The majority of this note is based on the lectures of Amal Ahmed at the Oregon Programming Languages Summer School, 2015.
The videos of the lectures can be found at \url{https://www.cs.uoregon.edu/research/summerschool/summer15/curriculum.html}.


\subsection{Simply Typed Lambda Calculus}
The language we use to present logical predicates and relations is the simply typed lambda calculus (STLC).
In the first section, it will be used in its basic form, and later it will be used as the base language when we study new constructs and features.
We will later leave it implicit that it is \STLC{} that we extend with new constructs.
\STLC{} is defined in Figure~\ref{fig:stlc-def}.
\begin{figure}[htbp]
  \centering
  \[
    \arraycolsep=0pt\def\arraystretch{1.5}
    \begin{array}{p{1cm} r l }
      \multicolumn{3}{l}{\textbf{Types:}}\\
      &\tau ::={} & \bool \mid \tarrow{\tau}{\tau} \\
      \multicolumn{3}{l}{\textbf{Terms:}}\\
      &e ::={} &  x \mid \true \mid \false \mid \eif{e}{e}{e} \mid \tlabs{x}{\tau }{e} \mid e \; e\\
      \multicolumn{3}{l}{\textbf{Values:}}\\
      &v ::={} & \true \mid \false \mid \tlabs{x}{\tau}{e}\\
      \multicolumn{3}{l}{\textbf{Evaluation Context:}}\\
      &E ::= {} & [] \mid \eif{E}{e}{e} \mid E \; e \mid v \; E\\
      \multicolumn{3}{l}{\textbf{Evaluations:}} \\
      \multicolumn{3}{c}{
      \mathpar
      \inferrule{ }{ \eif{\true}{e_1}{e_2} \evalto e_1 } \and
      \inferrule{ }{ \eif{\false}{e_1}{e_2} \evalto e_2 } \and
      \inferrule{ }{(\tlabs{x}{\tau}{e}) \; v \evalto \subst{e}{v}{x}} \and
      \inferrule*[]{e \evalto e'}{E[e] \evalto E[e']}
      \endmathpar}\\
      \multicolumn{3}{l}{\textbf{Typing contexts:}} \\
      &\Gamma ::={} &  \mtenv \mid \Gamma , x : \tau\\
      \multicolumn{3}{l}{\textbf{Typing rules:}} \\
      \multicolumn{3}{c}{
      \mathpar
      \TFalse \and \TTrue \and
      \TVar \and \TAbs \and
      \TApp \and \TIf
      \endmathpar}
    \end{array}
  \]
  \caption{The simply typed lambda calculus. For the typing contexts, it is assumed that the binders ($x$) are distinct. That is, if $x \in \dom(\Gamma)$, then $\Gamma , x : \tau$ is not a legal context.}
  \label{fig:stlc-def}
\end{figure}

For readers unfamiliar with inference rules: A rule
\begin{mathpar}
  \inferrule{ A \and B }
            { A \wedge B }
\end{mathpar}
is read as if $A$ and $B$ is the case, then we can conclude $A \wedge B$. This means that the typing rule for application
\begin{mathpar}
  \TApp
\end{mathpar}
says that an application $e_1 \; e_2$ has the type $\tau$ under the typing context $\Gamma$ when $e_2$ has type $\tau_2$ under $\Gamma$ and $e_1$ has type $\tarrow{\tau_2}{\tau}$ also under $\Gamma$.

\subsection{Logical Relations}
\label{subsec:motivation-lr}
A logical relation is a proof method that can be used to prove properties of programs written in a particular programming language.
Proofs for properties of programming languages often go by induction on the typing or evaluation judgement.
A logical relations add a layer of indirection by constructing a collection of programs that all have the property we are interested in.
We will see this in more detail later.
As a motivation, here are a number of examples of properties that can be proven with a logical relation:
\begin{itemize}
\item Termination (Strong normalization)
\item Type safety
\item Program equivalences
  \begin{itemize}
  \item Correctness of programs
  \item Representation independence
  \item Parametricity and free theorems, e.g.
    \[
    f: \forall \alpha. \; \tarrow{\alpha}{\alpha}
    \]
    The program cannot inspect $\alpha$ as it has no idea which type it will be, therefore $f$ must be the identity function.
    \[
    \forall \alpha. \; \tarrow{\integer}{\alpha}
    \]
    A function with this type cannot exist (the function would need to return something of type $\alpha$, but it only has something of type $\integer$ to work with, so it cannot possibly return a value of the proper type).
  \item Security-Typed Languages (for Information Flow Control (IFC))\\
    Example: All types in the code snippet below are labelled with their security level.
    A type can be labelled with either $L$ for \emph{low} or $H$ for \emph{high}.
    We do not want any information flowing from variables with a \emph{high} label to a variable with a \emph{low} label.
    The following is an example of an insecure program because it has an \emph{explicit flow} of information from \emph{low} to \emph{high}:
        \begin{lstlisting}[escapeinside={@}{@}]
  x : int@$^L$@
  y : int@$^H$@
  x = y    //This assignment is insecure.
        \end{lstlisting}
    Information may also leak through a \emph{side channel}.
    There are many varieties of side channels, and they vary from language to language depending on their features.
    One of the perhaps simplest side channels is the following: Say we have two variable $x$ and $y$.
    They are both of integer type, but the former is labelled with \emph{low} and the latter with \emph{high}.
    Now say the value of $x$ depends on the value of $y$, e.g.\ $x=0$ when $y>0$ and $x=1$ otherwise.
    In this example, we may not learn the exact value of $y$ from $x$, but we will have learned whether $y$ is positive.
    The side channel we just sketched looks as follows:
        \begin{lstlisting}[escapeinside={@}{@}]
  x : int@$^L$@
  y : int@$^H$@
  if y > 0 then x = 0 else x = 1
        \end{lstlisting}
Generally, speaking the property we want is non-interference:
\begin{multline*}
  P(v_L,v_{1H}) \approx_L P(v_L,v_{2H})\\
  \text{for }\vdash P : \tarrow{\integer^L \times \integer^H}{\integer^L} 
\end{multline*}
That is for programs that generate \emph{low} results, we want ``\emph{low}-equivalent'' results.
\emph{Low}-equivalence means: if we execute $P$ twice with the same \emph{low} value but two different \emph{high} values, then the \emph{low} results of the two executions should be equal.
In other words, the execution cannot have depended on the \emph{high} value which means that no information was leaked to the \emph{low} results.
  \end{itemize}
\end{itemize}
\subsection{Categories of Logical Relations}
\label{subsec:categories-lr}
We can split logical relations into two: logical predicates and logical relations.
Logical predicates are unary and are usually used to show properties of programs.
Logical relations are binary and are usually used to show equivalences:
\begin{center}
  \begin{tabular}{l | l}
    Logical Predicates     & Logical Relations    \\
    \hline
    (Unary)                & (Binary)             \\
    $P_\tau(e)$             & $R_\tau(e_1,e_2)$     \\
    - One property         & - Program Equivalence\\ 
    - Strong normalization & \\
    - Type safety          & \\
  \end{tabular}
\end{center}
There are some properties that we want logical predicates and relation to have in general.
We describe these properties for logical predicates as they easily generalize to logical relations.
In general, we want the following things to hold true for a logical predicate that contains expressions $e$\footnote{Note: these are rules of thumb.
  For instance, one exception to the rule is the proof of type safety where the well-typedness condition is weakened to only require $e$ to be closed.}:
\begin{enumerate}
\item The expressions is closed and well-types, i.e.\ $\mtenv \vdash e : \tau$.
\item The expression has the property we are interested in.
\item The property of interest is preserved by eliminating forms.
\end{enumerate}

\section{Normalization of the Simply Typed Lambda Calculus}
\label{sec:stlc-strong-norm}
\subsection{Strong Normalization of \STLC{}}
In this section, we prove strong normalization for the simply typed lambda calculus which means that every term is strongly normalizing.
Normalization of a term is the process of reducing it to its normal form (where it can be reduced no further).
If a term is strongly normalizing, then it always reduces to its normal form.
In our case, the normal forms of the language are the values of the language.
\subsubsection*{A first attempt at proving strong normalization for \STLC{}}
We will first attempt a syntactic proof of the strong normalization property of \STLC{} to demonstrate how it fails.
However, first we need to state what we mean by strong normalization.
\begin{definition}
For expression $e$ and value $v$:
\begin{align*}
  e \Downarrow v & \eqdef e \evaltos v \\
  e \Downarrow   & \eqdef \exists v\ldotp e \Downarrow v \qedhere
\end{align*}
\end{definition}
\begin{theorem}[Strong Normalization] 
  If $\mtenv \vdash e : \tau$, then $e \Downarrow$
\end{theorem}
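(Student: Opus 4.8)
The plan is to use Tait's method of logical predicates. I would begin by exhibiting the failure of the naive approach: trying to prove the statement directly by induction on the derivation of $\mtenv \vdash e : \tau$, one gets stuck at the application rule. There, from $\mtenv \vdash e_1 : \tarrow{\tau_2}{\tau}$ and $\mtenv \vdash e_2 : \tau_2$ the two induction hypotheses give $e_1 \evaltos \tlabs{x}{\tau_2}{e_0}$ and $e_2 \evaltos v$, hence $e_1\,e_2 \evaltos \subst{e_0}{v}{x}$; but concluding $e_1\,e_2 \Downarrow$ then requires $\subst{e_0}{v}{x} \Downarrow$, and this substituted body is neither a subterm already handled nor something on which the type $\tau$ lets us recurse. (Since $\evalto$ is deterministic, $e \Downarrow$ merely says that the unique reduction sequence out of $e$ is finite, so ``weak'' and ``strong'' normalization coincide here and I will not distinguish them.)

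The fix is to prove a stronger, type-indexed statement. Define a predicate $\SN{\tau}{\cdot}$ on closed terms by recursion on $\tau$:
\[
  \begin{aligned}
    \SN{\bool}{e} &\;\eqdef\; \mtenv \vdash e : \bool \wedge e \Downarrow,\\
    \SN{\tarrow{\tau_1}{\tau_2}}{e} &\;\eqdef\; \mtenv \vdash e : \tarrow{\tau_1}{\tau_2} \wedge e \Downarrow \wedge \bigl(\forall e'.\, \SN{\tau_1}{e'} \Rightarrow \SN{\tau_2}{e\,e'}\bigr).
  \end{aligned}
\]
(Equivalently one splits this into a value predicate $\vpred{\tau}$ and an expression predicate $\epred{\tau} \eqdef \{ e \mid \exists v.\, e \evaltos v \wedge v \in \vpred{\tau} \}$, with an accompanying $\gpred{\Gamma}$ for closing substitutions; that refactoring is slightly smoother in the $\lambda$-case, but the content is identical.) The function clause bakes the eliminator --- application --- into the definition, which is exactly what was missing above. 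From the definition one reads off $\SN{\tau}{e} \Rightarrow e \Downarrow$ immediately, so the theorem follows once we establish the \emph{fundamental lemma}: $\mtenv \vdash e : \tau$ implies $\SN{\tau}{e}$.

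To prove the fundamental lemma by induction on the typing derivation it must be generalized to open terms by closing them with a suitable substitution: say $\gamma \in \gpred{\Gamma}$ iff $\dom(\gamma) = \dom(\Gamma)$ and $\SN{\tau'}{\gamma(x)}$ for every $x : \tau'$ in $\Gamma$, and prove ``if $\Gamma \vdash e : \tau$ and $\gamma \in \gpred{\Gamma}$ then $\SN{\tau}{\gamma(e)}$'' (the theorem being the case $\Gamma = \mtenv$). Before the induction I would prove the one nontrivial auxiliary fact --- the closure lemma: if $\mtenv \vdash e : \tau$ and $e \evalto e'$ then $\SN{\tau}{e} \Leftrightarrow \SN{\tau}{e'}$ --- by a short induction on $\tau$, using type preservation (standard) and the congruence rule $E[e] \evalto E[e']$. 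Given it, the cases of the fundamental lemma are bookkeeping: the variable case is the hypothesis on $\gamma$; $\true,\false$ are values of type $\bool$; for $\eif{e}{e_1}{e_2}$ one runs the guard to $\true$ or $\false$ via the induction hypothesis and transports $\SN{\tau}{\gamma(e_i)}$ back along that reduction with the closure lemma; for $e_1\,e_2$ one just feeds the two induction hypotheses into the function clause of the definition; and for $\tlabs{x}{\tau_1}{e}$ one checks the result is a well-typed value (so $\Downarrow$ holds) and, given $e'$ with $\SN{\tau_1}{e'}$, reduces $(\tlabs{x}{\tau_1}{\gamma(e)})\,e'$ to $(\gamma[x \mapsto v])(e)$ where $e' \evaltos v$, then applies the induction hypothesis at $\Gamma, x:\tau_1$ with the extended substitution $\gamma[x \mapsto v] \in \gpred{\Gamma, x:\tau_1}$ (using the closure lemma to see $\SN{\tau_1}{v}$ and to transport the result back).

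The main obstacle is conceptual, concentrated in the second step: one must resist inducting on $e$ or on $\tau$ alone and instead invent the type-indexed predicate whose arrow clause quantifies over all arguments in the predicate, and then correctly arrange the open-term generalization with value-closing substitutions so the $\lambda$-case closes the loop. The only genuinely technical ingredient, the closure lemma, is a routine induction on the type; everything downstream of it is routine.
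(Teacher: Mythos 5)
Your proposal is correct and follows essentially the same route as the paper: the same type-indexed predicate $\SN{\tau}{e}$ with the eliminator baked into the arrow clause, the same generalization to open terms via substitutions satisfying $\Gamma$, and the same reliance on the forward/backward closure lemma in the \textsc{T-If} and \textsc{T-Abs} cases. The only cosmetic difference is that you fold the paper's two preservation directions into a single biconditional closure lemma and mention (but do not adopt) the value/expression-predicate refactoring.
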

\begin{proof}\renewcommand{\qedsymbol}{$\boxtimes$}
\warning{This proof gets stuck and is not complete.}
Induction on the structure of the typing derivation.
\case{$\mtenv \vdash \true : \bool$} this term has already terminated.
\case{$\mtenv \vdash \false : \bool$} same as for \true.
\case{$\mtenv \vdash \eif{e}{e_1}{e_2} : \tau$} simple, but requires the use of canonical forms of bool\footnote{See Pierce's Types and Programming Languages~\citep{Pierce:types-and-pl} for more about canonical forms.}.
\case{$\mtenv \vdash \tlabs{x}{\tau_1}{e} : \tarrow{\tau_1}{\tau_2}$} it is a value already and it has terminated.
\case{$ \TApp $} \\
by the induction hypothesis, we get $e_1 \Downarrow v_1$ and $e_2 \Downarrow v_2$. By the type of $e_1$, we conclude $e_1 \Downarrow \tlabs{x}{\tau_2}{e'}$. What we need to show is $e_1 \; e_2 \Downarrow$. By the evaluation rules, we know $e_1 \; e_2$ takes the following steps:
\begin{align*}
  e_1 \; e_2 & \evaltos (\tlabs{x}{\tau_2}{e'}) \; e_2 \\
            & \evaltos (\tlabs{x}{\tau_2}{e'}) \; v_2 \\
            & \evalto e'[v_2/x]
\end{align*}
Here we run into an issue as we know nothing about $e'$.
As mentioned, we know from the induction hypothesis that $e_1$ evaluates to a lambda abstraction which makes $e_1$ strongly normalizing.
However, this say nothing about how the body of the body of the lambda abstraction evaluates.
Our induction hypothesis is simply not strong enough\footnote{:(}.
\end{proof}
The direct style proof did not work in this case, and it is not clear what to do to make it work.

\subsubsection*{Proof of strong normalization using a logical predicate}
Now that the direct proof failed, we try using a logical predicate. 
First, we define the predicate $\SN{\tau}{e}$:
\begin{align*}
  \SN{\bool}{e} & \Leftrightarrow{} \mtenv \vdash e : \bool \wedge e \Downarrow \\
  \SN{\tarrow{\tau_1}{\tau_2}}{e} & \Leftrightarrow{} \mtenv \vdash e : \tarrow{\tau_1}{\tau_2} \wedge e \Downarrow \wedge (\forall e'\ldotp  \SN{\tau_1}{e'} \implies \SN{\tau_2}{e \; e'})\\
\end{align*}
Now recall the three conditions from Section~\ref{subsec:categories-lr} that a logical predicate should satisfy.
It is easy to verify that $\SN{\tau}{e}$ only accepts closed well-typed terms.
Further, the predicate also requires terms to have the property we are interested in proving, namely $e \Downarrow$.
Finally, it should satisfy that ``\emph{the property of interest is preserved by eliminating forms}''. In \STLC{} lambdas are eliminated by application which means that application should preserve strong normalization when the argument is strongly normalizing.

The logical predicate is defined over the structure of $\tau$ which has $\bool$ as a base type, so the definition is well-founded\footnote{This may seem like a moot point as it is so obvious, but for some of the logical relations we see later it is not so, so we may as well start the habit of checking this now.}.
We are now ready to prove strong normalization using $\SN{\tau}{e}$.
To this end, we have the following lemmas:
\begin{lemma}
  \label{lem:pb-a}
    If $\mtenv \vdash e : \tau$, then $\SN{\tau}{e}$
\end{lemma}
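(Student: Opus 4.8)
The plan is to prove a statement strong enough to survive an induction on the typing derivation: the naive induction fails precisely at \textsc{T-Abs}, where the body of the abstraction is an \emph{open} term, so $\SN{\cdot}{\cdot}$ as defined does not even apply to it. I would therefore generalise to open terms closed by a substitution. Say a substitution $\gamma$ mapping each $x \in \dom(\Gamma)$ to a closed term is \emph{$\Gamma$-good} if $\SN{\Gamma(x)}{\gamma(x)}$ holds for every such $x$, and write $\gamma(e)$ for the simultaneous substitution. The generalised claim is: if $\Gamma \vdash e : \tau$ and $\gamma$ is $\Gamma$-good, then $\SN{\tau}{\gamma(e)}$. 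Lemma~\ref{lem:pb-a} is the instance $\Gamma = \mtenv$ with $\gamma$ the empty substitution, since then $\gamma(e) = e$.

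Before the main induction I would establish two closure lemmas, each proved by induction on the structure of $\tau$ (so the fact that the definition is type-indexed is exactly what makes this work): \emph{forward closure} --- if $\mtenv \vdash e : \tau$, $\SN{\tau}{e}$, and $e \evalto e'$, then $\SN{\tau}{e'}$; and \emph{backward closure} (head expansion) --- if $\mtenv \vdash e : \tau$, $e \evalto e'$, and $\SN{\tau}{e'}$, then $\SN{\tau}{e}$. At $\bool$ both are immediate from the definition of $\Downarrow$ together with preservation of typing under reduction; at $\tarrow{\tau_1}{\tau_2}$ one threads the hypothesis through the application clause, using that $e \evalto e'$ implies $e \; e'' \evalto e' \; e''$ (the evaluation context $[]\;e''$) to reduce back to the base cases at the codomain.

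The main argument is induction on the derivation of $\Gamma \vdash e : \tau$. \textsc{T-Var} is immediate from $\Gamma$-goodness, and $\true$, $\false$ are values, so $\SN{\bool}{\cdot}$ holds trivially. For \textsc{T-App} with $\Gamma \vdash e_1 : \tarrow{\tau_2}{\tau}$ and $\Gamma \vdash e_2 : \tau_2$, the induction hypotheses give $\SN{\tarrow{\tau_2}{\tau}}{\gamma(e_1)}$ and $\SN{\tau_2}{\gamma(e_2)}$, and instantiating the universal clause of the former at $\gamma(e_2)$ yields $\SN{\tau}{\gamma(e_1) \; \gamma(e_2)} = \SN{\tau}{\gamma(e_1 \; e_2)}$. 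For \textsc{T-If}, the hypothesis on the scrutinee gives $\gamma(e) \Downarrow v$ with $v$ either $\true$ or $\false$ by canonical forms for $\bool$; the hypothesis on the selected branch gives its strong normalisation, and backward closure along $\eif{\gamma(e)}{\gamma(e_1)}{\gamma(e_2)} \evaltos \eif{v}{\gamma(e_1)}{\gamma(e_2)} \evalto \gamma(e_i)$ concludes.

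The crux is \textsc{T-Abs}, say $\Gamma, x : \tau_1 \vdash e : \tau_2$ giving $\Gamma \vdash \tlabs{x}{\tau_1}{e} : \tarrow{\tau_1}{\tau_2}$. We must show $\SN{\tarrow{\tau_1}{\tau_2}}{\tlabs{x}{\tau_1}{\gamma(e)}}$; well-typedness and $\Downarrow$ are trivial since this is already a value, so the work is the universal clause: for every $e'$ with $\SN{\tau_1}{e'}$, show $\SN{\tau_2}{(\tlabs{x}{\tau_1}{\gamma(e)}) \; e'}$. From $\SN{\tau_1}{e'}$ we get $e' \Downarrow v'$, and forward closure gives $\SN{\tau_1}{v'}$; hence $\gamma[x \mapsto v']$ is $(\Gamma, x : \tau_1)$-good, so the induction hypothesis gives $\SN{\tau_2}{\gamma(e)[v'/x]}$ (using $x \notin \dom(\Gamma)$). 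Since $(\tlabs{x}{\tau_1}{\gamma(e)}) \; e' \evaltos (\tlabs{x}{\tau_1}{\gamma(e)}) \; v' \evalto \gamma(e)[v'/x]$, repeated backward closure lifts this back up to the term we want. The main obstacle is thus structural rather than computational: recognising that the statement must be generalised to $\Gamma$-good substitutions of open terms, and arranging the forward/backward closure lemmas at the generality needed to make the \textsc{T-Abs} case go through.
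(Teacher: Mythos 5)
Your proposal is correct and follows essentially the same route as the paper: generalise to open terms under a substitution satisfying the context (your ``$\Gamma$-good'' is exactly the paper's $\gamma \models \Gamma$), prove forward/backward closure of $\SNPred$ under reduction, and run the induction on the typing derivation, with the \textsc{T-Abs} case handled by evaluating the argument to a value, extending the substitution, and applying backward closure. The only nuance worth flagging is that the well-typedness clause in the \textsc{T-Abs} case is not quite ``trivial'' --- it needs the substitution lemma ($\gamma \models \Gamma$ and $\Gamma \vdash e : \tau$ imply $\mtenv \vdash \gamma(e) : \tau$), which the paper states separately.
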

\begin{lemma}
  \label{lem:pb-b}
If $\SN{\tau}{e}$, then $e \Downarrow$
\end{lemma}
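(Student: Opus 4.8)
The plan is to observe that the desired conclusion $e \Downarrow$ is baked directly into the definition of the logical predicate: every clause of $\SN{\tau}{e}$ lists $e \Downarrow$ as one of its conjuncts. So the proof will be a trivial case analysis on the shape of the type $\tau$, with no induction required — we never need to descend into the component types $\tau_1,\tau_2$.

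Concretely, I would split on whether $\tau$ is $\bool$ or an arrow type. If $\tau = \bool$, then unfolding the definition gives $\SN{\bool}{e} \Leftrightarrow \mtenv \vdash e : \bool \wedge e \Downarrow$, and projecting out the second conjunct finishes the case. If $\tau = \tarrow{\tau_1}{\tau_2}$, then unfolding gives $\SN{\tarrow{\tau_1}{\tau_2}}{e} \Leftrightarrow \mtenv \vdash e : \tarrow{\tau_1}{\tau_2} \wedge e \Downarrow \wedge (\forall e'.\, \SN{\tau_1}{e'} \implies \SN{\tau_2}{e\;e'})$, and projecting out the middle conjunct finishes the case. That is the entire argument.

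There is no real obstacle here; this is the \emph{easy} half of the strong-normalization development, and the lemma is stated separately only to make explicit that the predicate was engineered so that property~2 of Section~\ref{subsec:categories-lr} (``the expression has the property we are interested in'') holds by construction. All the genuine difficulty lives in Lemma~\ref{lem:pb-a}, where one must generalize the statement to open terms under a closing substitution and establish the auxiliary facts that $\SN{\tau}{\cdot}$ is preserved by (anti-)reduction and by substituting strongly normalizing values for variables — precisely the points where the naive syntactic proof above got stuck.
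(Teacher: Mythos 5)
Your proof is correct and matches the paper's: the paper dispatches this lemma in one line as ``by induction on $\tau$ \dots straightforward because the strong normalization was baked into the predicate,'' which in substance is exactly your case analysis projecting out the $e \Downarrow$ conjunct. Your observation that the induction hypothesis is never actually needed (plain case analysis on the top-level constructor of $\tau$ suffices) is a fair, minor sharpening of the paper's phrasing rather than a different approach.
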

These two lemmas are common for proofs using a logical predicate (or relation).
We first prove that all well-typed terms are in the predicate, and then we prove that all terms in the predicate have the property we want to show (in this case strong normalization).

The proof of Lemma~\ref{lem:pb-b} is by induction on $\tau$.
This proof is straightforward because the strong normalization was baked into the predicate.
It is generally a straightforward proof as our rules of thumb guide us to bake the property of interest into the predicate.

To prove Lemma~\ref{lem:pb-a}, we we could try induction over $\mtenv \vdash e : \tau$, but the case we will fail to show the case for \textsc{T-Abs}.
Instead we prove a generalization of Lemma~\ref{lem:pb-a}:
\begin{theorem}[Lemma \ref{lem:pb-a} generalized]
  \label{thm:sn-ftlr}
  If $\Gamma \vdash e : \tau$ and $\gamma \models \Gamma$, then $\SN{\tau}{\gamma(e)}$
\end{theorem}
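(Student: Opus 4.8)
The plan is to prove the statement by induction on the derivation of $\Gamma \vdash e : \tau$. Here $\gamma \models \Gamma$ is read as: $\dom(\gamma) = \dom(\Gamma)$ and $\SN{\Gamma(x)}{\gamma(x)}$ for every $x \in \dom(\Gamma)$ (so in particular each $\gamma(x)$ is a closed, well-typed term), and $\gamma(e)$ denotes the simultaneous capture-avoiding substitution of $\gamma$ into $e$. Before starting the induction I would collect the routine syntactic facts about \STLC{} that the argument leans on: subject reduction (if $\mtenv \vdash e : \tau$ and $e \evalto e'$ then $\mtenv \vdash e' : \tau$), determinism of $\evalto$, canonical forms for $\bool$ (a closed value of type $\bool$ is $\true$ or $\false$), and the substitution lemma, which together with $\gamma \models \Gamma$ gives $\mtenv \vdash \gamma(e) : \tau$ whenever $\Gamma \vdash e : \tau$.

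The one genuinely load-bearing ingredient is closure of the predicate under a reduction step: if $\mtenv \vdash e : \tau$ and $e \evalto e'$, then $\SN{\tau}{e}$ holds iff $\SN{\tau}{e'}$ holds. I would prove this by induction on $\tau$. For $\tau = \bool$ it amounts to $e \Downarrow \iff e' \Downarrow$, where the forward direction uses determinism of $\evalto$ and the backward direction just prepends the step $e \evalto e'$; the well-typedness conjunct is handled by subject reduction. For $\tau = \tarrow{\tau_1}{\tau_2}$, the well-typedness and $\Downarrow$ conjuncts go exactly as in the base case, and the function conjunct follows because $e\,e'' \evalto e'\,e''$ (via the evaluation context $[]\,e''$), so $\SN{\tau_2}{e\,e''} \iff \SN{\tau_2}{e'\,e''}$ by the induction hypothesis at $\tau_2$. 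Iterating this lemma gives the same equivalence along $\evaltos$.

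Given this lemma, most cases of the main induction are short. For \textsc{T-Var}, $\SN{\tau}{\gamma(x)}$ is immediate from $\gamma \models \Gamma$. For \textsc{T-True} and \textsc{T-False}, $\gamma(\true) = \true$ (resp.\ $\false$) is a closed value of type $\bool$ that has already terminated. For \textsc{T-If} with guard $e$ and branches $e_1, e_2$, the induction hypothesis gives $\SN{\bool}{\gamma(e)}$, so $\gamma(e) \evaltos v$ for a closed value $v$ of type $\bool$, which by canonical forms is $\true$ or $\false$; pushing this through the evaluation context $\eif{[]}{\gamma(e_1)}{\gamma(e_2)}$ and taking one further step reaches $\gamma(e_1)$ or $\gamma(e_2)$, and backward reduction closure applied to $\SN{\tau}{\gamma(e_i)}$ (also from the induction hypothesis) finishes the case. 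For \textsc{T-App}, the induction hypothesis gives $\SN{\tarrow{\tau_2}{\tau}}{\gamma(e_1)}$ and $\SN{\tau_2}{\gamma(e_2)}$, and instantiating the function conjunct of the former at $\gamma(e_2)$ yields $\SN{\tau}{\gamma(e_1)\,\gamma(e_2)}$, which is $\SN{\tau}{\gamma(e_1\,e_2)}$ — exactly the step where the naive proof got stuck.

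The case I expect to be the real obstacle is \textsc{T-Abs}, i.e.\ $\Gamma \vdash \tlabs{x}{\tau_1}{e} : \tarrow{\tau_1}{\tau_2}$ from $\Gamma, x:\tau_1 \vdash e : \tau_2$. Taking $x$ fresh for $\gamma$ (legitimate since the range of $\gamma$ is closed), I must show $\SN{\tarrow{\tau_1}{\tau_2}}{\tlabs{x}{\tau_1}{\gamma(e)}}$; the typing and $\Downarrow$ conjuncts are immediate because this is a well-typed value, so the content is the function conjunct. Fix $e'$ with $\SN{\tau_1}{e'}$. By Lemma~\ref{lem:pb-b}, $e' \evaltos v'$ for some value $v'$, and forward reduction closure upgrades $\SN{\tau_1}{e'}$ to $\SN{\tau_1}{v'}$; hence $\gamma[x \mapsto v'] \models (\Gamma, x:\tau_1)$, and the induction hypothesis on the premise gives $\SN{\tau_2}{\gamma(e)[v'/x]}$. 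Since $(\tlabs{x}{\tau_1}{\gamma(e)})\,e' \evaltos (\tlabs{x}{\tau_1}{\gamma(e)})\,v' \evalto \gamma(e)[v'/x]$, backward reduction closure yields $\SN{\tau_2}{(\tlabs{x}{\tau_1}{\gamma(e)})\,e'}$, which is what the function conjunct demands. Lemma~\ref{lem:pb-a} then follows as the special case $\Gamma = \mtenv$ with $\gamma$ the empty substitution.
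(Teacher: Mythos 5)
Your proposal is correct and follows essentially the same route as the paper: induction on the typing derivation, with the substitution lemma and forward/backward reduction closure of $\SNPred$ as auxiliary lemmas, and the \textsc{T-Abs} case handled by evaluating the argument to a value $v'$, extending $\gamma$ with $x \mapsto v'$, and closing under backward reduction. The only difference is that you also sketch the proofs of the preservation lemma and the \textsc{T-If} case, which the paper leaves as exercises.
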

This theorem uses a substitution $\gamma$ to close off the expression $e$.
In order for $\gamma$ to close off $e$, it must map all the possible variables of $e$ to strongly normalizing terms.
When we prove this lemma, we get a stronger induction hypothesis than we have when we try to prove Lemma~\ref{lem:pb-a}.
In Lemma~\ref{lem:pb-a}, the induction hypothesis can only be used with a closed term; but in this lemma, we can use an open term provided we have a substitution that closes it.

To be more specific, a substitution $\gamma = \{x_1 \mapsto v_1, \dots , x_n \mapsto v_n\}$ works as follows:
\begin{definition}
  \label{def:substitution}
  \begin{align*}
     \emptyset (e) ={}& e \\
     \extsub{\gamma}{x}{v}(e) ={}& \gamma(\subst{e}{v}{x}) \qedhere
  \end{align*}
\end{definition}
\footnote{We do not formally define substitution ($\subst{e}{v}{x}$). We refer to \citet{Pierce:types-and-pl} for a formal definition.}and $\gamma \models \Gamma$ is read ``the substitution $\gamma$ satisfies the type environment $\Gamma$'', and it is defined as follows:
\[
  \begin{gathered}
    \gamma \models \Gamma\\
    \text{iff}\\
    \dom(\gamma) = \dom(\Gamma) \wedge \forall x \in \dom(\Gamma)\ldotp    \SN{\Gamma(x)}{\gamma(x)}
  \end{gathered}
\]
To prove Theoremq~\ref{thm:sn-ftlr} we need two further lemmas
\begin{lemma}[Substitution Lemma] 
\label{lem:sn-subst}
  If $\Gamma \vdash e : \tau$ and $\gamma \models \Gamma$, then $\mtenv \vdash \gamma (e) : \tau$
\end{lemma}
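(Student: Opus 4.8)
The plan is to reduce the simultaneous substitution $\gamma(e)$ to iterated single-variable substitution, proving the single-variable case by a routine induction on the typing derivation and then peeling $\gamma$ apart one binding at a time. The first observation is that the only fact about $\gamma \models \Gamma$ that is actually needed is the typing component: for every $x \in \dom(\Gamma)$, unfolding $\SN{\Gamma(x)}{\gamma(x)}$ gives in particular $\mtenv \vdash \gamma(x) : \Gamma(x)$, i.e.\ each value in the range of $\gamma$ is closed and well-typed at the type $\Gamma$ assigns to the corresponding variable.

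The key sublemma is the standard single-variable substitution lemma: if $\Gamma, x:\tau' \vdash e : \tau$ and $\mtenv \vdash v : \tau'$, then $\Gamma \vdash \subst{e}{v}{x} : \tau$. I would prove this by induction on the derivation of $\Gamma, x:\tau' \vdash e : \tau$. In the \textsc{T-Var} case we split on whether the variable is $x$ — then $\subst{x}{v}{x} = v$, and $\mtenv \vdash v : \tau'$ plus weakening gives $\Gamma \vdash v : \tau'$ — or some other $y$, where $\subst{y}{v}{x} = y$ and the typing is unchanged. The \textsc{T-True} and \textsc{T-False} cases are immediate, and \textsc{T-If} and \textsc{T-App} follow by applying the induction hypothesis to the immediate subterms and re-applying the same rule. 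The \textsc{T-Abs} case is the only delicate one: for $\tlabs{y}{\tau_1}{e_0}$ the distinctness convention on contexts lets us take $y \neq x$ and $y$ not free in $v$, so $\subst{(\tlabs{y}{\tau_1}{e_0})}{v}{x} = \tlabs{y}{\tau_1}{\subst{e_0}{v}{x}}$; the premise $\Gamma, x:\tau', y:\tau_1 \vdash e_0 : \tau_2$ is, after exchanging the last two context entries, $\Gamma, y:\tau_1, x:\tau' \vdash e_0 : \tau_2$, so the induction hypothesis gives $\Gamma, y:\tau_1 \vdash \subst{e_0}{v}{x} : \tau_2$ and \textsc{T-Abs} closes the case.

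Then I would prove Lemma~\ref{lem:sn-subst} by induction on the structure of $\gamma$, following Definition~\ref{def:substitution}. If $\gamma = \emptyset$, then $\dom(\Gamma) = \dom(\emptyset)$ is empty, so $\Gamma = \mtenv$ and $\gamma(e) = e$, and the hypothesis $\mtenv \vdash e : \tau$ is the conclusion. If $\gamma = \extsub{\gamma'}{x}{v}$, then $\dom(\gamma) = \dom(\Gamma)$ together with the distinctness convention forces $\Gamma$ to have the form $\Gamma', x:\tau'$ with $\gamma' \models \Gamma'$ and $\SN{\tau'}{v}$; the latter gives $\mtenv \vdash v : \tau'$. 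The single-variable substitution lemma applied to $\Gamma', x:\tau' \vdash e : \tau$ yields $\Gamma' \vdash \subst{e}{v}{x} : \tau$, and the induction hypothesis applied to this derivation together with $\gamma' \models \Gamma'$ gives $\mtenv \vdash \gamma'(\subst{e}{v}{x}) : \tau$, which by Definition~\ref{def:substitution} is exactly $\mtenv \vdash \gamma(e) : \tau$.

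I expect no genuine mathematical difficulty here; the only care points are the binder bookkeeping in the \textsc{T-Abs} case of the sublemma (freshness of the bound variable, and the implicit exchange and weakening on typing contexts) and keeping the order in which $\gamma$ and $\Gamma$ list their entries consistent so that ``peel off the last binding'' is legitimate. Since the excerpt deliberately defers the formal definition of substitution to \citet{Pierce:types-and-pl}, in the write-up I would state exchange, weakening, and the single-variable substitution lemma as standard facts rather than reprove each in full detail.
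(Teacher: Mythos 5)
Your proof is correct. The paper leaves this lemma as an exercise, so there is no in-text proof to compare against; your decomposition --- extracting $\mtenv \vdash \gamma(x) : \Gamma(x)$ from the typing component of $\gamma \models \Gamma$, proving the standard single-variable substitution lemma by induction on the typing derivation, and then peeling off one binding of $\gamma$ at a time via Definition~\ref{def:substitution} --- is exactly the intended argument, and the care points you flag (exchange/weakening in the \textsc{T-Abs} case, and closedness of the substituted values ruling out variable capture) are the right ones.
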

\begin{proof}
  Left as an exercise.
\end{proof}
\begin{lemma}[$\SNPred$ preserved by forward/backward reduction]
  \label{lem:sn-preserved-by-red}
  Suppose $\mtenv \vdash e : \tau$ and $e \evalto e'$
  \begin{enumerate}
  \item if $\SN{\tau}{e'}$, then $\SN{\tau}{e}$
  \item if $\SN{\tau}{e}$, then $\SN{\tau}{e'}$
  \end{enumerate}
\end{lemma}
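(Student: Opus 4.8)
The plan is to proceed by induction on the structure of the type $\tau$, establishing both parts~(1) and~(2) simultaneously. In every case the well-typedness conjunct of $\SN{\tau}{\cdot}$ is discharged uniformly: for part~(1) the judgement $\mtenv \vdash e : \tau$ is given outright, and for part~(2) the judgement $\mtenv \vdash e' : \tau$ follows from $\mtenv \vdash e : \tau$ and $e \evalto e'$ by type preservation (subject reduction), which I take as a standard property of \STLC{}. So the real content is transporting the termination conjunct $e \Downarrow$ and, in the arrow case, the quantified clause, across the single reduction step.

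For the base case $\tau = \bool$, recall $\SN{\bool}{e}$ is just $\mtenv \vdash e : \bool \wedge e \Downarrow$. Part~(1): if $e' \Downarrow v$, i.e. $e' \evaltos v$, then prepending the step $e \evalto e'$ gives $e \evaltos v$, hence $e \Downarrow$. Part~(2): if $e \evaltos v$, then since $e$ takes a step it is not a value, so this reduction sequence is non-empty, say $e \evalto e_0 \evaltos v$; by determinism of the operational semantics $e_0 = e'$, whence $e' \evaltos v$ and $e' \Downarrow$. Determinism is the one auxiliary fact needed here, and it follows from the usual unique-decomposition lemma for evaluation contexts.

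For the inductive case $\tau = \tarrow{\tau_1}{\tau_2}$, the termination conjunct $e \Downarrow$ (resp.\ $e' \Downarrow$) is handled exactly as in the base case, since the prepend/determinism arguments did not use the type. For the quantified conjunct, fix an arbitrary $e''$ with $\SN{\tau_1}{e''}$; note this gives $\mtenv \vdash e'' : \tau_1$, so $e\;e''$ and $e'\;e''$ are well-typed at $\tau_2$, and moreover $e\;e'' \evalto e'\;e''$ by the congruence rule with evaluation context $E = [\,]\;e''$. For part~(1): from $\SN{\tarrow{\tau_1}{\tau_2}}{e'}$ we obtain $\SN{\tau_2}{e'\;e''}$, and applying the induction hypothesis part~(1) at type $\tau_2$ to the step $e\;e'' \evalto e'\;e''$ yields $\SN{\tau_2}{e\;e''}$; as $e''$ was arbitrary, this is the clause required for $e$. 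Part~(2) is entirely symmetric, invoking the induction hypothesis part~(2) at $\tau_2$.

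The main obstacle is not conceptual but bookkeeping: one must be explicit that part~(2) leans on determinism of reduction (to cancel the first step in a terminating run) and on subject reduction (for the well-typedness conjunct of $e'$), neither of which is restated in the text though both are routine for \STLC{}. Everything else is a direct unfolding of the definitions of $\SNPred$ and $\Downarrow$ together with the single observation that application of a fixed argument is a reduction context.
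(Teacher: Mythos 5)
Your proof is correct. The paper leaves this lemma as an exercise, so there is no official proof to compare against, but your approach --- simultaneous induction on $\tau$, prepending the step for the backward direction, cancelling the first step via determinism for the forward direction, and pushing the quantified clause through the evaluation context $[\,]\;e''$ at the smaller type $\tau_2$ --- is exactly the intended argument, and you correctly flag the two auxiliary facts (determinism and subject reduction) that the text does not restate.
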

\begin{proof}
 Left as an exercise.
\end{proof}
\begin{proof}[Proof. (\ref{lem:pb-a} Generalized)] Proof by induction on $\Gamma \vdash e : \tau$.
\case{\textsc{T-True}} \\
Assume: 
\begin{itemize}
  \setlength\itemsep{0em}
  \item $\Gamma \vdash \true : \bool$
  \item $\gamma \models \Gamma$
\end{itemize}
We need to show:
\[
  \SN{\bool}{\gamma(\true)}
\]
There is no variable, so the substitution does nothing, and we just need to show $\SN{\bool}{\true}$ which is true as $\true \Downarrow \true$.
\case{\textsc{T-False}} similar to the \true{} case.
\case{\textsc{T-Var}}\\
Assume: 
\begin{itemize}
  \setlength\itemsep{0em}
  \item $\Gamma \vdash x : \tau$
  \item $\gamma \models \Gamma$
\end{itemize}
We need to show:
\[
  \SN{\tau}{\gamma(x)}
\]
This case follows from the definition of $\gamma \models \Gamma$.
We know that $x$ is well-typed, so it is in the domain of $\Gamma$.
From the definition of $\gamma \models \Gamma$, we get $\SN{\Gamma(x)}{\gamma(x)}$.
From well-typedness of $x$, we have $\Gamma(x) = \tau$ which then gives us what we needed to show.
\case{\textsc{T-If}} left as an exercise.
\case{\textsc{T-App}}\\
Assume:
\begin{itemize}
  \setlength\itemsep{0em}
  \item $\Gamma \vdash e_1 \; e_2 : \tau$
  \item $\gamma \models \Gamma$
\end{itemize}
We need to show:
\[
  \SN{\tau}{\gamma(e_1 \; e_2)}
\]
which amounts to $\SN{\tau}{\gamma(e_1) \; \gamma(e_2)}$.
By the induction hypothesis we have
\begin{align}
  &\SN{\tarrow{\tau_2}{\tau}}{\gamma(e_1)} \\
  &\SN{\tau_2}{\gamma(e_2)}
\end{align}
By the 3rd property of (1), $\forall e'\ldotp \SN{\tau_2}{e'} \implies \SN{\tau}{\gamma(e_1) \; e'}$, instantiated with (2), we get $\SN{\tau}{\gamma(e_1) \; \gamma(e_2)}$ which is the result we need.

Note this was the case we got stuck in when we tried to do the direct proof.
With the logical predicate, it is easily proven because we made sure to bake information about $e_1 \; e_2$ into $\SNPred_{\tarrow{\tau_2}{\tau}}$ when we follow the rule of thumb: ``\emph{The property of interest is preserved by eliminating forms}''. 
\case{\textsc{T-Abs}} \\
Assume: 
\begin{itemize}
  \setlength\itemsep{0em}
  \item $\Gamma \vdash \tlabs{x}{\tau_1}{e} : \tarrow{\tau_1}{\tau_2}$
  \item $\gamma \models \Gamma$
\end{itemize}
We need to show:
\[
  \SN{\tarrow{\tau_1}{\tau_2}}{\gamma(\tlabs{x}{\tau_1}{e})}
\]
which amounts to $\SN{\tarrow{\tau_1}{\tau_2}}{\tlabs{x}{\tau_1}{\gamma(e)}}$. Our induction hypothesis in this case reads:
\[
  \Gamma,x:\tau_1 \vdash e : \tau_2 \wedge \gamma' \models \Gamma, x : \tau_1 \quad \implies \quad \SN{\tau_2}{\gamma'(e)}
\]
It suffices to show the following three things:
\begin{enumerate}
\item \label{item:sn-abs-wt} $\mtenv \vdash \tlabs{x}{\tau_1}{\gamma(e)} : \tarrow{\tau_1}{\tau_2}$
\item \label{item:sn-abs-terminate} $\tlabs{x}{\tau_1}{\gamma(e)} \Downarrow$
\item \label{item:sn-abs-elim-pres} $\forall e'\ldotp \SN{\tau_1}{e'} \implies \SN{\tau_2}{(\tlabs{x}{\tau_1}{\gamma(e)}) \; e'}$
\end{enumerate}
If we use the substitution lemma (Lemma~\ref{lem:sn-subst}) and push the $\gamma$ in under the $\lambda$-abstraction, then we get \ref{item:sn-abs-wt}.
The lambda-abstraction is a value, so by definition \ref{item:sn-abs-terminate} is true.

It only remains to show \ref{item:sn-abs-elim-pres}.
To do this, we want to somehow apply the induction hypothesis for which we need a $\gamma'$ such that $\gamma' \models \Gamma, x:\tau_1$.
We already have $\gamma$ and $\gamma \models \Gamma$, so our $\gamma'$ should probably have the form $\gamma' = \gamma[x \mapsto v_?]$ for some $v_?$ of type $\tau_1$.
Let us move on and see if any good candidates for $v_?$ present themselves.

Let $e'$ be given and assume $\SN{\tau_1}{e'}$.
We then need to show $\SN{\tau_2}{(\tlabs{x}{\tau_1}{\gamma(e)}) \; e'}$.
From $\SN{\tau_1}{e'}$, it follows that $e' \Downarrow v'$ for some $v'$.
$v'$ is a good candidate for $v_?$ so let $v_?
= v'$.
From the forward part of the preservation lemma (Lemma~\ref{lem:sn-preserved-by-red}), we can further conclude $\SN{\tau_1}{v'}$.
We use this to conclude $\gamma[x\mapsto v'] \models \Gamma, x:\tau_1$ which we use with the assumption $\Gamma,x:\tau_1 \vdash e : \tau_2$ to instantiate the induction hypothesis and get $\SN{\tau_2}{\gamma[x\mapsto v'](e)}$.

Now consider the following evaluation:
\begin{align*}
  (\tlabs{x}{\tau_1}{\gamma(e)}) \; e' & \evaltos (\tlabs{x}{\tau_1}{\gamma(e)}) \; v' \\
                                       & \evalto \gamma(e)[v'/x] \equiv 
                                                   \gamma[x \mapsto v'](e)
\end{align*}
We already concluded that $e' \evaltos v'$, which corresponds to the first series of steps.
We can then do a $\beta$-reduction to take the next step, and finally we get something that is equivalent to $\gamma[x \mapsto v'](e)$.
That is we have the evaluation
\[
(\tlabs{x}{\tau_1}{\gamma(e)}) \; e' \evaltos \gamma[x \mapsto v'](e)
\]
From $\SN{\tau_1}{e'}$, we have $\mtenv \vdash e' : \tau_1$ and we already argued that $\mtenv \vdash \tlabs{x}{\tau_1}{\gamma(e)} : \tarrow{\tau_1}{\tau_2}$, so from the application typing rule we get $\mtenv \vdash (\tlabs{x}{\tau_1}{\gamma(e)}) \; e' : \tau_2$.
We can use this with the above evaluation and the forward part of the preservation lemma (Lemma~\ref{lem:sn-preserved-by-red}) to argue that every intermediate expressions in the steps down to $\gamma[x \mapsto v'](e)$ are closed and well typed.

If we use $\SN{\tau_2}{\gamma[x\mapsto v'](e)}$ with $(\tlabs{x}{\tau_1}{\gamma(e)}) \; e' \evaltos \gamma[x \mapsto v'](e)$ and the fact that every intermediate step in the evaluation is closed and well typed, then we can use the backward reduction part of the $\SNPred$ preservation lemma to get $\SN{\tau_2}{(\tlabs{x}{\tau_1}{\gamma(e)}) \; e'}$ which is the result we wanted.
\end{proof}
\subsection{Exercises}\label{sec:SN:exercises}
\begin{enumerate}
\item Prove $\SNPred$ preserved by forward/backward reduction (Lemma~\ref{lem:sn-preserved-by-red}).
\item Prove the substitution lemma (Lemma~\ref{lem:sn-subst}).
\item Go through the cases of the proof for Theorem~\ref{thm:sn-ftlr} by yourself.
\item Prove the \textsc{T-If} case of Theorem~\ref{thm:sn-ftlr}.
\item Extend the language with pairs and adjust the proofs. \label{ex:extend-STLC-pairs}

  Specifically, how do you apply the rules of thumb for the case of pairs?
  Do you need to add anything for the third clause (eliminating forms preservation), or does it work without doing anything for it like it did for case of booleans?
\end{enumerate}

recu\section{Type Safety for \STLC{}}
\label{sec:stlc-type-safety}
In this section, we prove type safety for simply typed lambda calculus using a logical predicate.

First we need to consider what type safety is.
The classical mantra for type safety is: ``Well-typed programs do not \emph{go wrong}.''
It depends on the language and type system what \emph{go wrong} actually means, but in our case a program has \emph{gone wrong} when it is stuck\footnote{In the case of language-based security and information flow control, the notion of \emph{go wrong} would be that there is an undesired flow of information.} (an expression is stuck if it is irreducible but not a value).
\subsection{Type safety - the classical treatment}
Type safety for simply typed lambda calculus is stated as follows:
\begin{theorem}[Type Safety for STLC]
  If $\mtenv \vdash e : \tau$ and $e \evaltos e'$, then $\val(e')$ or $\exists e''\ldotp e' \evalto e''$.
\end{theorem}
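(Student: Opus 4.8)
The plan is to reuse the recipe from the strong-normalization proof, but with a predicate tailored to stuckness instead of termination. First I would recast the goal in terms of a single predicate $\safe(e)$, which holds when every $e'$ with $e \evaltos e'$ is either a value or can take a further step; the theorem then reads $\mtenv \vdash e : \tau \implies \safe(e)$.

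Next I would define the logical predicate as the usual value/expression pair indexed by types, and --- heeding the footnote's warning about type safety --- I would require members to be merely \emph{closed}, not well-typed. Concretely, $\vpred{\bool} = \{\true, \false\}$ and $\vpred{\tarrow{\tau_1}{\tau_2}} = \{\, \tlabs{x}{\tau_1}{e} \mid e \text{ closed},\ \forall v \in \vpred{\tau_1}.\ \subst{e}{v}{x} \in \epred{\tau_2} \,\}$, together with $\epred{\tau} = \{\, e \mid e \text{ closed},\ \forall e'.\ (e \evaltos e' \wedge \irred(e')) \implies e' \in \vpred{\tau} \,\}$. The definition is well-founded by induction on $\tau$, and it bakes in safety: if $e \in \epred{\tau}$ and $e \evaltos e'$, then $e'$ either steps or is irreducible and hence lies in $\vpred{\tau}$, i.e.\ is a value --- so $e \in \epred{\tau} \implies \safe(e)$ is immediate. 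I would also record two auxiliary facts: every $v \in \vpred{\tau}$ lies in $\epred{\tau}$, and $\epred{\tau}$ is closed under anti-reduction ($e \evalto e'$ and $e' \in \epred{\tau}$ imply $e \in \epred{\tau}$).

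The heart of the argument is the fundamental lemma, stated for open terms: if $\Gamma \vdash e : \tau$ and $\gamma \in \gpred{\Gamma}$ --- where $\gpred{\Gamma}$ collects substitutions sending each $x$ to a value in $\vpred{\Gamma(x)}$, the analogue of $\gamma \models \Gamma$ --- then $\gamma(e) \in \epred{\tau}$. I would prove this by induction on the typing derivation. \textsc{T-True} and \textsc{T-False} are immediate from the first auxiliary fact; \textsc{T-Var} is the definition of $\gpred{\Gamma}$; \textsc{T-Abs} extends $\gamma$ with an argument value and invokes the induction hypothesis, just as in the normalization proof; \textsc{T-If} uses anti-reduction closure plus a case split on the guard's value. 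The \textsc{T-App} case is where the real work is: from $\gamma(e_1) \in \epred{\tarrow{\tau_2}{\tau}}$ and $\gamma(e_2) \in \epred{\tau_2}$ I must derive $\gamma(e_1)\,\gamma(e_2) \in \epred{\tau}$, which calls for a ``bind''-style argument threading reductions through the evaluation contexts $E\,e$ and $v\,E$: run $\gamma(e_1)$ to a value --- necessarily a $\lambda$-abstraction, by the shape of $\vpred{\tarrow{\tau_2}{\tau}}$ --- then run $\gamma(e_2)$ to a value, $\beta$-reduce, and finally apply the abstraction's defining clause together with anti-reduction closure.

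I expect the main obstacle to be precisely this evaluation-context bookkeeping in \textsc{T-App} (and to a lesser degree in \textsc{T-If}): one must know that any reduction sequence of a compound term factors as reductions of the active subterm under its context until that subterm becomes a value, which quietly relies on a decomposition/determinism property of the operational semantics. Once the fundamental lemma is established, the theorem follows by instantiating it at the empty typing context with the empty substitution: $\mtenv \vdash e : \tau$ yields $e \in \epred{\tau}$, hence $\safe(e)$, which is exactly the claim.
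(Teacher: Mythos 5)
Your proposal follows essentially the same route as the paper: define $\safe$, a closed-but-not-well-typed value/expression interpretation, the substitution interpretation $\gpred{\Gamma}$, prove the Fundamental Property by induction on typing, and conclude via the adequacy lemma $\emptyset \models e : \tau \implies \safe(e)$. The only difference is one of completeness rather than method --- you spell out the \textsc{T-App} ``bind'' argument and the anti-reduction closure of $\epred{\tau}$, which the paper leaves as exercises --- and your identification of the evaluation-context factoring as the key technical point there is accurate.
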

Traditionally, the type safety proof uses two lemmas: progress and preservation.
\begin{lemma}[Progress]
  If $\mtenv \vdash e : \tau$, then $\val(e)$ or $\exists e'\ldotp \evalto e'$.
\end{lemma}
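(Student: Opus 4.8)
The plan is to proceed by induction on the derivation of $\mtenv \vdash e : \tau$. The only tool we need beyond the induction hypothesis is a \emph{canonical forms} lemma characterising closed values at each type: if $\mtenv \vdash v : \bool$ then $v$ is $\true$ or $\false$, and if $\mtenv \vdash v : \tarrow{\tau_1}{\tau_2}$ then $v = \tlabs{x}{\tau_1}{e'}$ for some $e'$. This follows by inspecting which typing rules can derive a judgement whose subject is a value, and it is the small auxiliary fact I would establish first.

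For the induction itself, the cases \textsc{T-True}, \textsc{T-False} and \textsc{T-Abs} are immediate, since $\true$, $\false$ and $\tlabs{x}{\tau_1}{e}$ are already values. The case \textsc{T-Var} cannot arise: it would require $x \in \dom(\mtenv)$, but the empty context binds no variables, so the hypothesis is vacuous. This leaves the two elimination forms, where the evaluation contexts do the work.

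For \textsc{T-If}, write $e = \eif{e_0}{e_1}{e_2}$ with $\mtenv \vdash e_0 : \bool$. By the induction hypothesis $e_0$ is a value or takes a step. If $e_0 \evalto e_0'$, then $e \evalto \eif{e_0'}{e_1}{e_2}$ using the context $E = \eif{[]}{e_1}{e_2}$. If $e_0$ is a value, canonical forms gives $e_0 \in \{\true, \false\}$ and $e$ steps by one of the two conditional reduction rules. For \textsc{T-App}, write $e = e_1 \; e_2$ with $\mtenv \vdash e_1 : \tarrow{\tau_2}{\tau}$ and $\mtenv \vdash e_2 : \tau_2$. Apply the induction hypothesis to $e_1$: if $e_1 \evalto e_1'$, then $e \evalto e_1' \; e_2$ via $E = [] \; e_2$. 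If $e_1$ is a value, apply the induction hypothesis to $e_2$: if $e_2 \evalto e_2'$, then $e \evalto e_1 \; e_2'$ via $E = e_1 \; []$, which is a legal evaluation context precisely because $e_1$ is a value. If both are values, canonical forms forces $e_1 = \tlabs{x}{\tau_2}{e'}$, and $e = (\tlabs{x}{\tau_2}{e'}) \; e_2 \evalto \subst{e'}{e_2}{x}$ by the $\beta$-rule.

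I do not expect a genuine obstacle: the argument is entirely routine once canonical forms is in hand. The only points that need care are (i) discharging \textsc{T-Var} by appeal to the emptiness of the context, and (ii) checking in the \textsc{T-App} case that the context $e_1 \; []$ is only invoked after $e_1$ has been shown to be a value, so that the congruence step genuinely matches the grammar of evaluation contexts.
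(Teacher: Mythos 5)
Your proof is correct and follows exactly the route the paper indicates: the paper does not actually spell out a proof of Progress, noting only that it ``is normally proved by induction on the typing derivation'' and deferring to Pierce, and your induction with a canonical forms lemma is precisely that standard argument. The two points you flag---discharging \textsc{T-Var} via the empty context and only using the context $e_1\;[]$ once $e_1$ is known to be a value---are indeed the only places requiring care, and you handle both properly.
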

Progress is normally proved by induction on the typing derivation.
\begin{lemma}[Preservation]
  If $\mtenv \vdash e : \tau$ and $e \evalto e'$, then $\mtenv \vdash e' : \tau$.
\end{lemma}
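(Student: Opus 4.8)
The plan is to prove preservation by induction on the typing derivation $\mtenv \vdash e : \tau$, performing a case analysis on the reduction $e \evalto e'$ inside each case. Beyond routine inversions of typing rules, the proof needs only two auxiliary facts: a substitution lemma and a lemma about typing under an evaluation context.

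First I would invoke (a single-substitution instance of) the substitution lemma, Lemma~\ref{lem:sn-subst}: if $\mtenv, x : \tau_1 \vdash e : \tau_2$ and $\mtenv \vdash v : \tau_1$, then $\mtenv \vdash \subst{e}{v}{x} : \tau_2$. Strictly one also wants a weakening lemma so that a value typed in the empty context may be used under a larger context; both are straightforward inductions on typing derivations, the only interesting case being \textsc{T-Var}. Second I would isolate a \emph{context replacement lemma}: if $\mtenv \vdash E[e] : \tau$ then $\mtenv \vdash e : \tau'$ for some $\tau'$, and conversely for every $e'$ with $\mtenv \vdash e' : \tau'$ we have $\mtenv \vdash E[e'] : \tau$. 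This is an easy induction on the structure of $E$: the hole case is immediate, and each of the frames $\eif{E'}{e_1}{e_2}$, $E'\;e_2$, $v\;E'$ is handled by inverting the matching typing rule, applying the induction hypothesis for $E'$, and reassembling with the same rule.

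With these in place, the induction on $\mtenv \vdash e : \tau$ proceeds as follows.
\begin{itemize}
\item \textsc{T-True}, \textsc{T-False}, \textsc{T-Abs}, \textsc{T-Var}: $e$ is a value (or a variable), so no reduction step exists and the case is vacuous.
\item \textsc{T-If}, with $e = \eif{e_0}{e_1}{e_2}$: if $e_0$ is $\true$ or $\false$ the step is $e \evalto e_1$ or $e \evalto e_2$, and $\mtenv \vdash e_i : \tau$ comes directly by inversion of \textsc{T-If}; otherwise the step is $e_0 \evalto e_0'$ under the context $\eif{[]}{e_1}{e_2}$, and we conclude by the replacement lemma together with the induction hypothesis applied to $e_0$.
\item \textsc{T-App}, with $e = e_1\;e_2$: if the step is a $\beta$-reduction $(\tlabs{x}{\tau_2}{e_0})\;v \evalto \subst{e_0}{v}{x}$, invert \textsc{T-Abs} on the typing of the abstraction to get $\mtenv, x : \tau_2 \vdash e_0 : \tau$, then apply the substitution lemma to obtain $\mtenv \vdash \subst{e_0}{v}{x} : \tau$; otherwise the step occurs in $e_1$ (context $[]\;e_2$) or in $e_2$ (context $v\;[]$), and again the replacement lemma plus the induction hypothesis finishes it.
\end{itemize}

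The main obstacle is the $\beta$-reduction case: it is the only place where the term changes shape substantively, and it rests entirely on the substitution lemma, whose proof must treat free variables carefully (and, pedantically, needs weakening). All the congruence cases collapse to a single pattern once the context replacement lemma is extracted, so the remaining work is purely mechanical.
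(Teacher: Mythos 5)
The paper never proves this lemma---it states it only to contrast the classical progress-and-preservation route with the logical-relations proof it actually pursues, and explicitly defers to Pierce---so there is no proof of record to compare yours against. Your argument is the standard one and is correct. Two remarks. First, the paper's one-line comment is that preservation ``is normally proved by induction on the evaluation''; you instead induct on the typing derivation with an inner case analysis on the step. For this evaluation-context-style semantics the difference is immaterial: once you have extracted the context replacement lemma, all the real work is in showing that the three primitive reductions preserve types (inversion for \textsc{T-If}, substitution for $\beta$), and either induction principle dispatches the congruence cases. Second, the lemma you cite as Lemma~\ref{lem:sn-subst} is not the substitution lemma you need: that lemma belongs to the strong-normalization development and requires $\gamma \models \Gamma$, i.e.\ that the substituends lie in $\SNPred$, not merely that they are well-typed, so it cannot be instantiated with an arbitrary well-typed value. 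What you need is the ordinary typing substitution lemma ($\Gamma, x : \tau_1 \vdash e : \tau_2$ and $\Gamma \vdash v : \tau_1$ imply $\Gamma \vdash \subst{e}{v}{x} : \tau_2$), which you state correctly and which is a routine induction on typing (plus weakening), as you acknowledge. With that citation repaired, the proof is complete.
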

Preservation is normally proved by induction on the evaluation.
Preservation is also known as \emph{subject reduction}.
Progress and preservation talk about one step, so to prove type safety we have to do induction on the evaluation.
Here we do not want to prove type safety the traditional way, but if you are unfamiliar with it and want to learn more, then we refer to Pierce's \emph{Types and Programming Languages}~\citep{Pierce:types-and-pl}.

We will use a logical predicate (as it is a unary property) to prove type safety.

\subsection{Type safety - using logical predicate}
We define the predicate in a slightly different way compared to Section~\ref{sec:stlc-strong-norm}.
We define it in two parts: a value interpretation and an expression interpretation.
The value interpretation is a function from types to the power set of closed values:
\[
  \vpred{-} : \tarrow{\Type}{\curly{P}(\ClosedVal)}
\]
The value interpretation is defined as:
\begin{align*}
  \vpred{\bool} & = \{ \true, \false \}\\ 
  \vpred{\tarrow{\tau_1}{\tau_2}} & = \{\tlabs{x}{\tau_1}{e} \mid \forall v \in \curly{V}\sem{\tau_1}\ldotp e [v/x] \in \curly{E}\sem{\tau_2}\}
\end{align*}
We define the expression interpretation as:
\[
  \epred{\tau} = \{e \mid \forall e'\ldotp e \evaltos e' \wedge \irred(e') \implies e' \in \curly{V}\sem{\tau} \}
\]
Notice that neither $\vpred{\tau}$ nor $\epred{\tau}$ requires well-typedness.
Normally, the logical predicate would require this as our guidelines suggest it.
However, as the goal is to prove type safety we do not want it as a part of the predicate.
In fact, if we did include a well-typedness requirement, then we would end up proving preservation for some of the proofs to go through.
We do, however, require the value interpretation to only contain closed values.

An expression is irreducible if it is unable to take any reduction steps according to the evaluation rules.
The predicate $\irred$ captures whether an expression is irreducible:
\[
  \begin{gathered}
    \irred(e)\\
    \text{iff}\\
    \nexists e'\ldotp e \evalto e'
  \end{gathered}
\]
The sets are defined on the structure of the types. $\vpred{\tarrow{\tau_1}{\tau_2}}$ contains $\epred{\tau_2}$, but $\epred{\tau_2}$ uses $\tau_2$ directly in $\vpred{\tau_2}$, so the definition is structurally well-founded. To prove type safety, we first define a new predicate, $\safe$:
\[
  \begin{gathered}
    \safe(e)\\
    \text{iff}\\
    \forall e' \ldotp e \evaltos e' \implies \val(e') \vee \exists e''\ldotp e' \evalto e''
  \end{gathered}
\] 
An expression $e$ is safe if it can take a number of steps and end up either as a value or as an expression that can take another step.

We are now ready to prove type safety.
Just like we did for strong normalization, we use two lemmas:
\begin{lemma}
  \label{lem:stlc-ts-weak-ftlr}
  If $\mtenv \vdash e : \tau$, then $e \in \epred{\tau}$
\end{lemma}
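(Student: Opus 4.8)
The plan is to follow the same strategy as for strong normalization: the statement as given is not provable directly by induction on $\mtenv \vdash e : \tau$, since the \textsc{T-Abs} case hands us a $\lambda$-body that is only open, so I would first generalize to a \emph{fundamental theorem}. Introduce the context interpretation $\gpred{\Gamma}$ by declaring $\gamma \in \gpred{\Gamma}$ iff $\dom(\gamma) = \dom(\Gamma)$ and $\gamma(x) \in \vpred{\Gamma(x)}$ for every $x \in \dom(\Gamma)$, and prove: if $\Gamma \vdash e : \tau$ and $\gamma \in \gpred{\Gamma}$, then $\gamma(e) \in \epred{\tau}$. Lemma~\ref{lem:stlc-ts-weak-ftlr} is then the special case $\Gamma = \mtenv$, $\gamma = \emptyset$.

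Before the induction I would establish three routine facts. First, \emph{anti-reduction for the expression interpretation}: if $e \evaltos e'$ and $e' \in \epred{\tau}$, then $e \in \epred{\tau}$; since \STLC{} reduction is deterministic, any irreducible term reachable from $e$ is already reachable from $e'$, so the defining condition transfers. Second, $\vpred{\tau} \subseteq \epred{\tau}$: a value is irreducible and its only reduct is itself. Third, compositionality of the closing substitution: $\gamma$ commutes with the term constructors, e.g.\ $\gamma(e_1\;e_2) = \gamma(e_1)\;\gamma(e_2)$ and $\gamma(\tlabs{x}{\tau_1}{e}) = \tlabs{x}{\tau_1}{\gamma(e)}$, and $\gamma(e)[v/x] = \gamma[x \mapsto v](e)$ whenever $x \notin \dom(\gamma)$ (Definition~\ref{def:substitution}). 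Note that no progress or preservation lemma is needed here — that is precisely why well-typedness was left out of $\vpred{-}$ and $\epred{-}$.

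The induction is on the derivation of $\Gamma \vdash e : \tau$. For \textsc{T-True} and \textsc{T-False}, $\true, \false \in \vpred{\bool} \subseteq \epred{\bool}$. For \textsc{T-Var}, membership is immediate from $\gamma \in \gpred{\Gamma}$ and $\Gamma(x) = \tau$. For \textsc{T-If}, the IH puts the scrutinee in $\epred{\bool}$; an irreducible reduct of $\gamma(\eif{e}{e_1}{e_2})$ is obtained by first reducing the scrutinee to an irreducible form, which must be $\true$ or $\false$ because $\vpred{\bool} = \{\true,\false\}$, whereupon the appropriate branch is selected and the IH for that branch together with anti-reduction closes the case. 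For \textsc{T-Abs}, $\gamma(\tlabs{x}{\tau_1}{e}) = \tlabs{x}{\tau_1}{\gamma(e)}$ is a value, so by the second fact it suffices to show it lies in $\vpred{\tarrow{\tau_1}{\tau_2}}$; given any $v \in \vpred{\tau_1}$ we have $\gamma[x\mapsto v] \in \gpred{\Gamma, x:\tau_1}$, so the IH gives $\gamma[x\mapsto v](e) = \gamma(e)[v/x] \in \epred{\tau_2}$, which is exactly the membership requirement.

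The crux is \textsc{T-App}. By IH, $\gamma(e_1) \in \epred{\tarrow{\tau_2}{\tau}}$ and $\gamma(e_2) \in \epred{\tau_2}$, and we must show $\gamma(e_1)\;\gamma(e_2) \in \epred{\tau}$. Suppose $\gamma(e_1)\;\gamma(e_2) \evaltos e'$ with $e'$ irreducible. The only reductions possible from $\gamma(e_1)\;\gamma(e_2)$ are, first, those of $\gamma(e_1)$ in the context $[]\;\gamma(e_2)$; either this subreduction never terminates, in which case $\gamma(e_1)\;\gamma(e_2)$ also diverges and there is no such $e'$ (the obligation is vacuous), or $\gamma(e_1) \evaltos v_1$ with $v_1$ irreducible, so $v_1 \in \vpred{\tarrow{\tau_2}{\tau}}$ and hence $v_1 = \tlabs{x}{\tau_2}{e_0}$. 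Continuing, the reductions of $\gamma(e_2)$ fire in the context $v_1\;[]$; again either $\gamma(e_2)$ diverges (vacuous) or $\gamma(e_2) \evaltos v_2 \in \vpred{\tau_2}$, after which $v_1\;v_2 \evalto e_0[v_2/x]$ and the defining property of $\vpred{\tarrow{\tau_2}{\tau}}$ gives $e_0[v_2/x] \in \epred{\tau}$. Since $\gamma(e_1)\;\gamma(e_2) \evaltos e_0[v_2/x] \evaltos e'$ with $e'$ irreducible, $e_0[v_2/x] \in \epred{\tau}$ yields $e' \in \vpred{\tau}$, as needed. I expect this to be the main obstacle: the delicate point is that $\epred{\tau}$ carries no termination guarantee, so one must explicitly dispatch the possibly-nonterminating subreductions as vacuous cases, and one must use the shape of the members of $\vpred{\bool}$ and $\vpred{\tarrow{\cdot}{\cdot}}$ to rule out a genuinely stuck configuration arising as the irreducible reduct.
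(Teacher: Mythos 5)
Your proposal is correct and takes essentially the same route as the paper: it generalizes the lemma to the Fundamental Property $\Gamma \models e : \tau$ via the environment interpretation $\gpred{\Gamma}$ and a closing substitution, proves it by induction on the typing derivation using the substitution-commutation lemma, and recovers the statement as the instance $\Gamma = \mtenv$, $\gamma = \emptyset$. Your \textsc{T-App} and \textsc{T-If} cases (which the paper leaves as exercises) are handled correctly, including the dispatch of divergent subreductions as vacuous and the use of $\vpred{\cdot}$ membership in place of a preservation lemma to rule out stuck reducts.
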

\begin{lemma}
  \label{lem:stlc-ts-wrong-compat}
  If $e \in \epred{\tau}$, then $\safe(e)$
\end{lemma}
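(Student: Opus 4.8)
The plan is to unfold both definitions and do a one-step case analysis. Fix an expression $e$ with $e \in \epred{\tau}$, and let $e'$ be arbitrary with $e \evaltos e'$; the goal is to show $\val(e') \vee \exists e''\ldotp e' \evalto e''$. I would split on whether $e'$ can take a step. If it can, the right disjunct holds and we are done immediately. If it cannot, then $\irred(e')$ holds by definition of $\irred$, and since $e \evaltos e'$ and $\irred(e')$, the membership $e \in \epred{\tau}$ gives us $e' \in \vpred{\tau}$.

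The only remaining gap is to conclude $\val(e')$ from $e' \in \vpred{\tau}$. This is exactly the small auxiliary fact that $\vpred{\tau} \subseteq \val$ for every $\tau$, which I would record as a one-line sub-lemma proved by case analysis on $\tau$: for $\tau = \bool$ we have $\vpred{\bool} = \{\true,\false\}$, both of which are values; for $\tau = \tarrow{\tau_1}{\tau_2}$ every element of $\vpred{\tarrow{\tau_1}{\tau_2}}$ is syntactically a $\lambda$-abstraction $\tlabs{x}{\tau_1}{e}$, hence a value. (Indeed this is built into the typing of $\vpred{-}$, whose codomain is $\curly{P}(\ClosedVal)$, so strictly nothing needs to be checked, but it is worth stating explicitly.) Applying this to $e' \in \vpred{\tau}$ yields $\val(e')$, the left disjunct.

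There is essentially no hard step here: the lemma is the easy direction of the pair, mirroring Lemma~\ref{lem:pb-b} in the strong-normalization development, because the safety property was deliberately arranged so that the value interpretation already consists only of irreducible values. If I were looking for the one place to be careful, it would be making sure the quantifier in $\safe$ ranges over \emph{all} reducts $e'$ of $e$ (not just irreducible ones) and that the argument above is uniform in $e'$ — which it is, since $e \evaltos e'$ is precisely the hypothesis needed to invoke $e \in \epred{\tau}$ once we know $\irred(e')$.

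Note that this lemma together with Lemma~\ref{lem:stlc-ts-weak-ftlr} immediately yields type safety for \STLC{}: given $\mtenv \vdash e : \tau$, Lemma~\ref{lem:stlc-ts-weak-ftlr} puts $e \in \epred{\tau}$, and then Lemma~\ref{lem:stlc-ts-wrong-compat} gives $\safe(e)$, which unfolds to the statement of the type safety theorem.
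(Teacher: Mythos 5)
Your proof is correct and follows essentially the same route as the paper: the paper proves the restated version (Lemma~\ref{lem:stlc-ts-compat}, starting from $\emptyset \models e : \tau$) by exactly this case split on $\irred(e')$, using $\neg\irred(e')$ to discharge the right disjunct and the definition of $\epred{\tau}$ to land in $\vpred{\tau}$ otherwise. Your explicit sub-lemma that $\vpred{\tau} \subseteq \val$ is a slightly more careful rendering of the paper's final step, which simply asserts $\val(e')$ from membership in the value interpretation.
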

Rather than proving Lemma~\ref{lem:stlc-ts-weak-ftlr} directly, we prove a more general theorem and get Lemma~\ref{lem:stlc-ts-weak-ftlr} as a consequence.
We are not yet in a position to state the generalization.
First, we need to define the interpretation of environments:
\begin{align*}
  \gpred{\mtenv} & = \{ \emptyset \} \\
  \gpred{\Gamma,x:\tau} & = \{\gamma[x \mapsto v] \mid
    \gamma \in \gpred{\Gamma} \wedge 
    v \in \vpred{\tau}\}
\end{align*}
Further, we need to define semantic type safety:
\[
  \begin{gathered}
    \Gamma \models e : \tau\\
    \text{iff}\\
    \forall \gamma \in \gpred{\Gamma} \ldotp \gamma(e) \in \epred{\tau}
  \end{gathered}
\]
This definition should look familiar because we use the same trick as we did for strong normalization: Instead of only considering closed terms, we consider all terms but require a substitution that closes it.

We can now define our generalized version of Lemma~\ref{lem:stlc-ts-weak-ftlr}:
\begin{theorem}[Fundamental Property]
  \label{thm:stlc-ts-ftlr}
  If $\Gamma \vdash e : \tau$, then $\Gamma \models e : \tau$
\end{theorem}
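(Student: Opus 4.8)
The plan is to prove Theorem~\ref{thm:stlc-ts-ftlr} by induction on the derivation of $\Gamma \vdash e : \tau$, in each case fixing an arbitrary $\gamma \in \gpred{\Gamma}$ and showing $\gamma(e) \in \epred{\tau}$. Before starting the induction I would record three auxiliary facts about the interpretations, all of which rest on the fact that reduction in \STLC{} is deterministic and that values are irreducible. First, $\vpred{\tau} \subseteq \epred{\tau}$: a value has no reducts, so it is its own unique irreducible reduct. Second, an anti-reduction (backward closure) property: if $e \evalto e'$ and $e' \in \epred{\tau}$, then $e \in \epred{\tau}$ --- because any irreducible $e''$ with $e \evaltos e''$ must be reached through $e'$ (as $e$ itself reduces, so $e'' \neq e$), hence $e' \evaltos e''$ and membership in $\vpred{\tau}$ follows. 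Third, a ``bind'' lemma for evaluation contexts: if $e \in \epred{\tau'}$ and $E[v] \in \epred{\tau}$ for every $v \in \vpred{\tau'}$, then $E[e] \in \epred{\tau}$ --- because, by determinism, a terminating reduction of $E[e]$ must first drive $e$ through $E$-congruence steps to some irreducible $e'$, which by $e \in \epred{\tau'}$ is a value in $\vpred{\tau'}$, after which the reduction continues inside $E[e']$ and lands in $\vpred{\tau}$.

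With these in hand the cases are routine. For \textsc{T-True} and \textsc{T-False}, $\gamma$ acts trivially and $\true,\false \in \vpred{\bool} \subseteq \epred{\bool}$. For \textsc{T-Var}, $\gamma(x)$ is the value that $\gpred{\Gamma}$ supplies for $x$, which lies in $\vpred{\Gamma(x)} = \vpred{\tau}$, hence in $\epred{\tau}$. For \textsc{T-If}, the induction hypothesis gives $\gamma(e) \in \epred{\bool}$ and $\gamma(e_i) \in \epred{\tau}$; apply the bind lemma with $E = \eif{[]}{\gamma(e_1)}{\gamma(e_2)}$, noting that for $v \in \{\true,\false\} = \vpred{\bool}$ the term $\eif{v}{\gamma(e_1)}{\gamma(e_2)}$ steps to one of the $\gamma(e_i) \in \epred{\tau}$, so anti-reduction closes it. For \textsc{T-App}, the induction hypothesis gives $\gamma(e_1) \in \epred{\tarrow{\tau_2}{\tau}}$ and $\gamma(e_2) \in \epred{\tau_2}$; two applications of the bind lemma (with $E = [] \; \gamma(e_2)$ and then $E = v_1 \; []$) reduce the goal to showing $v_1 \; v_2 \in \epred{\tau}$ for $v_1 \in \vpred{\tarrow{\tau_2}{\tau}}$ and $v_2 \in \vpred{\tau_2}$; by definition $v_1 = \tlabs{x}{\tau_2}{e'}$ with $\subst{e'}{v_2}{x} \in \epred{\tau}$, and since $v_1 \; v_2 \evalto \subst{e'}{v_2}{x}$, anti-reduction gives the result.

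The interesting case is \textsc{T-Abs} with premise $\Gamma, x : \tau_1 \vdash e : \tau_2$. Here $\gamma(\tlabs{x}{\tau_1}{e}) = \tlabs{x}{\tau_1}{\gamma(e)}$, pushing $\gamma$ under the binder after the usual freshening so that $x \notin \dom(\gamma)$, and since this is a value it suffices, by $\vpred{-} \subseteq \epred{-}$, to show it lies in $\vpred{\tarrow{\tau_1}{\tau_2}}$, i.e.\ that $\subst{\gamma(e)}{v}{x} \in \epred{\tau_2}$ for every $v \in \vpred{\tau_1}$. Using Definition~\ref{def:substitution} and the freshness of $x$ we have $\subst{\gamma(e)}{v}{x} = \extsub{\gamma}{x}{v}(e)$, and $\extsub{\gamma}{x}{v} \in \gpred{\Gamma, x : \tau_1}$ because $\gamma \in \gpred{\Gamma}$ and $v \in \vpred{\tau_1}$; the induction hypothesis applied to the premise then yields exactly $\extsub{\gamma}{x}{v}(e) \in \epred{\tau_2}$. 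Finally, Lemma~\ref{lem:stlc-ts-weak-ftlr} follows by instantiating the theorem with $\Gamma = \mtenv$ and the empty substitution.

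The step I expect to require the most care is the bind lemma (and, relatedly, anti-reduction): it is where determinism of \STLC{} reduction is essential, and where one must argue that a terminating reduction of $E[e]$ genuinely factors as ``reduce $e$ to an irreducible form inside $E$, then continue'' --- in particular that the irreducible form reached must be a \emph{value}, which is precisely why $e \in \epred{\tau'}$ is used. Everything else is bookkeeping about substitutions commuting with binders and unfolding the definition of $\vpred{-}$. It is worth noting that the ``partial-correctness'' flavour of $\epred{\tau}$ --- it constrains only reductions that actually reach an irreducible term --- is what lets the whole argument go through without any appeal to normalization, which is the point of having dropped the well-typedness clause from the interpretation.
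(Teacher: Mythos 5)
Your proof is correct, and its skeleton --- induction on the typing derivation, closing off with a $\gamma \in \gpred{\Gamma}$, and in the \textsc{T-Abs} case reducing $\subst{\gamma(e)}{v}{x} \in \epred{\tau_2}$ to the induction hypothesis at $\extsub{\gamma}{x}{v} \in \gpred{\Gamma, x:\tau_1}$ via the substitution lemma --- is exactly the paper's. Where you genuinely diverge is in how you organize the elimination cases: the paper writes out only \textsc{T-Abs} (unfolding $\epred{\cdot}$ by hand, observing that a $\lambda$ is its own irreducible reduct) and leaves \textsc{T-App} and the rest as exercises, whereas you factor out three reusable facts --- $\vpred{\tau} \subseteq \epred{\tau}$, anti-reduction, and a bind lemma for evaluation contexts --- and then dispatch \textsc{T-If} and \textsc{T-App} mechanically. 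This is the standard ``monadic'' presentation of the expression interpretation, and it buys you a uniform treatment of every elimination form (and would scale to pairs and sums without new ideas); the cost is that the bind lemma is the one place where real care is needed, and you correctly identify its two load-bearing ingredients: determinism of the reduction relation (so a reduction of $E[e]$ must factor through congruence steps on $e$), and the fact that the irreducible form reached inside $E$ must be a \emph{value} --- which is supplied precisely by $e \in \epred{\tau'}$, since an irreducible non-value in the hole would leave $E[e]$ stuck rather than letting the reduction continue. One should also note, as you implicitly do, that if all the steps are congruence steps then the final $E[e_k]$ being irreducible forces $e_k$ itself to be irreducible, so the two subcases of the factoring are exhaustive. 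The \textsc{T-Abs} case and the derivation of Lemma~\ref{lem:stlc-ts-weak-ftlr} match the paper.
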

A theorem like this would typically be the first you prove after defining a logical relation.
In this case, the theorem says that syntactic type safety implies semantic type safety.

We also alter Lemma~\ref{lem:stlc-ts-wrong-compat} to fit with Theorem~\ref{thm:stlc-ts-ftlr}:
\begin{lemma}
  \label{lem:stlc-ts-compat}
  If $\emptyset \models e : \tau$, then $\safe(e)$
\end{lemma}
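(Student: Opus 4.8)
The plan is to peel off the outer layer of the definition of semantic type safety and thereby reduce the statement to Lemma~\ref{lem:stlc-ts-wrong-compat}. Concretely, unfolding $\emptyset \models e : \tau$ gives $\forall \gamma \in \gpred{\mtenv} \ldotp \gamma(e) \in \epred{\tau}$. Since $\gpred{\mtenv} = \{\emptyset\}$ and $\emptyset(e) = e$ by Definition~\ref{def:substitution}, the only instance of this universal statement is $e \in \epred{\tau}$. So it suffices to show that membership in the expression interpretation implies safety, which is precisely Lemma~\ref{lem:stlc-ts-wrong-compat}; invoking it closes the proof. For completeness I would also spell that argument out directly, since it is the only real content here.

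To show $e \in \epred{\tau} \implies \safe(e)$: let $e'$ be arbitrary with $e \evaltos e'$, and we must exhibit one of the two disjuncts of $\safe$. Case split on whether $e'$ can step. If $\exists e'' \ldotp e' \evalto e''$, the second disjunct holds and we are done. Otherwise $\irred(e')$, and then $e \evaltos e'$ together with $\irred(e')$ feeds directly into the definition of $\epred{\tau}$ applied to the hypothesis $e \in \epred{\tau}$, yielding $e' \in \vpred{\tau}$. It then remains only to observe that every element of the value interpretation is a value.

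That last observation is the one small auxiliary fact to record: $\vpred{\tau} \subseteq \val$ for every $\tau$. It follows by inspecting the definition — $\vpred{\bool} = \{\true,\false\}$, both of which are values, and $\vpred{\tarrow{\tau_1}{\tau_2}}$ contains only $\lambda$-abstractions of the form $\tlabs{x}{\tau_1}{e}$, which are values. Hence $\val(e')$, giving the first disjunct of $\safe(e)$, and the argument is complete.

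I do not expect a genuine obstacle here: the statement is essentially a bookkeeping corollary once $\gpred{\mtenv} = \{\emptyset\}$ is used and the trivial containment $\vpred{\tau} \subseteq \val$ is noted. The only points requiring a little care are that the quantifier in the definition of $\epred{\tau}$ ranges over all $e'$ reachable by $\evaltos$ (including $e' = e$ via the zero-step reduction), and that one does not accidentally appeal to well-typedness — which the predicates were deliberately set up not to require. The substantive work of the section lives in Theorem~\ref{thm:stlc-ts-ftlr}, not in this lemma.
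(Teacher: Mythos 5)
Your proof is correct and follows essentially the same route as the paper's: instantiate the hypothesis with the empty substitution to get $e \in \epred{\tau}$, case split on $\irred(e')$, and in the irreducible case use the definition of $\epred{\tau}$ to land in $\vpred{\tau}$ and conclude $\val(e')$. Your explicit remark that $\vpred{\tau} \subseteq \val$ by inspection of both clauses is a small detail the paper leaves implicit, but otherwise the arguments coincide.
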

\begin{proof}
  Suppose $e \evaltos e'$ for some $e'$, then we need to show $\val(e')$ or $\exists e''\ldotp e' \evalto e''$.
  We proceed by case on whether or not $\irred(e')$:
  \case{$\neg \irred(e')$} this case follows directly from the definition of $\irred$:
  $\irred(e')$ is defined as $\nexists e''\ldotp e' \evalto e''$ and as the assumption is $\neg \irred(e')$, we get $\exists e''\ldotp e' \evalto e''$.
  \case{$\irred(e')$} by assumption we have $\mtenv \models e : \tau$.
  As the typing context is empty, we choose the empty substitution and get $e \in \epred{\tau}$.
  We now use the definition of $e \in \epred{\tau}$ with the two assumptions $e \evaltos e'$ and $\irred(e')$ to conclude $e' \in \vpred{\tau}$.
  As $e'$ is in the value interpretation of $\tau$, we can conclude $\val(e')$.
\end{proof}
To prove the Fundamental Property (Theorem~\ref{thm:stlc-ts-ftlr}), we need a substitution lemma:
\begin{lemma}[Substitution]
Let $e$ be syntactically well-formed term, let $v$ be a closed value and let $\gamma$ be a substitution that maps term variables to closed values, and let $x$ be a variable not in the domain of $\gamma$, then
\[
\extsub{\gamma}{x}{v}(e) = \subst{\gamma(e)}{v}{x} \qedhere
\]
\end{lemma}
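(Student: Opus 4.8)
The plan is to prove the equality by induction on the number of bindings in $\gamma$ — equivalently, on the structure of $\gamma$ viewed as built up from $\emptyset$ by the extension operation $\extsub{\cdot}{\cdot}{\cdot}$ of Definition~\ref{def:substitution}. The one indispensable ingredient will be the classical fact that two independent single substitutions commute: if $x \neq y$ and $v_1, v_2$ are closed, then $\subst{\subst{e}{v_1}{x}}{v_2}{y} = \subst{\subst{e}{v_2}{y}}{v_1}{x}$. This is the usual substitution lemma for the $\lambda$-calculus (e.g.\ in \citet{Pierce:types-and-pl}); since both $v_1$ and $v_2$ are closed, there are no free-variable side conditions and no danger of capture, so its proof is a routine structural induction on $e$ whose only interesting cases are the variable subcases ($z=x$, $z=y$, and $z$ neither).

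\textbf{Base case.} $\gamma = \emptyset$. By Definition~\ref{def:substitution}, $\extsub{\emptyset}{x}{v}(e) = \emptyset(\subst{e}{v}{x}) = \subst{e}{v}{x} = \subst{\emptyset(e)}{v}{x}$, so both sides agree.

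\textbf{Inductive step.} $\gamma = \extsub{\gamma'}{y}{w}$ with $y \notin \dom(\gamma')$, $w$ a closed value, and (from the hypothesis $x \notin \dom(\gamma)$) both $x \neq y$ and $x \notin \dom(\gamma')$. Unfolding Definition~\ref{def:substitution} twice on the left-hand side gives $\extsub{\gamma}{x}{v}(e) = \gamma'(\subst{\subst{e}{v}{x}}{w}{y})$, and unfolding it once inside the right-hand side gives $\subst{\gamma(e)}{v}{x} = \subst{\gamma'(\subst{e}{w}{y})}{v}{x}$. Applying the induction hypothesis to $\gamma'$ with the term $\subst{e}{w}{y}$ (legitimate because $x \notin \dom(\gamma')$, $v$ is closed, and $\gamma'$ still maps variables to closed values) rewrites the right-hand side to $\gamma'(\subst{\subst{e}{w}{y}}{v}{x})$. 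It therefore suffices to show $\subst{\subst{e}{v}{x}}{w}{y} = \subst{\subst{e}{w}{y}}{v}{x}$, which is exactly the commutation fact above, instantiated with the closed values $v, w$ and the distinct variables $x, y$.

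The only real obstacle is the commutation lemma itself, and it is an ``obstacle'' only in the bookkeeping sense: one must either assume a formal definition of capture-avoiding substitution (which the note has deliberately deferred to \citet{Pierce:types-and-pl}) or re-derive it by structural induction on $e$. All three hypotheses of the statement are used precisely to make this painless: $v$ closed handles the case $e = x$ and prevents capture when $v$ is later substituted elsewhere, $\gamma$ mapping only to closed values handles the cases where a variable lies in the range of $\gamma$, and $x \notin \dom(\gamma)$ keeps the two substitutions from interfering. An essentially equivalent alternative is to induct directly on the structure of $e$; this trades the commutation lemma for a small auxiliary lemma that $\gamma$ commutes with the term constructors (e.g.\ $\gamma(e_1\, e_2) = \gamma(e_1)\,\gamma(e_2)$), itself proved by induction on $\gamma$, so the total effort is the same. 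Either way, this lemma is exactly what is needed to justify pushing $\gamma$ under a $\lambda$ and relating it to $\extsub{\gamma}{x}{v}$ in the \textsc{T-Abs} and \textsc{T-App} cases of the Fundamental Property (Theorem~\ref{thm:stlc-ts-ftlr}).
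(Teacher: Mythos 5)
Your proof is correct and follows essentially the same route as the paper: induction on the size of $\gamma$, a trivial base case, and an inductive step that reduces to commuting two independent single substitutions of closed values (the paper phrases this as ``swapping the two mappings'' $\gamma'[y \mapsto v'][x \mapsto v] = \gamma'[x \mapsto v][y \mapsto v']$, which unfolds to exactly the commutation fact you isolate). The only cosmetic difference is that you apply the induction hypothesis to the right-hand side while the paper rewrites the left-hand side; the content is identical.
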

\begin{proof}
  By induction on the size of $\gamma$.
  \case{$\gamma = \emptyset$} this case is immediate by definition of substitution. That is by definition we have $[x \mapsto v] e = \subst{e}{v}{x}$.
  \case{$\gamma = \gamma'[y \mapsto v']$, $x\neq y$} in this case our induction hypothesis is:
\[
  \gamma'[x \mapsto v] e = \subst{\gamma'(e)}{v}{x}
\]
We wish to show 
\[
  \gamma'[y \mapsto v'][x \mapsto v] e = \subst{\gamma'[y \mapsto v'](e)}{v}{x}
\]
\begin{align}
    \gamma'[y \mapsto v'][x \mapsto v] e & = \gamma'[x \mapsto v][y \mapsto v'] e \label{sub:step1}\\
                                         & = \gamma'[x \mapsto v] (\subst{e}{v'}{y}) \label{sub:step2}\\
                                         & = \subst{\gamma'(\subst{e}{v'}{y})}{v}{x}\label{sub:step3}\\
                                         & = \subst{\gamma'[y \mapsto v'](e)}{v}{x}\label{sub:step4}
\end{align}
In the first step (\ref{sub:step1}), we swap the two mappings.
It is safe to do so as both $v$ and $v'$ are closed values, so we know that no variable capturing will occur.
In the second step (\ref{sub:step2}), we just use the definition of substitution (Definition~\ref{def:substitution}).
In the third step (\ref{sub:step3}), we use the induction hypothesis\footnote{The induction hypothesis actually has a number of premises, as an exercise convince yourself that they are satisfied.}.
Finally in the last step (\ref{sub:step4}), we use the definition of substitution to get the $y$ binding out as an extension of $\gamma'$.
\end{proof}
\begin{proof}[Proof. (Fundamental Property, Theorem~\ref{thm:stlc-ts-ftlr})] Proof by induction on the typing judgement.
  \case{\textsc{T-Abs}} \\
assuming $\Gamma \vdash \tlabs{x}{\tau_1}{e} : \tarrow{\tau_1}{\tau_2}$, we need to show $\Gamma \models \tlabs{x}{\tau_1}{e} : \tarrow{\tau_1}{\tau_2}$. Suppose $\gamma \in \gpred{\Gamma}$ and show
\[
  \gamma(\tlabs{x}{\tau_1}{e}) \in \epred{\tarrow{\tau_1}{\tau_2}} \equiv
  (\tlabs{x}{\tau_1}{\gamma(e)}) \in \epred{\tarrow{\tau_1}{\tau_2}}
\]
Now suppose that $\tlabs{x}{\tau_1}{\gamma(e)} \evaltos e'$ and $\irred(e')$.
We then need to show $e' \in \vpred{\tarrow{\tau_1}{\tau_2}}$.
$\tlabs{x}{\tau_1}{\gamma(e)}$ is irreducible because it is a value, and we can conclude it takes no steps.
In other words $e' = \tlabs{x}{\tau_1}{\gamma(e)}$.
This means we need to show $\tlabs{x}{\tau_1}{\gamma(e)} \in \vpred{\tarrow{\tau_1}{\tau_2}}$.
Now suppose $v \in \vpred{\tau_1}$, then we need to show $\gamma(e)[v/x] \in \epred{\tau_2}$.

Keep the above proof goal in mind and consider the induction hypothesis:
\[
  \Gamma, x: \tau_1 \models e : \tau_2
\]
Instantiate this with $\gamma[x \mapsto v]$.
We have $\gamma[x \mapsto v] \in \gpred{\Gamma, x : \tau_1}$ because of assumptions $\gamma \in \gpred{\Gamma}$ and $v \in \vpred{\tau_2}$.
The instantiation gives us $\gamma[x \mapsto v] (e) \in \epred{\tau_2} \equiv \gamma(e)[v/x] \in \epred{\tau_2}$.
The equivalence is justified by the substitution lemma we proved.
This is exactly the proof goal we kept in mind.
\case{\textsc{T-App}} Show this case as an exercise.\\
The remaining cases are straightforward.
\end{proof}
Now consider what happens if we add pairs to the language (exercise \ref{ex:extend-STLC-pairs} in exercise section~\ref{sec:SN:exercises}).
We need to add a clause to the value interpretation:
\[
  \vpred{\tau_1 \times \tau_2} = \{\tuple{v_1,v_2} \mid v_1 \in \vpred{\tau_1} \wedge v_2 \in \vpred{\tau_2}\}
\]
There is nothing surprising in this addition to the value relation, and it should not be a challenge to show the pair case of the proofs.

If we extend our language with sum types. 
\[
e::= \dots \mid \inl \; v \mid \inr \; v \mid \scase \; e \caseof \; \inl \; x => e_1 \quad \inr \; x => e_2
\]
Then we need to add the following clause to the value interpretation:
\[
  \vpred{\tau_1 + \tau_2} = \{\inl \; v \mid v \in \vpred{\tau_1}\} \cup
                           \{\inr \; v \mid v \in \vpred{\tau_2}\}
\]
It turns out this clause is sufficient.
One might think that it is necessary to require the body of the match to be in the expression interpretation, which looks something like $\forall e \in \epred{\tau}$.
This requirement will, however, give well-foundedness problems, as $\tau$ is not a structurally smaller type than $\tau_1 + \tau_2$.
It may come as a surprise that we do not need to relate the expressions as the slogan for logical relations is: ``Related inputs to related outputs.''
\subsection{Exercises}
\begin{enumerate}
\item Prove the \textsc{T-App} case of the Fundamental Property (Theorem~\ref{thm:stlc-ts-ftlr}).
\item Verify the remaining cases of Theorem~\ref{thm:stlc-ts-ftlr} (\textsc{T-True},\textsc{T-False},\textsc{T-Var}, and \textsc{T-If}).
\end{enumerate}

\section{Universal Types and Relational Substitutions}
In the previous sections, we considered the unary properties and safety and termination, but now we shift our focus to relational properties and specifically program equivalences.
Generally speaking, we use logical relations rather than predicates for proving relational properties.
A program equivalence is a relational property, so we are going to need a logical relation.

We will consider the language System F which is \STLC{} with universal types.
Universal types allow us to write generic functions.
For instance, say we have a function that sorts integer lists:
\[
  \sortint \; : \; \tarrow{\listt \; \integer}{\listt \; \integer}
\]
The function takes a list of integers and returns the sorted list.
Say we now want a function $\sortstring$ that sorts lists of strings.
Instead of implementing a completely new sorting function for this, we could factor out the generic code responsible for sorting from $\sortint$ and make a generic function.
The type signature for a generic sort function would be:
\[
  \sort \; : \; \forall \alpha\ldotp \tarrow{(\listt \; \alpha) \times (\tarrow{\alpha \times \alpha}{bool} )}{\listt \; \alpha}
\]
The generic sort function takes a type, a list of elements of this type, and a comparison function that compares to elements of the given type.
The result of the sorting function is a list sorted according to the comparison function.
An example of an application of this function could be
\[
  \sort \; [\int] \; (3,7,5) \; \texttt{<}
\]
Whereas sort instantiated with the $\tstring$ type, but given an integer list would not be a well typed instantiation.
\[
  \sort \; [\tstring] \; \cancelto{("a","c","b")}{\color{red}(3,7,5)} \; \texttt{string<}
\]
Here the application with the list $(3,7,5)$ is not well typed, but if we instead use a list of strings, then it type checks. 

We want to extend the simply typed lambda calculus with functions that abstract over types in the same way lambda abstractions, $\tlabs{x}{\tau}{e}$, abstract over terms. We do that by introducing a type abstraction:
\[
  \tLabs{e}
\]
This function abstracts over the type $\alpha$ which allows $e$ to depend on $\alpha$.

\subsection{System F (STLC with universal types)}
We extend \STLC{} as follows to get System F:
\[
\arraycolsep=0pt
\begin{array}{r l}
  \tau ::={} & \ldots \mid \forall \alpha\ldotp \tau \\
  e    ::={} & \ldots \mid \tLabs{e} \mid e[\tau] \\
  v    ::={} & \ldots \mid \tLabs{e}\\
  E    ::={} & \ldots \mid E[\tau] 
\end{array}
\]
New evaluation:
\begin{mathpar}
  \inferrule{ }{(\tLabs{e})[\tau] \evalto e[\tau/\alpha]}
\end{mathpar}
Type environment:
\[
  \Delta ::= \mtenv \mid \Delta,\; \alpha
\]
The type environment consists of type variables, and they are assumed to be distinct type variables.
That is, the environment $\Delta,\alpha$ is only well-formed if $\alpha \not\in \dom(\Delta)$\footnote{We do not annotate $\alpha$ with a kind as we only have one kind in this language.}.
With the addition of type environments, we update the form of the typing judgement as follows
\[
  \Delta,\Gamma \vdash e : \tau
\]
We now need a notion of well-formed types. If $\tau$ is well formed with respect to $\Delta$, then we write:
\[
  \Delta \vdash \tau
\]
We do not include the formal rules here, but they amount to $\FTV(\tau) \subseteq \Delta$, where $\FTV(\tau)$ is the set of free type variables in $\tau$.

We further introduce a notion of well-formed typing contexts. A context is well formed if all the types that appear in the range of $\Gamma$ are well formed.
\[
  \begin{gathered}
    \Delta \vdash \Gamma\\
    \text{iff}\\
    \forall x \in \dom(\Gamma)\ldotp \Delta \vdash \Gamma(x)
\end{gathered}
\]
For any type judgment $\Delta,\Gamma \vdash e : \tau$, we have as an invariant that $\tau$ is well formed in $\Delta$ and $\Gamma$ is well formed in $\Delta$.
The old typing system modified to use the new form of the typing judgment looks like this:

\begin{mathpar}
  \FTFalse
  \and
  \FTTrue
  \and
  \FTVar
  \and
  \FTIf
  \and
  \FTAbs 
  \and
  \FTApp
\end{mathpar}
Notice that the only thing that has changed is that $\Delta$ has been added to the environment in the judgments.
We further extend the typing rules with the following two rules to account for our new language constructs:
\begin{mathpar}
  \FTtAbs
  \and
  \FTtApp
\end{mathpar}

\subsection{Properties of System-F: Free Theorems}
\label{subsec:free-thm}
In System-F, certain types reveal the behavior of the functions with that type.
Let us consider terms with the type $\forall \alpha\ldotp \tarrow{\alpha}{\alpha}$.
Recall from Section~\ref{subsec:motivation-lr} that this has to be the identity function.
We can now phrase this as a theorem:
\begin{theorem}
  \label{thm:free-thm-identity}
  If all of the following hold
  \begin{itemize}  \setlength\itemsep{0em} \renewcommand\labelitemi{--}
  \item $\mtenv ; \mtenv \vdash e : \forall \alpha\ldotp \tarrow{\alpha}{\alpha}$
  \item $\mtenv \vdash \tau$
  \item $\mtenv; \mtenv \vdash v : \tau$,
  \end{itemize}
then
\[
  e[\tau]\; v \evaltos v
  \qedhere
\]
\end{theorem}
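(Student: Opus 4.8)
The plan is to obtain this as an instance of parametricity --- the Fundamental Property of a binary logical relation for System~F. Since the earlier sections built only unary predicates and have not yet treated universal types, the first task is to set that relation up. Following the standard recipe, I would define mutually a value interpretation $\curly{V}\sem{\tau}_\rho$ (a relation on closed values) and an expression interpretation $\curly{E}\sem{\tau}_\rho$ (a relation on closed terms), both indexed by a \emph{relational substitution} $\rho$ that assigns to each type variable $\alpha\in\Delta$ a triple $(\tau_1,\tau_2,R)$, with $R$ an arbitrary relation between closed values of type $\tau_1$ and closed values of type $\tau_2$. The essential clauses are: $\curly{V}\sem{\alpha}_\rho = R$ when $\rho(\alpha)=(\tau_1,\tau_2,R)$; $(\labs{x}{e_1},\labs{x}{e_2})\in\curly{V}\sem{\tarrow{\tau_1}{\tau_2}}_\rho$ iff for all $(v_1,v_2)\in\curly{V}\sem{\tau_1}_\rho$ we have $(\sub{e_1}{v_1}{x},\sub{e_2}{v_2}{x})\in\curly{E}\sem{\tau_2}_\rho$; $(\tLabs{e_1},\tLabs{e_2})\in\curly{V}\sem{\forall\alpha.\tau}_\rho$ iff for all $\tau_1,\tau_2$ and all relations $R$ on closed values of those types, $(\sub{e_1}{\tau_1}{\alpha},\sub{e_2}{\tau_2}{\alpha})\in\curly{E}\sem{\tau}_{\rho[\alpha\mapsto(\tau_1,\tau_2,R)]}$; and finally $(e_1,e_2)\in\curly{E}\sem{\tau}_\rho$ iff $e_1\evaltos v_1$ and $e_2\evaltos v_2$ with $(v_1,v_2)\in\curly{V}\sem{\tau}_\rho$. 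The definition is well founded by induction on the \emph{structure} of $\tau$: in the $\forall$-clause the recursive appeal is to $\tau$, which is structurally smaller than $\forall\alpha.\tau$ --- substituting a \emph{relation} for $\alpha$ rather than a type syntactically is exactly the trick that makes this go through.

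Next I would prove the Fundamental Property: if $\Delta;\Gamma\vdash e:\tau$ then for every $\rho$ interpreting $\Delta$ and every pair of $\rho$-related closing substitutions $\gamma_1,\gamma_2$ for $\Gamma$ we have $(\gamma_1(e),\gamma_2(e))\in\curly{E}\sem{\tau}_\rho$; in particular $(e,e)\in\curly{E}\sem{\tau}_\emptyset$ for closed $e$. This is an induction on the typing derivation: the \STLC{} cases are the binary analogues of the compatibility arguments already seen, and the two new cases, \textsc{T-TAbs} and \textsc{T-TApp}, come from unfolding the $\forall$-clause and reindexing $\rho$, using a type-substitution lemma that relates the value interpretation of $\subst{\tau}{\tau'}{\alpha}$ under $\rho$ to the value interpretation of $\tau$ under $\rho$ extended by sending $\alpha$ to $\curly{V}\sem{\tau'}_\rho$.

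Granting the Fundamental Property, the theorem follows by a pointed choice of relation. From $\mtenv;\mtenv\vdash e:\forall\alpha.\tarrow{\alpha}{\alpha}$ we get $(e,e)\in\curly{E}\sem{\forall\alpha.\tarrow{\alpha}{\alpha}}_\emptyset$, so by strong normalization and canonical forms $e\evaltos\tLabs{e_0}$ for some $e_0$ with $(\tLabs{e_0},\tLabs{e_0})\in\curly{V}\sem{\forall\alpha.\tarrow{\alpha}{\alpha}}_\emptyset$ (both copies being $e$, determinism of $\evalto$ forces the same value on both sides). Instantiate the $\forall$-clause with $\tau_1=\tau_2=\tau$ and the \emph{singleton} relation $R=\{(v,v)\}$ --- legitimate since $\mtenv;\mtenv\vdash v:\tau$ makes $v$ a closed value of type $\tau$. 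Writing $\rho=[\alpha\mapsto(\tau,\tau,R)]$, we obtain $(\sub{e_0}{\tau}{\alpha},\sub{e_0}{\tau}{\alpha})\in\curly{E}\sem{\tarrow{\alpha}{\alpha}}_\rho$; since $e[\tau]\evaltos(\tLabs{e_0})[\tau]\evalto\sub{e_0}{\tau}{\alpha}$, this yields $e[\tau]\evaltos w$ for a function value $w$ with $(w,w)\in\curly{V}\sem{\tarrow{\alpha}{\alpha}}_\rho$. Because $(v,v)\in\curly{V}\sem{\alpha}_\rho=R$, applying $w$ to $v$ on both sides gives $(w\;v,\,w\;v)\in\curly{E}\sem{\alpha}_\rho$, so $w\;v\evaltos v'$ for some value $v'$ with $(v',v')\in\curly{V}\sem{\alpha}_\rho=\{(v,v)\}$, forcing $v'=v$. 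Chaining, $e[\tau]\;v\evaltos w\;v\evaltos v$, which is the claim.

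The real work, and the main obstacle, is all of this machinery rather than the final instantiation: getting the relational reading of type variables right, checking well-foundedness, stating and proving the type-substitution lemma, and discharging the \textsc{T-TAbs}/\textsc{T-TApp} compatibility cases. Once that exists, the theorem is just the observation that a parametric inhabitant of $\forall\alpha.\tarrow{\alpha}{\alpha}$, probed with the relation $\{(v,v)\}$ at $\alpha$, has nowhere to send its argument but back out. One caveat worth flagging: the conclusion asserts genuine termination ($e[\tau]\;v\evaltos v$, not merely ``if it converges\ldots''), so the expression interpretation must be the termination-sensitive one, which means the development either relies on, or in effect reproves, strong normalization for System~F.
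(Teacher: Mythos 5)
Your proposal matches the paper's proof essentially step for step: the paper likewise derives $(e,e)$ in the expression relation from the Fundamental Property, instantiates the $\forall$-clause with $\tau$ twice and the singleton relation $R=\{(v,v)\}$, and chains the resulting evaluations to conclude $e[\tau]\;v \evaltos v$. The supporting machinery you sketch (relational substitutions, $\Rel[\tau_1,\tau_2]$, the compositionality lemma, and the termination-sensitive expression interpretation justified by normalization of System~F) is exactly what the paper develops in the sections preceding the theorem.
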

This is a free theorem in this language.
Another free theorem from Section~\ref{subsec:motivation-lr} is for expressions of type $\forall \alpha\ldotp \tarrow{\alpha}{\mathit{bool}}$.
All expressions with this type has to be constant functions.
We can also phrase this as a theorem:
\begin{theorem}
  If all of the following hold
  \begin{itemize}\setlength\itemsep{0em} \renewcommand\labelitemi{--}
  \item $\mtenv ; \mtenv \vdash e : \forall \alpha\ldotp \tarrow{\alpha}{\mathit{bool}}$
  \item $\mtenv \vdash \tau$
  \item $\mtenv ; \mtenv \vdash v_1 : \tau$
  \item $\mtenv ; \mtenv \vdash v_2 : \tau$
\end{itemize}
then
\[
  \ctxeq{e[\tau] \; v_1}{e[\tau]\; v_2}
  \qedhere
\]
\end{theorem}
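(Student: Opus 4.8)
The plan is to route this through the binary logical relation for System~F developed in the next section. We will have relational interpretations $\vprep{\tau}$ and (writing $\mathcal{E}$ for the expression interpretation) $\mathcal{E}\llbracket\tau\rrbracket_\rho$, both indexed by a relational substitution $\rho$ that maps each free type variable of $\tau$ to a pair of closed types together with an \emph{arbitrary} relation on their closed values; the Fundamental Property will say that every well-typed term is logically related to itself, $\Delta;\Gamma\vdash e:\tau \implies \Delta;\Gamma\vdash\lreq{e}{e}:\tau$; and the soundness theorem will say $\lreq{e_1}{e_2}:\tau\implies\ctxeq{e_1}{e_2}$, which follows from the compatibility (congruence) lemmas for the relation together with adequacy at $\bool$. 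Granting this machinery, the argument is short.

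First I would apply the Fundamental Property to the hypothesis $\mtenv;\mtenv\vdash e:\forall\alpha\ldotp\tarrow{\alpha}{\bool}$, obtaining $\mtenv;\mtenv\vdash\lreq{e}{e}:\forall\alpha\ldotp\tarrow{\alpha}{\bool}$. Unfolding the value relation at a universal type, this tells us that for \emph{every} choice of closed types $\tau_1,\tau_2$ and \emph{every} admissible relation $R$ on their closed values, the type instantiations $e[\tau_1]$ and $e[\tau_2]$ are related at $\tarrow{\alpha}{\bool}$ under $\rho:=[\alpha\mapsto(\tau_1,\tau_2,R)]$. Next I would instantiate this with $\tau_1=\tau_2=\tau$ and $R=\{(v_1,v_2)\}$, the singleton relation relating the two given values; this is a legal relational interpretation precisely because $\mtenv\vdash\tau$ and $\mtenv;\mtenv\vdash v_i:\tau$ supply closed values of the closed type $\tau$. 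Since $(v_1,v_2)\in\vprep{\alpha}=R$ by construction, the arrow clause, applied to the related functions $e[\tau],e[\tau]$ and the related arguments $v_1,v_2$ (and using closure of the expression relation under backward reduction along $e[\tau]\,v_i\evaltos\cdots$), yields that $e[\tau]\,v_1$ and $e[\tau]\,v_2$ are related at $\bool$. But the interpretation of $\bool$ is the diagonal on $\{\true,\false\}$ and does not depend on $\rho$, so being related at $\bool$ forces both terms to reduce to one and the same boolean; in particular $\mtenv;\mtenv\vdash\lreq{e[\tau]\,v_1}{e[\tau]\,v_2}:\bool$, and soundness then gives $\ctxeq{e[\tau]\,v_1}{e[\tau]\,v_2}$, which is the claim.

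\textbf{Main obstacle.} The calculation above is routine once the infrastructure exists; the real work is building that infrastructure, all of which belongs to the upcoming section: defining the System~F logical relation with relational substitutions that quantify over \emph{arbitrary} relations (so that the degenerate singleton relation is permitted), proving its Fundamental Property — whose delicate cases are the compatibility lemmas for type abstraction and type application, which need a substitution lemma relating syntactic type substitution to installing a relational interpretation, i.e.\ $\vprep{\subst{\tau}{\tau'}{\alpha}}$ equals $\mathcal{V}\llbracket\tau\rrbracket$ under $\rho$ extended at $\alpha$ — and proving soundness $\lreq{\cdot}{\cdot}\subseteq\ctxeq{\cdot}{\cdot}$. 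Two small points to watch: the relation $R$ must consist of pairs of \emph{closed} values of the chosen types, which is exactly why the theorem is stated for values $v_1,v_2$ rather than arbitrary expressions; and if the expression relation is phrased one-directionally (left term converges to a value implies right term converges to a related value), one additionally invokes either normalization of System~F or the symmetry of the relation to conclude that the two boolean results literally coincide.
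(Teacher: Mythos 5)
Your proposal is correct and follows essentially the route the paper intends: the paper never spells out a proof of this particular theorem, but its proof of the identity free theorem (Theorem~\ref{thm:free-thm-identity}) proceeds exactly as you describe --- fundamental property, instantiation of the universal clause with $\tau,\tau$ and a singleton relation (there $R=\{(v,v)\}$, here $R=\{(v_1,v_2)\}$), descent through the arrow clause to relatedness at $\bool$, whose interpretation is the diagonal --- and the passage to contextual equivalence via soundness of the logical relation is precisely the corollary step the paper announces. The only cosmetic mismatch is your appeal to backward closure of the expression relation: the paper's $\curly{E}$-relation is big-step-flavored (both sides must evaluate to related values), so one instead traces the evaluation of $e[\tau]\,v_i$ down to the common boolean directly, as in the paper's proof of Theorem~\ref{thm:free-thm-identity}.
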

\footnote{We have not defined $\ctxeq{}{}$ yet.
  For now, it suffices to know that it equates programs that are behaviorally the same.}
We can even allow the type abstraction to be instantiated with different types:
\begin{theorem}
  \label{thm:free-thm-three}
  If all of the following hold
  \begin{itemize}\setlength\itemsep{0em} \renewcommand\labelitemi{--}
  \item $\mtenv ; \mtenv \vdash e : \forall \alpha\ldotp \tarrow{\alpha}{\mathit{bool}}$
  \item $\mtenv \vdash \tau_1$
  \item $\mtenv \vdash \tau_2$
  \item $\mtenv ; \mtenv \vdash v_1 : \tau_1$
  \item $\mtenv ; \mtenv \vdash v_2 : \tau_2$
  \end{itemize}
  then
  \[
    \ctxeq{e[\tau_1]\; v_1}{e[\tau_2]\; v_2}
    \qedhere
  \]
\end{theorem}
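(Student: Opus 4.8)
The plan is to build a \emph{binary} logical relation for System~F, prove its Fundamental Property (parametricity), establish its soundness with respect to contextual equivalence, and then obtain the theorem by instantiating parametricity at a carefully chosen relational interpretation of the abstracted type variable. The two previous sections built unary logical predicates; here we need the relational analogue. Concretely, I would define, by induction on types, a value relation $\vprep{\tau}$ and an expression relation $\eprep{\tau}$ on pairs of closed terms, where $\rho$ is an environment sending each type variable $\alpha$ to a triple $(\tau_1',\tau_2',R)$ of two closed types together with a relation $R$ on closed values of those types. The clauses for $\bool$ and $\tarrow{\tau_a}{\tau_b}$ mirror the unary case (pairs of equal booleans; pairs of $\lambda$-abstractions sending $\rho$-related arguments to $\rho$-related results in the expression relation), the clause for a type variable is $\vprep{\alpha} = R$ where $\rho(\alpha) = (\cdot,\cdot,R)$, and the clause for $\forall\alpha.\tau$ relates $\tLabs{e_1}$ and $\tLabs{e_2}$ iff for \emph{all} closed types $\tau_1',\tau_2'$ and \emph{all} relations $R$ on their closed values, $e_1[\tau_1'/\alpha]$ and $e_2[\tau_2'/\alpha]$ lie in $\curly{E}\sem{\tau}_{\rho[\alpha\mapsto(\tau_1',\tau_2',R)]}$. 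This is well-founded since $\tau$ is structurally smaller than $\forall\alpha.\tau$. The relation $\gprep{\Gamma}$ on pairs of closing substitutions and the judgement $\Delta;\Gamma \vdash \lreq{e_1}{e_2}:\tau$ are defined as expected.

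Next I would prove the supporting lemmas: a \emph{type-substitution lemma} $\vprep{\subst{\tau}{\tau'}{\alpha}} = \curly{V}\sem{\tau}_{\rho[\alpha\mapsto(\cdot,\cdot,\vprep{\tau'})]}$ (and similarly for $\curly{E}$), closure of $\eprep{\tau}$ under anti-reduction on both sides (together with congruence under evaluation contexts), and then the \textbf{Fundamental Property}: if $\Delta;\Gamma \vdash e:\tau$ then $\Delta;\Gamma \vdash \lreq{e}{e}:\tau$. This goes by induction on the typing derivation with one compatibility lemma per rule; the \STLC{} fragment is as before, and the genuinely new work is \textsc{T-TAbs} (unfold the $\forall$ clause and appeal to the induction hypothesis under the extended $\rho$) and \textsc{T-TApp} (use the type-substitution lemma to pass from $\eprep{\forall\alpha.\tau}$ to $\eprep{\subst{\tau}{\tau'}{\alpha}}$). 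Finally I would prove soundness: logically related closed well-typed terms are contextually equivalent (the relation is a congruence and is adequate for observations at $\bool$); this is also where $\ctxeq{}{}$ must finally be pinned down.

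With this in place the theorem is short. From $\mtenv;\mtenv \vdash e : \forall\alpha.\tarrow{\alpha}{\bool}$, the Fundamental Property gives $(e,e)\in\curly{E}\sem{\forall\alpha.\tarrow{\alpha}{\bool}}_{\emptyset}$, so $e$ reduces to a value $w=\tLabs{e_0}$ with $(w,w)\in\curly{V}\sem{\forall\alpha.\tarrow{\alpha}{\bool}}_{\emptyset}$. Instantiate the $\forall$ clause with the closed types $\tau_1,\tau_2$ and the relation $R=\{(v_1,v_2)\}$, which is a legitimate relation on closed values of types $\tau_1$ and $\tau_2$ since, by hypothesis, $v_1,v_2$ are closed values of those types. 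Writing $\rho=[\alpha\mapsto(\tau_1,\tau_2,R)]$, we get $(e_0[\tau_1/\alpha],\,e_0[\tau_2/\alpha])\in\curly{E}\sem{\tarrow{\alpha}{\bool}}_{\rho}$, hence by anti-reduction $(e[\tau_1],\,e[\tau_2])$ is in that expression relation too, so both reduce to function values related in $\curly{V}\sem{\tarrow{\alpha}{\bool}}_{\rho}$. By construction $(v_1,v_2)\in R=\curly{V}\sem{\alpha}_{\rho}$, so applying the two functions gives $(e[\tau_1]\;v_1,\,e[\tau_2]\;v_2)\in\curly{E}\sem{\bool}_{\rho}$. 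Since $\bool$ has no free type variables, $\curly{V}\sem{\bool}_{\rho}=\{(\true,\true),(\false,\false)\}$, so both expressions terminate at the \emph{same} boolean $b$. Two closed well-typed terms evaluating to the same value are logically related, so by soundness $\ctxeq{e[\tau_1]\;v_1}{e[\tau_2]\;v_2}$.

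The main obstacle is the infrastructure, not this final deduction. The delicate points are: getting the $\forall$ clause and the relational environments $\rho$ right so the definition stays well-founded; proving the type-substitution lemma and pushing the \textsc{T-TApp} compatibility lemma through — precisely the part with no counterpart in the unary proofs; and deciding which relations $R$ to admit (here arbitrary binary relations on closed values suffice, since there is no recursion or mutable store, so no admissibility or closure side-condition is needed). Establishing soundness with respect to $\ctxeq{}{}$ is routine for this language but must be carried out before the theorem can be invoked.
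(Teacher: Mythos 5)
Your proposal is correct and follows exactly the route the paper sets up for its free theorems: define the binary logical relation with relational substitutions $\rho$, prove the Fundamental Property (Theorem~\ref{thm:system-f-ftlr}) via the compatibility lemmas and the compositionality lemma (Lemma~\ref{lem:system-f-compositionality}), establish soundness with respect to $\ctxeq{}{}$, and then instantiate parametricity with the singleton relation $R=\{(v_1,v_2)\}$ --- precisely the style of choice the paper makes in its worked proof of the analogous Theorem~\ref{thm:free-thm-identity}. The paper never actually writes out this particular proof (nor the soundness of the logical relation with respect to contextual equivalence, which it only states as a goal), so your final deduction is a faithful completion of the argument the paper sketches rather than a divergence from it.
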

We get these free theorems because the functions do not know the type of the argument which means that they have no way to inspect it.
The function can only treat its argument as an unknown ``blob'', so it has no choice but to return the same value every time.

The question now is: how do we prove these free theorems?
The two last theorems both talk about program equivalence, so the proof technique of choice is a logical relation.
The first theorem did not mention a program equivalence, but it can also be proven with a logical relation.

\subsection{Contextual Equivalence}
To define a contextual equivalence, we first define the notion of a program context.
A program context is a complete program with exactly one hole in it:
\[
  \arraycolsep=0pt
  \begin{array}{rl}
    C ::={} &  [\cdot] \mid \eif{C}{e}{e}  \mid \eif{e}{C}{e} \mid \eif{e}{e}{C} \mid \\
            & \tlabs{x}{\tau}{C} \mid C \; e \mid e \; C \mid \tLabs{C} \mid C[\tau]
  \end{array}
\]
For instance, \[\tlabs{y}{\bool}{[\cdot]}\] is a context where the hole is the body of the lambda abstraction.

We need a notion of context typing. For simplicity, we just introduce it for simply typed lambda calculus. The context typing is written as:
\[
\inferrule*{\Gamma \vdash e : \tau \and
            \Gamma' \vdash C[e] : \tau'}
           {C\;:\; (\Gamma \vdash \tau) \implies (\Gamma' \vdash \tau')} 
\]
This means that for any expression $e$ of type $\tau$ under $\Gamma$ if we embed it into $C$, then the type of the embedding is $\tau'$ under $\Gamma'$.
For our example of a context, we would have
\[
  \tlabs{y}{\bool}{[\cdot]}\;:\; (y : \bool \vdash \bool) \implies (\mtenv \vdash \tarrow{\bool}{\bool})
\]
because the hole of the context can be plugged with any expression of type $\bool$ with variable $y$ free in it.
Such an expression could be $\eif{y}{\false}{\true}$ as $y : \bool \vdash \eif{y}{\false}{\true} : \bool$.
When the hole in the context is plugged with such an expression, then it is closed and has type $\tarrow{\bool}{\bool}$.
For instance, if we plug the example context with the above expression we have
\[ 
  \mtenv \vdash \tlabs{y}{\bool}{\eif{y}{\false}{\true}} : \tarrow{\bool}{\bool}
\]

Informally we want contextual equivalence to express that two expressions give the same result no matter what program context they are plugged into.
In other words, two expressions are contextually equivalent when no program context is able to observe any difference in behavior between the two expressions.
For this reason, contextual equivalence is also known as \emph{observational equivalence}.
A hole has to be plugged with a term of the correct type, so we annotate the equivalence with the type of the hole which means that the two contextually equivalent expressions must have that type.
\[
\begin{gathered}
  \Delta; \Gamma \vdash \ctxeq{e_1}{e_2} : \tau \\
  \text{iff} \\
  \forall C\;:\; (\Delta ; \Gamma \vdash \tau ) \implies (\mtenv ; \mtenv \vdash bool)\ldotp (C[e_1] \Downarrow v \iff C[e_2] \Downarrow v)
\end{gathered}
\]
This definition assumes that $e_1$ and $e_2$ have type $\tau$ under the specified contexts.
Notice from the context typing we have that $C$ plugged with any term must be closed which ensures that it can be evaluated.
Further, we require the plugged context to have type $\bool$.
If we allowed the plugged context to be of function type, then the two functions produced by $C[e_1]$ and $C[e_2]$ would have to be equivalent.
This would of course also need to hold true for the program context that is just a hole because we need to decide whether $e_1$ and $e_2$ behaves the same which is exactly what we try to define with contextual equivalence.
\footnote{In non-terminating languages, the definition of contextual equivalence only mentions termination and not whether the values produced are equal.
  For non-terminating languages this makes sense because if a context is able to differentiate two expressions, then it is easy to construct another context that terminates when plugged with one expression but diverges when plugged with the other.
}.

Contextual equivalence can be used to argue about correctness of a program.
For instance, say we have two implementations of a stack, one is implemented using an array and the other using a list.
If we can show that the two implementations are contextual equivalent, then we can use either stack implementation in any context with the same result.
In particular, if one of the stack implementations is highly optimized, then we would like to use that implementation.
However, it is not always clear whether such highly optimized programs do what they are supposed to.
In order to argue this, one could implement a stack reference or specification implementation which is slower but where it is easy to see that it behaves like a stack simply by inspecting the implementation.
A proof of contextual equivalence between the highly optimized stack implementation and the specification implementation could be seen as a correctness proof \emph{with respect to} the specification.

In the next part, we will introduce a logical relation such that
\[
  \Delta ; \Gamma \vdash \lreq{e_1}{e_2} : \tau \implies  \Delta ; \Gamma \vdash \ctxeq{e_1}{e_2} : \tau
\]
That is we want to show that the logical relation is sound with respect to contextual equivalence. 

If we can prove the above soundness, then we can state our free theorems with respect to $\lreq{}{}$ and get the result with respect to $\ctxeq{}{}$ as a corollary. 
In fact, it is very difficult to prove contextual equivalences directly as the proof has to consider every possible program context.
Such a proof would be by induction, but the lambda-abstraction case would be very difficult.

\subsection{A Logical Relation for System F}
In this section, we construct the logical relation for System F, so we can prove the free theorems of Section~\ref{subsec:free-thm}.
We are defining a relation, so the value interpretation consists of pairs.
All pairs in the value relation must be closed and well-typed:
\[
  \vpred{\tau} = \{(v_1,v_2) \mid 
                        \mtenv ; \mtenv \vdash v_1 : \tau \wedge 
                        \mtenv ; \mtenv \vdash v_2 : \tau \wedge \dots \}
\]
For succinctness, we leave this implicit when we define the rest of the value relation.
As a first try, we naively construct the logical relation in the same way we constructed the logical predicates in the previous sections:
\[
\arraycolsep=0pt
\begin{array}{rl}
  \vpred{bool}                 ={} & \{ (\true,\true), (\false,\false) \} \\
  \vpred{\tarrow{\tau}{\tau'}} ={} & \left\{
                                 \begin{array}{>{$}p{9.5cm}<{$}}
                                   (\tlabs{x}{\tau}{e_1},\tlabs{x}{\tau}{e_2}) \mid \\
                                   \multicolumn{1}{r}{\forall (v_1,v_2) \in \vpred{\tau}\ldotp (\subst{e_1}{v_1}{x},\subst{e_2}{v_2}{x}) \in \epred{\tau'}}
                                 \end{array}
\right\}
\end{array}
\]
The value interpretation of the function type is defined based on the slogan for logical relations: ``Related inputs to related outputs.''
If we had chosen to use equal inputs rather than related, then our definition would be more restrictive than necessary.

We did not define a value interpretation for type variables in the previous sections, so let us try to push on without defining that part.

The next type is $\forall \alpha\ldotp \tau$.
When we define the value interpretation, we consider the elimination forms which in this case is type application (cf.\ the guidelines in Section~\ref{subsec:categories-lr}).
Before we proceed, let us revisit Theorem~\ref{thm:free-thm-three}, the last free theorem of Section~\ref{subsec:free-thm}.
There are some important points to note in this free theorem.
First of all, we want to be able to apply $\Lambda$-terms to different types, so we will have to pick two different types in our value interpretation.
Further in the lambda-abstraction case, we pick related values, so perhaps we should pick related types here.
However, we do not have a notion of related types, and the theorem does not relate the two types, so relating them might not be a good idea after all.
With these points in mind, we make a first attempt at constructing the value interpretation of $\forall \alpha\ldotp \tau$:
\[
  \vpred{\forall \alpha\ldotp \tau} = \{(\tLabs{e_1}, \tLabs{e_2}) \mid \forall \tau_1,\tau_2\ldotp (\subst{e_1}{\tau_1}{\alpha},\subst{e_2}{\tau_2}{\alpha}) \in \epred{\subst{\tau}{?}{\alpha}} \}
\]
Now the question is what type to relate the two expressions under.
We need to substitute $?$ for some type, but if we use either $\tau_1$ or $\tau_2$, then the well-typedness requirement will be broken for one of the two expressions.
The solution is for each type variable to keep track of the two types we would have liked to substitute, i.e.\ $\tau_1$ and $\tau_2$.
This means that we leave $\alpha$ free in $\tau$, but we can close it off with the types we keep track of to fulfil the well-typedness requirement of the logical relation.
To this end, we introduce a \emph{relational substitution} which keeps track of related types for each free type variable:
\[
  \rho = \{ \alpha_1 \mapsto (\tau_{11},\tau_{12} ), \dots \}
\]
We make the relational substitution available by passing it around in the interpretations. For instance, for the type abstraction we have:
\[
  \vprep{\forall \alpha\ldotp \tau} = \{(\tLabs{e_1},\tLabs{e_2}) \mid \forall \tau_1 , \tau_2\ldotp (\subst{e_1}{\tau_1}{\alpha},\subst{e_2}{\tau_2}{\alpha}) \in \eprep[\extsub{\rho}{\alpha}{(\tau_1,\tau_2)}]{\tau} \}
\]
Notice how the definition now is parameterized with $\rho$, but also how $\rho$ is extended with $\alpha$ mapping to $\tau_1$ and $\tau_2$ before it is passed to the expression interpretation.
This means that $\alpha$ is free in $\tau$ which means that we need to interpret type variables in the value interpretation:
\[
\vprep{\alpha} = \{(v_1,v_2) \mid \rho(\alpha) = (\tau_1,\tau_2) \dots\}
\]
However, the big question is: How should these values be related?
To find the answer, we look to Theorem~\ref{thm:free-thm-three} again.
In the theorem, the expression $e$ is contextually related to itself when applied to different types and values.
That is $e[\tau_1]\; v_1$ is contextually equivalent to $e[\tau_2]\; v_2$ which means that $e$ must treat $v_1$ and $v_2$ the same.
This would suggest that, $v_1$ and $v_2$ should somehow be related under type $\alpha$.
To achieve this in the value interpretation, we pick a relation $R$ in the interpretation of type abstractions.
Relation $R$ relates values of type $\tau_1$ and $\tau_2$ which in the example of the free theorem would allow us to relate $v_1$ and $v_2$.
An example of a relation would be $R_{\var{ex}}=\{(1,\true)\}$).
We need such a relation when we interpret a type variable $\alpha$, so we make $R$ part of the relational substitution.
For instance, if we picked $R_{\var{ex}}$ for $\alpha$ and made it part of the relational substitution, then it would be of the form $\extsub{\rho}{\alpha}{(\int,\bool,R_{\var{ex}})}$ where $\rho$ is the previous relational substitution.
In the interpretation of type variables, we simply look-up $R$ in the relational substitution and require the values to be related in the relation.
With this in mind, we update the interpretation of type abstractions:
\[
  \arraycolsep=0pt
  \begin{array}{rl}
    \vprep{\forall \alpha\ldotp \tau} ={} & 
                                            \left\{ (\tLabs{e_1},\tLabs{e_2}) \middle|
                                            \begin{array}{>{$}p{8cm}<{$}}
                                            \!\begin{multlined}
                                              \forall \tau_1 , \tau_2, R \in \Rel[\tau_1,\tau_2]n\ldotp \\
                                              (\subst{e_1}{\tau_1}{\alpha},\subst{e_2}{\tau_2}{\alpha}) \in \eprep[\extsub{\rho}{\alpha}{(\tau_1,\tau_2,R)}]{\tau}
                                          \end{multlined}
                                            \end{array}
\right\}
  \end{array}
\]
The relational substitution is now also extended with $R$, and we require $R$ to be in $\Rel[\tau_1,\tau_2]$.
That is $\tau_1$ and $\tau_2$ must be closed values and $R$ only relates well-typed closed values.
\[
  \Rel[\tau_1,\tau_2] = \left\{R \in \curly{P}(Val \times Val) \middle|
    \begin{array}{l}
      \mtenv \vdash \tau_1 \wedge \mtenv \vdash \tau_2 \wedge\\
      \forall (v_1,v_2) \in R\ldotp \mtenv \vdash v_1 : \tau_1 \wedge \mtenv \vdash v_2 : \tau_2
    \end{array}
\right\}
\]
For $R_{\var{ex}}$, we have $R_{\var{ex}} \in \Rel[\int,\bool]$ as $\int$ and $\bool$ are closed types and $1$ and $\true$ are closed values of type $\int$ and $\true$, respectively.
In the interpretation of a type variable $\alpha$, we require the pair of values to be related in the relation for that type variable.
That is, they must be related under the  relation we picked when interpreting $\forall \alpha\ldotp \tau$:
\[
  \vprep{\alpha} = \{ (v_1,v_2) \mid \rho(\alpha) = (\tau_1,\tau_2,R) \wedge (v_1,v_2) \in R \}
\]
For convenience, we introduce the following notation: Given
\[
  \rho = \{\alpha_1 \mapsto (\tau_{11},\tau_{12},R_1), \alpha_2 \mapsto (\tau_{21},\tau_{22},R_2),\; \dots\; \}
\]
define the following projections:
\begin{align*}
  \rho_1 & = \{\alpha_1 \mapsto \tau_{11}, \; \alpha_2 \mapsto \tau_{21},\; \dots \;\} \\
  \rho_2 & = \{\alpha_1 \mapsto \tau_{12}, \; \alpha_2 \mapsto \tau_{22},\; \dots \;\} \\
  \rho_R & = \{\alpha_1 \mapsto R_1, \; \alpha_2 \mapsto R_2,\; \dots \;\} 
\end{align*}
Notice that $\rho_1$ and $\rho_2$ now are type substitutions, so we write $\rho_1(\tau)$ to mean $\tau$ where all the type variables mentioned in the substitution have been substituted with the appropriate types.
We can now write the value interpretation for type variables in a more succinct way:
\[
  \vprep{\alpha} = \rho_R(\alpha)
\]
We need to add $\rho$ to the other parts of the value interpretation as well. Moreover, as we now interpret open types, we require the pairs of values in the relation to be well typed under the type closed off using the relational substitution. So all value interpretations have the form
\[
  \vprep{\tau} = \{(v_1,v_2) \mid \mtenv ; \mtenv \vdash v_1 : \rho_1(\tau) \wedge \mtenv ; \mtenv \vdash v_2 : \rho_2(\tau) \wedge\dots \}
\]
We further need to close of the type annotation of the variable in functions, so our value interpretations end up as:

\[
  \arraycolsep=0pt
  \begin{array}{rl}
    \vprep{bool}                 ={} & \{ (\true,\true), (\false,\false) \} \\
    \vprep{\tarrow{\tau}{\tau'}} ={} & \left\{
                                   \begin{array}{>{$}p{9.5cm}<{$}}(\tlabs{x}{\rho_1(\tau)}{e_1},\tlabs{x}{\rho_2(\tau)}{e_2}) \mid \\
                                      \multicolumn{1}{r}{\forall (v_1,v_2) \in \vprep{\tau}\ldotp (\subst{e_1}{v_1}{x},\subst{e_2}{v_2}{x}) \in \eprep{\tau'}} 
                                   \end{array}\right\}
  \end{array}
\]
With the above challenge resolved, the remainder of the definitions are fairly straightforward.
The expression interpretation is defined as follows:
\[
\begin{array}{l}
  \eprep{\tau} ={}  \left\{
(e_1,e_2) \middle|
\begin{array}{l}
    \mtenv ; \mtenv \vdash e_1 : \rho_1(\tau) \wedge 
    \mtenv ; \mtenv \vdash e_2 : \rho_2(\tau) \wedge \\
    \exists v_1,v_2 \ldotp e_1 \evaltos v_1 \wedge
    e_2 \evaltos v_2 \wedge
    (v_1,v_2) \in \vprep{\tau} 
    \end{array}
\right\}
\end{array}
\]
In System F, the environment is extended with a type context $\Gamma$, so we need an interpretation for it.
\begin{align*}
  \dpred{\mtenv}         &= \{\emptyset \} \\
  \dpred{\Delta, \alpha} &= \{\extsub{\rho}{\alpha}{(\tau_1,\tau_2,R)} \mid 
                                 \rho \in \dpred{\Delta} \wedge 
                                 R \in \Rel[\tau_1,\tau_2] \} \\
\end{align*}
The interpretation of the type context is simply a relational substitution for the variables in the context.
We also need an interpretation for the typing context $\Gamma$ just like we had one in the previous sections.
\begin{align*}
  \gprep{\mtenv}         &= \{\emptyset \} \\
  \gprep{\Gamma, x:\tau} &= \{\extsub{\gamma}{x}{(v_1,v_2)} \mid 
                                 \gamma \in \gprep{\Gamma} \wedge
                                 (v_1,v_2) \in \vprep{\tau}\}
\end{align*}
We need the relational substitution in the interpretation of $\Gamma$, because $\tau$ might contain free type variables now.
We introduce a convenient notation for the projections of $\gamma$ similar to the one we did for $\rho$: 
\[
  \gamma = \{x_1 \mapsto (v_{11},v_{12}), x_2 \mapsto (v_{21},v_{22}),\; \dots\; \}
\]
Define the projections as follows:
\begin{align*}
  \gamma_1 & = \{x_1 \mapsto v_{11}, \; x_2 \mapsto v_{21},\; \dots \;\} \\
  \gamma_2 & = \{x_1 \mapsto v_{12}, \; x_2 \mapsto v_{22},\; \dots \;\}
\end{align*}
We are now ready to define when two terms are logically related.
We define it in a similar way to the logical predicate we already have defined.
First we pick $\rho$ and $\gamma$ to close off the expressions, then we require the closed off expressions to be related under the expression interpretation of the type in question\renewcommand{\lreq}[2]{\equivalence{#1}{}{#2}}\footnote{From now on we drop the superscript for logically related terms and just write $\lreq{}{}$.}:
\[
  \Delta ; \Gamma \vdash \lreq{e_1}{e_2} : \tau \eqdef{} \left\{
    \begin{array}{l}
      \Delta ; \Gamma \vdash e_1 : \tau \wedge\\
      \Delta ; \Gamma \vdash e_2 : \tau \wedge \\
      \forall \rho \in \dpred{\Delta}, \gamma \in \gprep{\Gamma} \ldotp\\
      \quad(\rho_1(\gamma_1(e_1)),\rho_2(\gamma_2(e_2))) \in \eprep{\tau}
    \end{array}
\right.
\]
With our logical relation defined, the first thing to prove is the fundamental property:
\begin{theorem}[Fundamental Property]
  \label{thm:system-f-ftlr}
  If $\Delta; \Gamma \vdash e : \tau$ then $\Delta; \Gamma \vdash \lreq{e}{e} : \tau$
\end{theorem}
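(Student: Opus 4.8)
The plan is to prove the Fundamental Property by induction on the typing derivation $\Delta; \Gamma \vdash e : \tau$. For each typing rule, I would establish a \emph{compatibility lemma} showing that the logical relation is closed under the corresponding term former: e.g.\ if $e_1 \sim e_1'$ at $\tarrow{\tau_2}{\tau}$ and $e_2 \sim e_2'$ at $\tau_2$, then $e_1\,e_2 \sim e_1'\,e_2'$ at $\tau$; and similarly for \true, \false, variables, \texttt{if}, $\lambda$-abstraction, type abstraction, and type application. The Fundamental Property then follows by chaining these compatibility lemmas along the structure of the derivation, since the inductive hypothesis at each node gives exactly that the immediate subterms are logically related to themselves.

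Before carrying out the cases, I would first prove a few auxiliary facts. The key one is a \textbf{compositionality} (or substitution) lemma for the relational substitution: for a type $\tau$ with $\alpha$ free, $\vprep[\extsub{\rho}{\alpha}{(\tau_1,\tau_2,R)}]{\tau} = \vprep[\rho]{\subst{\tau}{\alpha}{\tau_1}}\ldots$ — more precisely, interpreting $\tau$ under an extended $\rho$ coincides with interpreting the syntactic substitution $\subst{\tau}{\tau'}{\alpha}$ under $\rho$ when the relation picked for $\alpha$ is itself $\vprep[\rho]{\tau'}$. This is what makes the \textsc{T-TApp} case work: instantiating the universal at $\tau'$ must line up the expression interpretation $\eprep{\subst{\tau}{\tau'}{\alpha}}$ with the body interpretation where $\rho$ has been extended. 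I would also need: that $\vprep{\tau} \subseteq \eprep{\tau}$ (values are trivially related expressions, via zero evaluation steps); a closure-under-antireduction property for $\eprep{\tau}$ (if $e_i \evaltos e_i'$ and $(e_1',e_2') \in \eprep{\tau}$ with the right typing, then $(e_1,e_2) \in \eprep{\tau}$); the term substitution lemma relating $\rho_i(\gamma_i(\subst{e}{v}{x}))$ to the appropriate substitution (analogous to the one already proved for \STLC); and that $R \in \Rel[\tau_1,\tau_2]$ whenever $R = \vprep{\tau'}$ for a closed-off $\tau'$, i.e.\ the value relation always relates only well-typed closed values of the right types (this is built into the definition).

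The routine cases are \textsc{T-True}, \textsc{T-False}, \textsc{T-Var} (unfold $\gamma \in \gprep{\Gamma}$ and project), \textsc{T-If} (evaluate the scrutinee to a related pair of booleans, which by $\vprep{\bool}$ must be the \emph{same} boolean, then use the appropriate branch), and \textsc{T-App} (evaluate both functions to related $\lambda$-values, evaluate both arguments to related values, feed them through the function clause, then close under antireduction). The \textsc{T-Abs} case closely mirrors the \STLC{} proof: the $\lambda$-value is already a value, so we must show it lies in $\vprep{\tarrow{\tau_1}{\tau_2}}$; given related arguments $(v_1,v_2) \in \vprep{\tau_1}$, extend $\gamma$ by $x \mapsto (v_1,v_2)$, apply the induction hypothesis, and use the term substitution lemma. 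The \textsc{T-TAbs} case is analogous but extends $\rho$ instead: given arbitrary $\tau_1,\tau_2$ and $R \in \Rel[\tau_1,\tau_2]$, we extend $\rho$ by $\alpha \mapsto (\tau_1,\tau_2,R)$, which is in $\dpred{\Delta,\alpha}$, and invoke the induction hypothesis — note this case crucially uses that $R$ ranges over \emph{arbitrary} admissible relations, not just ones of the form $\vprep{\tau'}$.

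The main obstacle is the \textsc{T-TApp} case together with the compositionality lemma it depends on. Here $e : \forall\alpha.\tau$ is instantiated at $\tau'$ to get $e[\tau'] : \subst{\tau}{\tau'}{\alpha}$; running both sides, the $\Lambda$-values step to bodies, and membership in $\vprep[\extsub{\rho}{\alpha}{(\rho_1(\tau'),\rho_2(\tau'),R)}]{\tau}$ for \emph{our} chosen $R := \vprep[\rho]{\tau'}$ must be converted into membership in $\eprep[\rho]{\subst{\tau}{\tau'}{\alpha}}$. Proving the compositionality lemma requires an induction on the structure of $\tau$, with the subtle point being the type-variable case (where one must check $\rho_R(\alpha)$ on the extended substitution agrees with $\vprep[\rho]{\tau'}$) and ensuring the type-substitution bookkeeping $\rho_i(\subst{\tau}{\tau'}{\alpha}) = \subst{\rho_i(\tau)}{\rho_i(\tau')}{\alpha}$ (needing $\rho_i(\tau')$ to be closed) goes through cleanly. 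Getting these substitution equalities and the admissibility side-conditions exactly right is where the real care is needed; the rest is bookkeeping.
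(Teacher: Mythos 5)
Your proposal follows essentially the same route as the paper: an induction on the typing derivation organised into per-rule compatibility lemmas, with the \textsc{T-TApp} case resting on the compositionality lemma (Lemma~\ref{lem:system-f-compositionality}) and the instantiation of the universal at $R := \vprep{\tau'}$, justified by $\vprep{\tau'} \in \Rel[\rho_1(\tau'),\rho_2(\tau')]$. The auxiliary facts you list (value inclusion into the expression relation, antireduction, the term substitution lemma) are the same ingredients the paper uses or leaves implicit, so the proposal is correct and matches the paper's proof.
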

This theorem may seem a bit mundane, but it is actually quite strong.
In the definition of the logical relation, $\Delta$ and $\Gamma$ can be seen as maps from place holders that need to be replaced in the expression.
When we choose $\rho$ and $\gamma$, we can pick different types and terms to put in the expression which can give us two very different programs.

In some presentations, this is also known as the parametricity lemma.
It may even be stated without the short-hand notation for equivalence we use here.

We use a number of \emph{compatibility lemmas} to prove the theorem.
Each of the compatibility lemmas correspond to a typing rule and essentially prove one case of the induction proof for the theorem.
\subsection{Compatibility Lemmas}
We state a compatibility lemma for each of the typing rules.
Each of the lemmas correspond to a case in the induction proof of the Fundamental Property (Theorem~\ref{thm:system-f-ftlr}), so the theorem will follow directly from the compatibility lemmas.
We state the compatibility lemmas as rules to illustrate the connection to the typing rules:
\begin{enumerate}
\item $\inferrule*{ }
                 {\Gamma ; \Delta \vdash \lreq{\true}{\true} : bool}$ 
\item $\inferrule*{ }{\Gamma ; \Delta \vdash \lreq{\false}{\false} : bool}$
\item $\inferrule*{ }{\Gamma ; \Delta \vdash \lreq{x}{x} : \Gamma(x)}$
\item $\inferrule*{\Delta;\Gamma \vdash \lreq{e_1}{e_2} : \tarrow{\tau'}{\tau} \and
                   \Delta;\Gamma \vdash \lreq{e_1'}{e_2'} : \tau'}
                  {\Delta;\Gamma \vdash \lreq{e_1 \; e_1'}{e_2 \; e_2'} : \tau}$
\item $\inferrule*{\Delta; \Gamma, x:\tau \vdash \lreq{e_1}{e_2} : \tau'}
                  {\Delta; \Gamma \vdash \lreq{\tlabs{x}{\tau}{e_1}}{\tlabs{x}{\tau}{e_2}} : \tarrow{\tau}{\tau'}}$
\item $\inferrule*{\Delta; \Gamma \vdash \lreq{e_1}{e_2} : \forall \alpha\ldotp \tau \and
                   \Delta \vdash \tau'}
                  {\Delta; \Gamma \vdash \lreq{e_1[\tau']}{e_2[\tau']} : \subst{\tau}{\tau'}{\alpha}}$
\end{enumerate}
The rule for \texttt{if} has been omitted here.
Further notice, some of the lemmas are more general than what we actually need.
Take for instance the compatibility lemma for expression application.
To prove the fundamental property, we really just needed to have the same expressions on both sides of the equivalence.
However, the generalized version can also be used to prove that the logical relation is sound with respect to contextual equivalence.

We prove the compatibility lemma for type application here and leave the remaining proofs to the reader.
To prove the compatibility lemma for type application, we are going to need the following lemma
\begin{lemma}[Compositionality]
  \label{lem:system-f-compositionality}
  If
\begin{itemize}\setlength\itemsep{0em} \renewcommand\labelitemi{--}
\item $\Delta \vdash \tau'$
\item $\Delta, \alpha \vdash \tau$
\item $\rho \in \dpred{\Delta}$
\item $R=\vprep{\tau'}$
\end{itemize}
 then
\[
  \vprep{\subst{\tau}{\tau'}{\alpha}} = \vprep[\extsub{\rho}{\alpha}{(\rho_1(\tau'),\rho_2(\tau'),R)}]{\tau} \qedhere
\]
\end{lemma}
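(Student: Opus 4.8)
The plan is to prove the equality by \emph{structural induction on the type $\tau$} (the type satisfying $\Delta,\alpha \vdash \tau$), reading the statement as universally quantified over $\Delta$, $\tau'$, $\rho$, and $R$ so that the induction hypothesis may be invoked under an enlarged type context and relational substitution. Before starting I would record two routine auxiliary facts that make the two sides line up syntactically. Write $\rho' = \extsub{\rho}{\alpha}{(\rho_1(\tau'),\rho_2(\tau'),R)}$. First, a \emph{substitution-composition} fact for ordinary type substitutions: since $\rho'_1 = \extsub{\rho_1}{\alpha}{\rho_1(\tau')}$, $\rho'_2 = \extsub{\rho_2}{\alpha}{\rho_2(\tau')}$, and $\rho_1(\tau'),\rho_2(\tau')$ are closed types (as $\rho \in \dpred{\Delta}$), we have $\rho_i(\subst{\sigma}{\tau'}{\alpha}) = \rho'_i(\sigma)$ for every $\sigma$ with $\Delta,\alpha \vdash \sigma$; this makes the implicit closed-well-typedness side conditions attached to every value and expression interpretation — and the type annotations on $\lambda$-binders in the function clause — agree on the two sides. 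Second, a \emph{weakening} fact: if $\beta$ does not occur free in $\sigma$, then $\vprep[\extsub{\rho}{\beta}{(\sigma_1,\sigma_2,R_0)}]{\sigma} = \vprep{\sigma}$, and similarly for $\eprep{\sigma}$; this will keep the hypothesis $R = \vprep{\tau'}$ stable when extra binders are introduced.

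Then I would dispatch the cases. For $\tau = \bool$ both sides are literally $\{(\true,\true),(\false,\false)\}$ and do not depend on the substitution. For $\tau = \alpha$ the left side is $\vprep{\subst{\alpha}{\tau'}{\alpha}} = \vprep{\tau'}$, the right side is $\rho'_R(\alpha) = R$, and $R = \vprep{\tau'}$ by hypothesis. For $\tau = \beta$ with $\beta \neq \alpha$, the substitution does nothing and $\rho'_R(\beta) = \rho_R(\beta)$, so both sides are $\vprep{\beta}$. For $\tau = \tarrow{\tau_1}{\tau_2}$ I would unfold the definition of the value interpretation on both sides: the $\lambda$-annotations match by substitution-composition, the quantification over argument pairs ranges over the same set by the induction hypothesis at $\tau_1$, and the body-membership condition ``$\in \eprep{\tau_2}$'' agrees by the induction hypothesis at $\tau_2$ — lifted from $\curly{V}$ to $\curly{E}$ using that $\eprep{\sigma}$ is entirely determined by $\vprep{\sigma}$ together with the closed-well-typedness conditions, which again match by substitution-composition.

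The \textbf{main obstacle} is the universal case, $\tau = \forall \beta.\tau_0$. Here I would first $\alpha$-rename the bound variable so that $\beta$ is fresh (in particular $\beta \neq \alpha$, $\beta \notin \dom(\rho)$, and $\beta$ not free in $\tau'$), so that $\subst{(\forall\beta.\tau_0)}{\tau'}{\alpha} = \forall\beta.\subst{\tau_0}{\tau'}{\alpha}$. Unfolding the value interpretation of the universal type on both sides reduces the goal to showing, for all closed $\tau_1,\tau_2$ and all $R_0 \in \Rel[\tau_1,\tau_2]$, the equality
\[
  \eprep[\extsub{\rho}{\beta}{(\tau_1,\tau_2,R_0)}]{\subst{\tau_0}{\tau'}{\alpha}}
    = \eprep[\extsub{\rho'}{\beta}{(\tau_1,\tau_2,R_0)}]{\tau_0}.
\]
I would get this by applying the induction hypothesis at $\tau_0$ with the enlarged relational substitution $\extsub{\rho}{\beta}{(\tau_1,\tau_2,R_0)} \in \dpred{\Delta,\beta}$ and type context $\Delta,\beta$ (reordering $\Delta,\alpha,\beta$ versus $\Delta,\beta,\alpha$ in the well-formedness judgement, which is harmless), checking the two side conditions: first, the hypothesis $R = \vprep[\extsub{\rho}{\beta}{(\tau_1,\tau_2,R_0)}]{\tau'}$ still holds by the weakening fact since $\beta$ is not free in $\tau'$; and second, the relational substitution produced by the induction hypothesis, namely $\extsub{(\extsub{\rho}{\beta}{(\tau_1,\tau_2,R_0)})}{\alpha}{(\rho_1(\tau'),\rho_2(\tau'),R)}$, is literally $\extsub{\rho'}{\beta}{(\tau_1,\tau_2,R_0)}$, because the two updates are at the distinct variables $\alpha \neq \beta$ and the entry installed at $\alpha$ is unaffected by the $\beta$-update (again since $\beta$ is not free in $\tau'$). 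With both side conditions in hand the universal case closes. Everything here is bookkeeping; the only genuinely delicate point is getting this $\alpha$-renaming-and-weakening interaction exactly right.
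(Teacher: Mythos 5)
Your proposal is correct, and it is in fact more than the paper provides: the paper states Lemma~\ref{lem:system-f-compositionality} but omits its proof entirely, offering only the remark that one must check $\vprep{\tau'} \in \Rel[\rho_1(\tau'),\rho_2(\tau')]$. Your structural induction on $\tau$ (with $\Delta$, $\tau'$, $\rho$, $R$ generalized) is the standard argument that fills this gap, and the case analysis — the base cases, the variable cases where $R=\vprep{\tau'}$ is consumed, the arrow case via substitution composition, and the $\forall$ case via $\alpha$-renaming, weakening, and commuting the two updates to $\rho$ — is the right decomposition with the genuinely delicate points correctly identified. The one item I would ask you to surface explicitly among your ``routine auxiliary facts'' is precisely the fact the paper does single out: that $\vprep{\tau'} \in \Rel[\rho_1(\tau'),\rho_2(\tau')]$ (and more generally $\vprep{\sigma} \in \Rel[\rho_1(\sigma),\rho_2(\sigma)]$ for any well-formed $\sigma$), since without it $\extsub{\rho}{\alpha}{(\rho_1(\tau'),\rho_2(\tau'),R)}$ is not an element of $\dpred{\Delta,\alpha}$ and the right-hand side of the lemma is not even a legitimate instance of the interpretation; the same check is needed in your universal case when you extend by $(\tau_1,\tau_2,R_0)$, though there it is supplied directly by the quantifier $R_0 \in \Rel[\tau_1,\tau_2]$ in the definition of $\vprep{\forall\beta\ldotp\tau_0}$.
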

The lemma says: syntactically substituting some type for $\alpha$ in $\tau$ and then interpreting it is the same as semantically substituting the type for $\alpha$.
To prove this lemma, we would need to show $\vprep{\tau} \in \Rel[\rho_1(\tau),\rho_2(\tau)]$ which is fairly easy given how we have defined our value interpretation.
\begin{proof}[Proof. (Compatibility, Lemma 6)]
What we want to show is 
\[
  \inferrule*{\Delta; \Gamma \vdash \lreq{e_1}{e_2} : \forall \alpha\ldotp \tau \and
              \Delta \vdash \tau'}
             {\Delta; \Gamma \vdash \lreq{e_1[\tau']}{e_2[\tau']} : \subst{\tau}{\tau'}{\alpha}}
\]
So we assume 
\begin{enumerate}[label=(\arabic*)]
\item \label{item:comp-ass-one} $\Delta; \Gamma \vdash \lreq{e_1}{e_2} : \forall \alpha\ldotp \tau$
\item \label{item:comp-ass-two} $\Delta \vdash \tau'$
\end{enumerate}
According to our definition of the logical relation, we need to show three things:

\begin{enumerate}[label=\roman*.]
\item $\Delta ; \Gamma \vdash e_1[\tau'] : \subst{\tau}{\tau'}{\alpha}$
\item $\Delta ; \Gamma \vdash e_2[\tau'] : \subst{\tau}{\tau'}{\alpha}$ 
\item \label{item:comp-pb} $\forall \rho \in \dpred{\Delta}\ldotp \forall \gamma \in \gprep{\Gamma}\ldotp (\rho_1(\gamma_1(e_1[\tau'])),\rho_2(\gamma_2(e_2[\tau']))) \in \eprep{\subst{\tau}{\tau'}{\alpha}}$
   \end{enumerate}
   The two first follows from the well-typedness part of \ref{item:comp-ass-one} together with \ref{item:comp-ass-two} and the appropriate typing rule.
   So it only remains to show \ref{item:comp-pb}

Suppose we have $\rho \in \dpred{\Delta}$ and $\gamma \in \gprep{\Gamma}$. We then need to show:
\[
  (\rho_1(\gamma_1(e_1[\tau'])),\rho_2(\gamma_2(e_2[\tau']))) \in \eprep{\subst{\tau}{\tau'}{\alpha}}
\]
By the definition of the $\curly{E}$-relation, we need to show that the two terms step to two related values.
We keep this goal in mind and turn our attention to \ref{item:comp-ass-one}. By definition it gives us:
\[
\forall \rho \in \dpred{\Delta}\ldotp \forall \gamma \in \gprep{\Gamma}\ldotp (\rho_1(\gamma_1(e_1)),\rho_2(\gamma_2(e_2))) \in \eprep{\forall \alpha \ldotp \tau}
\]
If we instantiate this with our $\rho$ and $\gamma$, then we get 
\[
  (\rho_1(\gamma_1(e_1)),\rho_2(\gamma_2(e_2))) \in \eprep{\forall \alpha \ldotp \tau}
\]
which means that $e_1$ and $e_2$ evaluate to some value $v_1$ and $v_2$ where $(v_1,v_2) \in \vprep{\forall \alpha\ldotp \tau}$.
This means that $v_1$ and $v_2$ must be of type of type $\forall \alpha\ldotp \tau$.
From this, we know that $v_1$ and $v_2$ are type abstractions, so there must exist $e_1'$ and $e_2'$ such that $v_1 = \tLabs{e_1'}$ and $v_2 = \tLabs{e_2'}$.
We can now instantiate $(v_1,v_2) \in \vprep{\forall \alpha\ldotp \tau}$ with two types and a relation.
We choose $\rho_1(\tau')$ and $\rho_2(\tau')$ as the two types for the instantiation and $\vprep{\tau'}$ as the relation\footnote{Here we use $\vprep{\tau} \in \Rel[\rho_1(\tau),\rho_2(\tau)]$ to justify using the value interpretation as our relation.}.
This gives us
\[
  (\subst{e_1'}{\rho_1(\tau')}{\alpha},\subst{e_2'}{\rho_2(\tau')}{\alpha}) \in \eprep[\extsub{\rho}{\alpha}{(\rho_1(\tau'),\rho_2(\tau'),\vprep{\tau'})}]{\tau}
\]
For convenience, we write $\rho' = \extsub{\rho}{\alpha}{(\rho_1(\tau'),\rho_2(\tau'),\vprep{\tau'})}$.
By definition of the $\curly{E}$-relation, we know that $\subst{e_1'}{\rho_1(\tau)}{\alpha}$ and $\subst{e_2'}{\rho_2(\tau)}{\alpha}$ evaluate to some values $v_{1_f}$ and $v_{2_f}$, respectively, where $(v_{1_f},v_{2_f}) \in \vprep[\rho']{\tau}$.

Let us take a step back and see what we have done. We have argued that the following evaluation takes place
\begin{align*}
  \rho_i (\gamma_i(e_i)) [\rho_i(\tau')] & \evaltos (\tLabs{e_i'})[\rho_i(\tau')] \\
                      & \evalto \subst{e_i'}{\rho_i(\tau')}{\alpha} \\
                      & \evaltos v_{i_f}
\end{align*}
for $i=1,2$.
The single step in the middle is justified by the type application reduction.
The remaining steps are justified in our proof above.
If we further note that $\rho_i (\gamma_i (e_i [\tau'])) \equiv \rho_i (\gamma_i (e_i))[\rho_i(\tau')]$, then we have shown that the two expressions from our goal in fact do evaluate to two values, and they are related.
More precisely we have:
\[
  (v_{1_f},v_{2_f}) \in \vprep[\rho']{\tau}
\]
but that is not exactly what we wanted them to be related under.
We are, however, in luck and can apply the compositionality lemma (Lemma~\ref{lem:system-f-compositionality}) to obtain
\[
  (v_{1_f},v_{2_f}) \in \vprep{\subst{\tau}{\tau'}{\alpha}}
\]
which means that they are related under the relation we needed.
\end{proof}

We call theorems that follows as a consequence of parametricity for free theorems.
The theorems from Section~\ref{subsec:free-thm} are examples of free theorems.
Next we will prove Theorem~\ref{thm:free-thm-identity}: All expressions of type $\forall \alpha\ldotp \tarrow{\alpha}{\alpha}$ must be the identity function.
The theorem even says that any function of this type will terminate. This is, however, trivial as System-F is a terminating language\footnote{For more on this see \emph{Types and Programming Languages} by Benjamin Pierce\citep{Pierce:types-and-pl}.}.
In a non-terminating language such as System F with recursive types, we would state a weaker theorem where the expression would only have to be the identity function when it terminates.
That is, divergence would also be acceptable behaviour.
\begin{proof}[Proof of Theorem~\ref{thm:free-thm-identity}]
\newcommand{\aaa}{\ensuremath{\forall \alpha\ldotp \tarrow{\alpha}{\alpha}}}
\newcommand{\mt}{\ensuremath{\emptyset}}
\newcommand{\env}{\ensuremath{\extsub{ }{\alpha}{(\tau,\tau,R)}}}
\newcommand{\taa}{\ensuremath{\tarrow{\alpha}{\alpha}}}
From the fundamental property and the well-typedness of $e$, we know $\mtenv \vdash \lreq{e}{e} : \aaa$.
By definition this gives us
\[
\forall \rho \in \dpred{\Delta}\ldotp \forall \gamma \in \gprep{\Gamma}\ldotp (\rho_1(\gamma_1(e)),\rho_2(\gamma_2(e))) \in \eprep{\aaa}
\]
We instantiate this with an empty $\rho$ and an empty $\gamma$ to get $(e,e) \in \eprep[\mt]{\aaa}$.
By the definition of $\curly{E}$, we know that $e$ evaluates to some value $F$ and $(F,F) \in \vprep[\mt]{\aaa}$.
As $F$ is a value of type $\aaa{}$, we know $F=\tLabs{e_1}$ for some $e_1$.
Now instantiate $(F,F) \in \vprep[\mt]{\aaa}$ with the type $\tau$ twice and the relation $R=\{(v,v)\}$ (Note: $R \in \Rel[\tau,\tau]$).
This gives us $(\subst{e_1}{\tau}{\alpha},\subst{e_1}{\tau}{\alpha})\in \eprep[\env]{\taa}$.

Let us take a quick intermission from the proof.
To a large extend, the proof of a free theorem is just unfolding of definitions.
However in order for the definitions to line up, it is important that the relation $R$ is picked correctly which makes this an important non-trivial part of the proof.
Before we chose the relation in this proof, we picked two types based on the theorem we want to show.
In the theorem, we instantiate $e$ with $\tau$, so we picked $\tau$ for the proof.
Likewise with $R$: in the theorem $v$ is the argument for the function with the domain $\alpha$, so we picked the singleton relation $\{(v,v)\}$.
Picking the correct relation requires some work, but the statement can guide the decision.
Finally, if you are not sure what to pick for $R$ try something and see if it works out.
If it does not work, then you can simply pick a new relation and you may even have learned something to guide your next pick.
Intermission over.

From $(\subst{e_1}{\tau}{\alpha},\subst{e_1}{\tau}{\alpha})\in \eprep[\env]{\taa}$, we know that $\subst{e_1}{\tau}{\alpha}$ evaluates to some value $g$ and $(g,g)\in \vprep[\env]{\taa}$.
From the type of $g$, we know that it must be a $\lambda$-abstraction, so $g=\tlabs{x}{\tau}{e_2}$ for some expression $e_2$.
Now instantiate $(g,g)\in \vprep[\env]{\taa}$ with $(v,v) \in \vprep[\env]{\alpha}$ to get $(\subst{e_2}{v}{x},\subst{e_2}{v}{x}) \in \eprep[\env]{\alpha}$.
From this we know that $\subst{e_2}{v}{x}$ steps to some value $v_f$ and $(v_f,v_f) \in \vprep[\env]{\alpha}$.
We have that $\vprep[\env]{\alpha} \equiv R$ so $(v_f,v_f) \in R$ which mean that $v_f = v$ as $(v,v)$ is the only pair in $R$.

Now let us take a step back and consider what we have shown above.
\begin{align*}
  e[\tau] \; v & \evaltos F [\tau] \; v \\
               & \equiv (\tLabs{e_1}) [\tau] \; v \\
               & \evalto (\subst{e_1}{\tau}{\alpha}) \; v \\
               & \evaltos g \; v\\
               & \equiv (\tlabs{x}{\tau}{e_2}) \; v \\
               & \evalto \subst{e_2}{v}{x} \\
               & \evaltos v_f \\
               & \equiv v
\end{align*}
First we argued that $e[\tau]$ steps to some $F$ and that $F$ was a type abstraction, $\tLabs{e_1}$.
Then we performed the type application to get $\subst{e_1}{\tau}{\alpha}$.
We then argued that this steps to some $g$ of the form $\tlabs{x}{\tau}{e_2}$ which further allowed us to do a $\beta$-reduction to obtain $\subst{e_2}{v}{x}$.
We then argued that this reduced to $v_f$ which was the same as $v$.
In summation we argued $e[\tau] \; v \evaltos v$ which is the result we wanted.
\end{proof}
\subsection{Exercises}
\begin{enumerate}
\item Prove the following free theorem:
  \begin{theorem}
    If $\mtenv; \mtenv \vdash e : \forall \alpha\ldotp (\tarrow{(\tarrow{\tau}{\alpha})}{\alpha})$ and 
       $\mtenv; \mtenv \vdash k : \tarrow{\tau}{\tau_k}$, then
    \[
      \mtenv; \mtenv \vdash \lreq{e[\tau_k] \; k}{k(e[\tau] \; \tlabs{x}{\tau}{x})} : \tau_k
      \qedhere
    \]
  \end{theorem}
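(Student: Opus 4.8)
The plan is to derive this as a \emph{free theorem} from the Fundamental Property (Theorem~\ref{thm:system-f-ftlr}), in the style of the proof of Theorem~\ref{thm:free-thm-identity}: almost everything is unfolding of definitions, and the one genuinely creative step is the choice of the relation $R$ with which we instantiate the interpretation of $\forall \alpha\ldotp \tarrow{(\tarrow{\tau}{\alpha})}{\alpha}$. Both sides are closed, so after discharging the routine well-typedness obligations ($\mtenv;\mtenv \vdash e[\tau_k]\;k : \tau_k$ and $\mtenv;\mtenv \vdash k(e[\tau]\;\tlabs{x}{\tau}{x}) : \tau_k$, which follow from the typing rules for type and term application), the goal reduces to
\[
  (e[\tau_k]\;k,\; k(e[\tau]\;\tlabs{x}{\tau}{x})) \in \eprep[\emptyset]{\tau_k},
\]
i.e.\ that both expressions evaluate to values related in $\vprep[\emptyset]{\tau_k}$.

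First I would invoke the Fundamental Property on $e$ and unfold with $\rho = \gamma = \emptyset$: this yields $e \evaltos F$ for some value $F = \tLabs{e'}$ with $(F,F) \in \vprep[\emptyset]{\forall \alpha\ldotp \tarrow{(\tarrow{\tau}{\alpha})}{\alpha}}$. I would also invoke it on $k$: this yields $k \evaltos k_v$ with, by canonical forms at arrow type, $k_v = \tlabs{x}{\tau}{e_k}$, and $(k_v,k_v) \in \vprep[\emptyset]{\tarrow{\tau}{\tau_k}}$ — this last fact is exactly what carries $\vprep[\emptyset]{\tau}$-related arguments to $\vprep[\emptyset]{\tau_k}$-related results.

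Now the key move: instantiate $(F,F)$ at the two types $\tau_k$ (left) and $\tau$ (right) and at the relation
\[
  R = \{(v',w) \mid \mtenv;\mtenv \vdash w : \tau \wedge \exists v_k\ldotp k\;w \evaltos v_k \wedge (v',v_k) \in \vprep[\emptyset]{\tau_k}\},
\]
which lies in $\Rel[\tau_k,\tau]$ (using type preservation and that $\tau,\tau_k$ are closed). Writing $\rho' = \extsub{\emptyset}{\alpha}{(\tau_k,\tau,R)}$, unfolding the $\forall$-clause and then the arrow-clause of the value interpretation produces values $g_1 = \tlabs{y}{\tarrow{\tau}{\tau_k}}{e_1''}$ and $g_2 = \tlabs{y}{\tarrow{\tau}{\tau}}{e_2''}$ with $\subst{e'}{\tau_k}{\alpha} \evaltos g_1$ and $\subst{e'}{\tau}{\alpha} \evaltos g_2$, such that related continuations go to related results. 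I then supply the pair $(k_v,\; \tlabs{x}{\tau}{x})$: the obligation $(k_v, \tlabs{x}{\tau}{x}) \in \vprep[\rho']{\tarrow{\tau}{\alpha}}$ unfolds, for each $(w_1,w_2) \in \vprep[\emptyset]{\tau}$, to $(\subst{e_k}{w_1}{x},\, w_2) \in \eprep[\rho']{\alpha}$, i.e.\ the value of $k\;w_1$ must be $R$-related to $w_2$ — and this is precisely what the definition of $R$ gives once the Fundamental Property for $k$ is applied to $(w_1,w_2)$ to learn that $k\;w_1$ and $k\;w_2$ land in $\vprep[\emptyset]{\tau_k}$. Feeding this in yields $\subst{e_1''}{k_v}{y} \evaltos v_L$ and $\subst{e_2''}{\tlabs{x}{\tau}{x}}{y} \evaltos v_R'$ with $(v_L, v_R') \in R$.

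It remains to line the evaluations up with the statement. On the left, $e[\tau_k]\;k \evaltos (\subst{e'}{\tau_k}{\alpha})\;k \evaltos g_1\;k_v \evalto \subst{e_1''}{k_v}{y} \evaltos v_L$. On the right, $e[\tau]\;\tlabs{x}{\tau}{x} \evaltos g_2\;\tlabs{x}{\tau}{x} \evalto \subst{e_2''}{\tlabs{x}{\tau}{x}}{y} \evaltos v_R'$, so $k(e[\tau]\;\tlabs{x}{\tau}{x}) \evaltos v_R$ where $v_R$ is the (unique) value of $k\;v_R'$. From $(v_L, v_R') \in R$ and determinism of evaluation, the witness $v_k$ in the definition of $R$ must equal $v_R$, hence $(v_L, v_R) \in \vprep[\emptyset]{\tau_k}$ — exactly the goal. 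I expect the only real obstacle to be the one the ``intermission'' in the proof of Theorem~\ref{thm:free-thm-identity} warns about: getting $R$ right. The naive candidate, the graph $\{(v',w) \mid k\;w \evaltos v'\}$, does not work, because logically related arguments need not be syntactically equal and so $k$ need not send them to the same value; $R$ must instead be closed under $\vprep[\emptyset]{\tau_k}$-relatedness of outputs, which is what forces the auxiliary use of the Fundamental Property on $k$. Everything else — the well-typedness side conditions, the canonical-forms steps, and checking $R \in \Rel[\tau_k,\tau]$ — is bookkeeping.
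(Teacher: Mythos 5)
Your proof is correct and follows exactly the template the paper establishes in its proof of Theorem~\ref{thm:free-thm-identity} (the paper leaves this statement as an exercise, so there is no separate proof of it to compare against): discharge the typing side conditions, unfold the Fundamental Property and the value/expression interpretations, and make the one creative choice of $R$. Your $R$ --- the graph of $k$ post-composed with $\vprep[\emptyset]{\tau_k}$-relatedness, placed in $\Rel[\tau_k,\tau]$ --- is the intended one, and your closing remark correctly identifies why the bare graph of $k$ would fail for merely logically related (rather than equal) arguments.
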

  This theorem is a simplified version of one found in \emph{Theorems For Free} by Philip Wadler\citep{Wadler:free-theorems}. 
\end{enumerate}
\section{Existential types}
In this section, we extend \STLC{} with existential types and show how to make a logical relation for that language.

An existential type is reminiscent of a Java interface.
It describes some functionality but leaves out the implementation.
An implementation of an existential type must provide the functions and constants specified in the type.
This means that there can be many different implementations of the same existential type.
When an existential type is used, the user interacts with the exposed constants and functions without knowing the actual implementation.

Take for example a stack.
We would expect a stack to have the following functions:
\begin{description}[font=\ttfamily]
  \item[mk] creates a new stack.
  \item[push] puts an element on the top of the stack.
    It takes a stack and an element and returns the resulting stack.
  \item[pop] removes the top element of the stack.
    It takes a stack and returns the new stack along with the element that was popped from it.
\end{description}
An interface would define the above signature which you then would go off and implement\footnote{There is a famous paper called \emph{Abstract Data Types Have Existential Type} from '88 by Mitchell and Plotkin~\citep{Mitchell:abstract-types}.
  The title says it all.}.
If we wanted to write an interface for a stack, it would look like this (this is meant to be suggestive, so it is in a non-formal notation):
\begin{align*}
  \Istack = \exists \alpha\ldotp \langle & \mk: \tarrow{1}{\alpha}, \\
                                 & \push: \tarrow{\alpha \times \int}{\alpha}, \\
                                 & \pop: \tarrow{\alpha}{\alpha \times \int} \rangle
\end{align*}
where $\alpha$ is the type of the stack used in the actual implementation.
This means that $\mk$ is a function that takes unit and returns a stack.
The $\push$ function takes a stack and an element and returns a stack.
Finally, the $\pop$ function takes a stack and returns a stack and an element.
The $\Istack$ type all the $\alpha$'s which means that a client cannot see the actual type of the stack, so they cannot know how it is actually implemented.

We formally write existential types in a similar fashion to how we wrote universal types:
\[
  \exists \alpha\ldotp \tau
\]
Here $\tau$ is the same as the record in the stack example.
The interface is just a type, so now we need to define how one implements something of an existential type.
If we were to implement the stack interface, then we would implement a package of functions that are supposed to be used together.
This could look something like (again this is meant to be suggestive):
\begin{align*}
  \epack\; &\tarray[\int], \\
         & \langle \lambda x:\_\ldotp\; \text{...}\;,\\
         & \; \lambda x:\_\ldotp\; \text{...}\;,\\
         & \; \lambda x:\_\ldotp\; \text{...}\; \rangle
\end{align*}
Here $\tarray[\int]$ is the type we want to use for the concrete implementation and the record of functions is the concrete implementation that uses $\tarray[\int]$ to implement a stack.

Existential types hide the implementation from the client, but that does not mean that they are the same.
For instance, a $\Istack$ implementation could use $\int$ for $\alpha$.
In this case $\mk$ could return $0$, $\push$ would throw away the $\int$ argument, and $\pop$ would just return a pair of its $\alpha$ argument, which is an $\int$ argument here.
This implementation clearly doesn't correspond to what we consider a stack, but it has the correct type.
We would expect that all implementations of the interface that correspond to a stack would observationally behave the same.
That is, we would expect them to be contextually equivalent.
We want to use a logical relation to be able to establish contextual equivalence.

\subsection{\STLC{} with Existential Types}
The syntactic additions to the language are \texttt{pack} and \texttt{unpack}:
We briefly introduce the additions to the syntax and semantics:
\begin{align*}
  \tau & ::= \ldots \mid \exists \alpha\ldotp \tau \\
  e & ::= \ldots \mid \pack{\tau}{e}{\exists \alpha\ldotp \tau} \mid \unpack{\alpha}{x}{e}{e} \\
  v & ::= \ldots \mid \pack{\tau}{v}{\exists \alpha\ldotp \tau} \\
  E & ::= \ldots \mid \pack{\tau}{E}{\exists \alpha\ldotp \tau} \mid \unpack{\alpha}{x}{E}{e}
\end{align*}
New packages are constructed with \texttt{pack} which makes it the introductory form for the existential type.
A package consists of a \emph{witness type} and an implementation of the existential type.
Packages are opened by \texttt{unpack} which allows a client to use the components in the package.
This makes \texttt{unpack} the elimination form.
The interaction between the two new constructs is expressed in the following evaluation rule:
\begin{mathpar}
  \inferrule{ }{\unpack{\alpha}{x}{\pack{\tau'}{v}{\exists \alpha\ldotp \tau}}{e} \evalto \subst{\subst{e}{\tau'}{\alpha}}{v}{x}}
\end{mathpar}
The form of the typing judgement remains the same but we add the following two rules:
\begin{mathpar}
  \inferrule*{\Delta; \Gamma \vdash e : \subst{\tau}{\tau'}{\alpha} \and
    \Delta \vdash \tau'}
             {\Delta; \Gamma \vdash \pack{\tau'}{e}{\exists \alpha\ldotp\tau} : \exists \alpha\ldotp\tau}
\and
\inferrule*{\Delta; \Gamma \vdash e_1 : \exists \alpha\ldotp \tau \and
               \Delta,\alpha;\Gamma,x:\tau \vdash e_2 : \tau_2 \and
               \Delta \vdash \tau_2}
             {\Delta; \Gamma \vdash \unpack{\alpha}{x}{e_1}{e_2} : \tau_2}
\end{mathpar}
Intuitively, the typing rule of \texttt{pack} says that provided an implementation of the existential type that implementation has to be well-typed when the witness type is plugged in for $\alpha$.
It tells what type we substitute into.
In the typing rule for \texttt{unpack}, it is important that $\alpha$ is not free in $\tau_2$ which is ensured by $\Delta \vdash \tau_2$.
This makes sure that the package actually hides the witness type.
The witness type can be pulled out of the package within a certain scope using \texttt{unpack}.
If $\alpha$ could be returned from that scope, then it would be exposed to the outer world which would defeat the purpose of hiding it.
The \texttt{unpack}-expression takes out the components of $e_1$ and calls them $\alpha$ and $x$.
The two components can then be used in the body, $e_2$, of the \texttt{unpack}-expression, but they are used with the aliases given to them by \texttt{unpack}.

\subsection{Example}
\label{subsec:exi-example}
For the remainder of this section, we will use the following example:
Take this existential type
\[
  \tau = \exists \alpha\ldotp \alpha \times (\tarrow{\alpha}{\bool})
\]
and the following two expressions that we for now claim have type $\tau$
\begin{align*}
  e_1 & = \pack{\int}{\tuple{1, \tlabs{x}{\int}{x=0}}}{\tau} \\
  e_2 & = \pack{\bool}{\tuple{\true, \tlabs{x}{\bool}{\enot\; x}}}{\tau}
\end{align*}
where $\int$ and $\bool$ the \emph{witness types}, respectively.
We further claim that these two implementations of $\tau$ are equivalent, and our goal in this note is to show this.

We start by verifying that the two expressions are indeed well-typed and has type $\tau$ by constructing their type derivation trees:
\[
  \inferrule*{\inferrule*{\inferrule*{ }
                                     {\mtenv; \mtenv \vdash 1 : \int}
                          \and
                          \inferrule*{\inferrule*{\inferrule*{ }
                                                             {\mtenv; x: \int \vdash x : \int}
                                                  \and
                                                  \inferrule*{ }
                                                             {\mtenv; x: \int \vdash 0 : \int}}
                                                 {\mtenv; x: \int \vdash x = 0 : \bool}}
                                     {\mtenv;\mtenv \vdash \tlabs{x}{\int}{x=0} : \tarrow{\int}{\bool}}}
                         {\mtenv;\mtenv \vdash \tuple{1,\tlabs{x}{\int}{x=0}} : \int \times (\tarrow{\int}{\bool})}
              \and
              \inferrule*{ }
                         {\mtenv \vdash \int}}
             {\mtenv ; \mtenv \vdash \pack{\int}{\tuple{1,\tlabs{x}{\int}{x=0}}}{\tau} : \tau}
\]
\[
  \inferrule*{\inferrule*{\inferrule*{ }
                                     {\mtenv; \mtenv \vdash \true : \bool}
                          \and
                          \inferrule*{\inferrule*{\inferrule*{ }
                                                             {\mtenv; x: \bool \vdash x : \bool}}
                                                 {\mtenv; x: \bool \vdash \enot\; x : \bool}}
                                     {\mtenv;\mtenv \vdash \tlabs{x}{\bool}{\enot \; x} : \tarrow{\bool}{\bool}}}
                         {\mtenv;\mtenv \vdash \tuple{\true, \tlabs{x}{\bool}{\enot\; x}} : \bool \times (\tarrow{\bool}{\bool})}
              \and
              \inferrule*{ }
                         {\mtenv \vdash \bool}}
             {\mtenv ; \mtenv \vdash \pack{\bool}{\tuple{\true, \tlabs{x}{\bool}{\enot\; x}}}{\tau} : \tau}
\]

To use a package, a client must open it with \texttt{unpack} first.
For instance, one could attempt the following unpack:
\[
  \texttt{unpack}\; \tuple{\alpha,p} = e_1 \;\texttt{in}\;  (\snd \; p)\;5
\]
This use of \texttt{unpack} does, however, expose the inner workings of $e_1$.
In particular, it exposes that $e_1$ uses integers as an internal representation.
The existential type was supposed to hide this information making the implementation opaque, but that is clearly not the case here.
The \texttt{unpack}-expression is even syntactically well-formed, but it is not well-typed.
The witness type $\int$ in $e_1$ is not exposed by \texttt{unpack} in the typing rule.
This means that the body of an \texttt{unpack} should use the witness type in an abstract way, i.e.\ it should use $\alpha$.
When we type the \texttt{unpack}-expression, $(\snd \; p)$ takes an argument of type $\alpha$; but in the above expression, we give it something of type $\int$ (if you don't see the issue, try to construct the type derivation tree).
This means that we only can give provide $(\fst \; p)$ as an argument, i.e.\ 
\[
\begin{array}{l}
  \eunpack \tuple{\alpha,p} = e_1 \ein \\
  \quad\color{red} \cancel{(\snd \; p) \; 5} \\
  \quad(\snd \; p)\;(\fst \; p)
\end{array}
\]
It should come as no surprise that $e_2$ only can be used in the exact same way:
\[
\begin{array}{l}
  \eunpack \tuple{\alpha,p} = e_2 \ein \\
  \quad(\snd \; p)\;(\fst \; p)
\end{array}
\]
It is also not type check if we return $(\fst \; p)$ (or $(\snd \; p)$ for that matter) as the body of an \texttt{unpack} cannot have the $\alpha$ free in it.

We can now informally argue why $e_1$ and $e_2$ are observationally equivalent.
In both expressions, there is only one value of type $\alpha$.
In $e_1$ it is $1$, and in $e_2$ it is $\true$.
This means that they are going to be the only related values, i.e.\ we will define $R=\{(1,\true)\}$
As we have informally argued, they are the only values the two functions can take as arguments.
This means that we can easily find the values that they can possibly return:
In $e_1$ it is $(\tlabs{x}{x=0})\; 1 \evalto \false$, and in $e_2$ it is $(\tlabs{x}{\enot x })\; true \evalto \false$.
The only value that is ever exposed from the package is $\false$.
If this claim is true, then it is impossible for a client to distinguish the two packages from each other.

\subsection{Logical Relation}
To formally argue that $e_1$ and $e_2$ from the previous section are contextually equivalent, we need a logical relation.
To this end, we extend our previous logical relation, so it also interprets $\exists \alpha\ldotp \tau$.
The values we relate are of the form $(\pack{\tau_1}{v_1}{\exists\alpha\ldotp\tau}, \; \pack{\tau_2}{v_2}{\exists\alpha\ldotp\tau})$ and as always our first instinct should be to look at the elimination form, so we want to consider $\unpack{\alpha}{x}{\pack{\tau_i}{v_i}{\exists\alpha\ldotp\tau}}{e_i}$ for $i=1,2$.
Now it would be tempting to relate the two bodies, but we get a similar issue to what we had for universal types.
If we relate the two bodies, then what type should we relate them under?
The type we get might be larger than the one we are interpreting which gives us a well-foundedness problem.
So by analogy we do not require that the two unpack expressions have related bodies.
Instead we relate $v_1$ and $v_2$ under some relation:
\[  \vprep{\exists\alpha\ldotp\tau} = \left\{
  \!\begin{multlined}
    \left(\pack{\rho_1(\tau_1)}{v_1}{\rho_1(\exists\alpha\ldotp\tau)},\pack{\rho_2(\tau_2)}{v_2}{\rho_2(\exists\alpha\ldotp\tau)}\right) \mid\\
  \exists R \in \Rel[\rho_1(\tau_1),\rho_2(\tau_2)]\ldotp (v_1,v_2) \in \vprep[\extsub{\rho}{\alpha}{(\rho_1(\tau_1),\rho_2(\tau_2),R)}]{\tau}
\end{multlined}
  \right\}
\]
The relation turns out to be somewhat dual to the one for universal types. Instead of saying $\forall \tau_1,\tau_2,R$, we say $\exists \tau_1,\tau_2,R$, but as we get $\tau_1$ and $\tau_2$ directly from the values, we omit them in the definition. We also relate the two values at $\tau$ and extend the relational substitution with the types we have for $\alpha$. Notice that we use $\rho$ to close of the type variables in the two values we related.

With this extension to the value interpretation, we are ready to show that $e_1$ and $e_2$ are logically related. We reuse the definition of logical equivalence we defined previously. What we wish to show is:

\begin{theorem}
\[
  \mtenv; \mtenv \vdash \lreq{e_1}{e_2} : \exists \alpha\ldotp \alpha \times (\tarrow{\alpha}{\bool})
  \qedhere
\]
\end{theorem}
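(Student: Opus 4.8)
The plan is to strip the definitions down to a small concrete check whose only non-mechanical ingredient is the choice of relation for the two witness types. Since both contexts are empty, the only $\rho \in \dpred{\mtenv}$ and $\gamma \in \gprep{\mtenv}$ are the empty ones, so unfolding $\mtenv;\mtenv \vdash \lreq{e_1}{e_2} : \exists\alpha\ldotp\alpha\times(\tarrow{\alpha}{\bool})$ leaves exactly two obligations: the well-typedness of $e_1$ and $e_2$ at this type, which are precisely the derivations already drawn in Section~\ref{subsec:exi-example}; and $(e_1,e_2) \in \eprep[\emptyset]{\exists\alpha\ldotp \alpha\times(\tarrow{\alpha}{\bool})}$. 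For the latter, observe that $e_1$ and $e_2$ are each a \texttt{pack} applied to a value, hence are themselves values; so each reduces to itself in zero steps, and the expression-interpretation obligation reduces to $(e_1,e_2) \in \vprep[\emptyset]{\exists\alpha\ldotp \alpha\times(\tarrow{\alpha}{\bool})}$.

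Next I would unfold the existential clause. Following the informal analysis in Section~\ref{subsec:exi-example}, take $R = \{(1,\true)\}$; since $\int$ and $\bool$ are closed types and $1$, $\true$ are closed values of the respective types, $R \in \Rel[\int,\bool]$. Put $\rho' = \extsub{\emptyset}{\alpha}{(\int,\bool,R)}$. It then remains to prove
\[
  \bigl(\tuple{1,\ \tlabs{x}{\int}{x=0}},\ \tuple{\true,\ \tlabs{x}{\bool}{\enot\; x}}\bigr) \in \vprep[\rho']{\alpha\times(\tarrow{\alpha}{\bool})}.
\]
By the product clause of the value interpretation, this splits into two requirements. The first, $(1,\true) \in \vprep[\rho']{\alpha}$, is immediate since $\vprep[\rho']{\alpha} = \rho'_R(\alpha) = R$. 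The second, $(\tlabs{x}{\int}{x=0},\ \tlabs{x}{\bool}{\enot\; x}) \in \vprep[\rho']{\tarrow{\alpha}{\bool}}$, unfolds by the function clause to the requirement that for every $(v,v') \in \vprep[\rho']{\alpha}$ we have $(\subst{(x=0)}{v}{x},\ \subst{(\enot\; x)}{v'}{x}) \in \eprep[\rho']{\bool}$; and because $\vprep[\rho']{\alpha} = R = \{(1,\true)\}$, the sole instance to discharge is $v = 1$, $v' = \true$, that is $(1=0,\ \enot\; \true) \in \eprep[\rho']{\bool}$. Both sides are well-typed of type $\bool$, both evaluate to $\false$, and $(\false,\false) \in \vprep[\rho']{\bool}$, which closes the argument.

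The substantive step is the choice of $R$: as the proof of Theorem~\ref{thm:free-thm-identity} stresses, the unfolding lines up only if $R$ makes the unique $\alpha$-inhabitant on each side ($1$ on the left, $\true$ on the right) related, and if the two $\tarrow{\alpha}{\bool}$ components then send this related pair to a related pair of booleans --- which works here because both components compute $\false$ on their argument. Everything else is routine bookkeeping: verifying $R \in \Rel[\int,\bool]$, discharging the implicit well-typedness side conditions carried by the value interpretation (all supplied by the derivations of Section~\ref{subsec:exi-example}), checking that the annotations $\rho'_1(\alpha) = \int$ and $\rho'_2(\alpha) = \bool$ on the two $\lambda$-abstractions are the ones the function clause demands, and performing the reductions $1=0 \evalto \false$ and $\enot\; \true \evalto \false$. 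I expect no genuine obstacle beyond getting this choice of $R$ right, which the example in Section~\ref{subsec:exi-example} has essentially already handed us.
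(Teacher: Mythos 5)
Your proposal is correct and follows essentially the same route as the paper's own proof: reduce to membership in the value interpretation of the existential (since both packages are already values), pick $R=\{(1,\true)\}$, and discharge the product and function clauses by the single computation $1=0 \evalto \false$ and $\enot\;\true \evalto \false$. The only differences are cosmetic --- you are slightly more explicit about verifying $R \in \Rel[\int,\bool]$ and the well-typedness side conditions, which the paper leaves implicit.
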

\begin{proof}
  With an empty environment, this amounts to show $(e_1,e_2) \in \eprep[\emptyset]{\exists \alpha \ldotp \alpha \times (\tarrow{\alpha}{\bool})}$.
  To show this, we need to establish that $e_1$ and $e_2$ evaluate to some value and that these two values are related under the same type and relational substitution.
  The expressions $e_1$ and $e_2$ are \texttt{pack}-expressions, so they are already values.
  In other words, it suffices to show $(e_1,e_2) \in \vprep[\emptyset]{\exists \alpha \ldotp \alpha \times (\tarrow{\alpha}{\bool})}$.
  We now need to pick a relation and show that the implementations are related under $\alpha \times (\tarrow{\alpha}{\bool})$ that is
  \[
  (\tuple{1, \tlabs{x}{\int}{x=0}},\tuple{\true, \tlabs{x}{\bool}{\enot\; x}}) \in \vprep[\extsub{ }{\alpha}{(\int,\bool,R)}]{\alpha \times (\tarrow{\alpha}{\bool})}
  \]
  We pick $R=\{(1,\true)\}$ as the relation.
  To two tuples are related if their components are related\footnote{We defined this for logical predicates, but not for logical relations.}.
  In other words, we need to show the following:
  \begin{enumerate}
  \item \label{item:exi-item-one} $(1,\true) \in \vprep[\extsub{\emptyset}{ }{\alpha}{(\int,\bool,R)}]{\alpha}$
  \item \label{item:exi-item-two} $(\tlabs{x}{\int}{x=0},\tlabs{x}{\bool}{\enot\; x}) \in \vprep[\extsub{ }{\alpha}{(\int,\bool,R)}]{\tarrow{\alpha}{\bool}}$
  \end{enumerate}
  Item~\ref{item:exi-item-one} amounts to showing $(1,\true) \in R$ which is true.
  In order to show \ref{item:exi-item-two}, assume $(v_1,v_2) \in \vprep[\extsub{ }{\alpha}{(\int,\bool,R)}]{\alpha}$ which is the same as $(v_1,v_2) \in R$. 
  Due to our choice of $R$, we have $v_1 = 1$ and $v_2 = \true$.
  Now we need to show $(v_1 = 0,\enot\; v_2) \in \eprep[\extsub{ }{\alpha}{(\int,\bool,R)}]{\bool}$.
  Which means that we need to show that the two expressions evaluate to two values related under $\bool$.
  $v_1 = 0$ evaluates to $\false$ as $v_1$ is 1 and $\enot\; v_2$ evaluates to $\false$ as well as $v_2 = \true$.
  It remains to show $(\false,\false) \in \vprep[\extsub{ }{\alpha}{(\int,\bool,R)}]{\bool}$.
  For base types, the value interpretation relates equal values, so this is indeed true.
\end{proof}
\section{Recursive Types and Step Indexing}
\label{sec:rec-types}
Consider the following program from the \emph{untyped} lambda calculus:
\[
  \Omega = (\labs{x}{x \; x}) \; (\labs{x}{x \; x})
\]
The first bit of the evaluation of $\Omega$ is a $\beta$-reduction followed by a substitution at which point we end up with $\Omega$ again.
In other words, the evaluation of $\Omega$ diverges.
Moreover, it is not possible to assign a type to $\Omega$ (again the interested reader may try to verify this by attempting to assign a type).
It can hardly come as a surprise that it cannot be assigned a type, as we previously proved that the simply typed lambda calculus is strongly normalizing, so if we could assign $\Omega$ a type, then it would not diverge.

To type $\Omega$, we need recursive types.
If we are able to type $\Omega$, then we do not have strong normalization (as $\Omega$ is not strongly normalizing).
With recursive types, we can type structures that are inherently inductive such as lists, trees, and streams.
In an ML-like language, a declaration of a tree type would look like this:
\begin{lstlisting}
  type tree = Leaf
            | Node of int * tree * tree
\end{lstlisting}
In Java, we could define a tree class with an int field and two tree fields for the sub trees:
\begin{lstlisting}
  class Tree {
    int value;
    Tree left, right;
  }
\end{lstlisting}
In other words, we can define trees in common programming languages, but we cannot define them in the lambda calculus.
Let us try to find a reasonable definition for recursive types by considering what properties are necessary to define trees.
We want a type that can either be a node or a leaf.
A leaf can be represented by unit (in our trees leafs carry no information), and a node is the product of an int and trees.
We put the two constructs together with the sum type:
\[
  \tree = 1 + (\int * \tree * \tree)
\]
This is what we want, but the definition is not well-founded as it is.
We are defining the type $\tree$, but $\tree$ appears on the right hand side of the definition which makes it circular.
Instead of writing $\tree$, we use a type variable $\alpha$:
\begin{align*}
  \alpha &= 1 + (\int \times \alpha \times \alpha) \\
         &= 1 + (\int \times (\int \times \alpha \times \alpha) \times (\int \times \alpha \times \alpha)) \\
  &\vdots
\end{align*}
All sides of the above equations are equal, and they are all trees.
We could keep going and get an infinite system of equations.
If we keep substituting the definition of $\alpha$ for $\alpha$, we keep getting bigger and bigger types.
All of the types are trees, and all of them are finite.
If we take the limit of this process, then we end up with an infinite tree, and that tree is the tree we conceptually have in our minds.
So what we need is the fixed point of the above equation.

Let us define a recursive function for which we want to find a fixed point:
\[
  F = \lambda \alpha : type \ldotp1 + (\int \times \alpha \times \alpha)
\]
We want the fixed point which by definition is $t$ such that
\[
  t = F(t)
\]
So we want
\[
  tree = F(tree)
\]
The fixed point of this function is written:
\[
  \mu \alpha\ldotp F(\alpha)
\]
Here $\mu$ is a fixed-point type constructor. As the above is the fixed point by definition, it should be equal to $F$ applied to it:
\[
  \mu \alpha\ldotp F(\alpha) = F(\mu \alpha\ldotp F(\alpha))
\]
Now let us make this look a bit more like types by substituting $F(\alpha)$ for $\tau$. 
\[
  \mu \alpha\ldotp \tau = F(\mu \alpha\ldotp \tau) 
\]
The right hand side is really just $\tau$ with $\mu \alpha\ldotp \tau$ substituted with $\tau$:
\[
  \mu \alpha\ldotp \tau = \tau[\mu \alpha\ldotp \tau / \alpha]
\]
We introduce the recursive type $\mu \alpha\ldotp \tau$ to our language.
With the recursive type, we can shift our view to an expanded version $\tau[\mu \alpha\ldotp \tau / \alpha]$ and contract back to the original type.
Expanding the type is called $\unfold$ and contracting is is called $\fold$.
\[
\begin{tikzpicture}[->,>=stealth',shorten >=1pt,auto,node distance=3cm,
  thick,main node/.style={rectangle}]

  \node[main node] (1) {$\mu\alpha\ldotp \tau$};
  \node[main node] (2) [right of=1] {$\tau[\mu \alpha\ldotp \tau / \alpha]$};

  \path[every node/.style={font=\sffamily\small}]
    (1) edge [bend left] node [above] {$\unfold$} (2)
    (2) edge [bend left] node [below] {$\fold$} (1);
\end{tikzpicture}
\]
With recursive types in hand, we can now define our tree type:
\[
  \tree \eqdef \mu \alpha\ldotp 1 + (\int \times \alpha \times \alpha)
\]
When we work with a recursive type, we need to be able to open the type and use whatever is under the $\mu$.
Say we have an expression $e$ of type $\tree$.
We need to be able to tell whether it is a leaf or a node.
To do so, we need to peek under the $\mu$ which is done with an $\unfold$.
The $\unfold$ gives us an expression where the type has been unfolded by one level.
That is, the same expression but where $\alpha$ has been substituted with the definition of $\tree$ and the outer $\mu\alpha\ldotp$ has been removed.
With the outer $\mu\alpha\ldotp$ gone, we can match on the sum type to find out whether it is a leaf or a node.
When we are done working with the type, we can fold it to get the original tree type.
\begin{center}
  \begin{tikzpicture}[->,>=stealth',shorten >=1pt,auto,node distance=1.5cm,
      thick,main node/.style={rectangle}]

    \node[main node] (1) {$tree=\mu\alpha\ldotp 1+(\int \times \alpha \times \alpha)$};
    \node[main node] (2) [below of=1] {$1+(\int \times (\mu\alpha\ldotp 1+(\int \times \alpha \times \alpha)) \times (\mu\alpha\ldotp 1+(\int \times \alpha \times \alpha)))$};

    \path[every node/.style={font=\sffamily\small}]
    (1) edge [bend left=10] node [right] {$\unfold$} (2)
    (2) edge [bend left=10] node [left] {$\fold$} (1);
  \end{tikzpicture}
\end{center}
This kind of recursive types is called iso-recursive types because there is an isomorphism between a $\mu\alpha\ldotp \tau$ and its unfolding $\tau[\mu\alpha\ldotp \tau / \alpha]$. 

\subsection{Simply Typed Lambda Calculus with Recursive Types}
\label{subsec:stlc-rec-def}
The syntax of STLC with recursive types is defined as follows:
\begin{align*}
  \tau &::= \ldots \mid \mu \alpha\ldotp \tau \\
  e    &::= \ldots \mid \fold \; e \mid \unfold \; e \\
  v    &::= \ldots \mid \fold \; v\\
  E    &::= \ldots \mid \fold \; E \mid \; \unfold \; E
\end{align*}
Further, the following evaluation rule is added:
\begin{mathpar}
  \inferrule{ }{\unfold \; (\fold \; v) \evalto v}
\end{mathpar}
Finally, the following typing judgements are added:
\begin{mathpar}
  \TFold 
\and 
  \TUnfold
\end{mathpar}
This extension allows us to define a type for lists of integers:
\[
\int\; \listt \eqdef \mu\alpha\ldotp 1 + (\int \times \alpha)
\]
In the lambda calculus with recursive types, the $\Omega$ function has type $\tarrow{(\mu\alpha\ldotp \tarrow{\alpha}{\alpha})}{(\mu\alpha\ldotp \tarrow{\alpha}{\alpha})}$.
We do, however, need to rewrite it slightly as applies a function found under a $\mu$, so it must first unfold it:
\[
  \Omega = (\tlabs{x}{\mu\alpha\ldotp \tarrow{\alpha}{\alpha}}{(\unfold \; x) \; x}) 
\]
It is indeed possible to type this version of $\Omega$:
\[
  \inferrule*{\inferrule*{\inferrule*{\inferrule*{ }
                                                 {x : \mu\alpha\ldotp \tarrow{\alpha}{\alpha} \vdash x : \mu\alpha\ldotp \tarrow{\alpha}{\alpha} }}
                                     {x : \mu\alpha\ldotp \tarrow{\alpha}{\alpha} \vdash \unfold \; x : \tarrow{\tau_2}{(\mu\alpha\ldotp \tarrow{\alpha}{\alpha})}} \\
                         \inferrule*{\text{for $\tau_2 = \mu\alpha\ldotp \tarrow{\alpha}{\alpha}$} }
                                    { x : \mu\alpha\ldotp \tarrow{\alpha}{\alpha} \vdash x :\tau_2}}
                         {x : \mu\alpha\ldotp \tarrow{\alpha}{\alpha} \vdash (\unfold \; x) \; x : \mu\alpha\ldotp \tarrow{\alpha}{\alpha}}}
             {\mtenv \vdash \Omega : \tarrow{(\mu\alpha\ldotp \tarrow{\alpha}{\alpha})}{(\mu\alpha\ldotp \tarrow{\alpha}{\alpha})}}
\]

\subsection{A Step-Indexed Logical Predicate}
In a naive first attempt to construct the value interpretation, we could try:
\[
  \vpred{\mu\alpha\ldotp \tau} = \{\fold \; v \mid \unfold \; (\fold \; v) \in \epred{\sub{\tau}{\mu\alpha\ldotp\tau}{\alpha}} \}
\]
We can simplify this slightly; first we use the fact that $\unfold \; (\fold \; v)$ reduces to $v$.
Next we use the fact that $v$ must be a value and the fact that we want $v$ to be in the expression interpretation of $\tau[\mu \alpha\ldotp \tau / \alpha]$.
By unfolding the definition of the expression interpretation, we conclude that it suffices to require $v$ to be in the value interpretation of the same type.
With these simplifications we get the following definition:
\[
  \vpred{\mu\alpha\ldotp \tau} = \{\fold \; v \mid v \in \vpred{\sub{\tau}{\mu\alpha\ldotp\tau}{\alpha}} \}
\]
This is, however, not a well-founded definition.
The value interpretation is defined inductively on the type, but $\sub{\tau}{\mu\alpha\ldotp\tau}{\alpha}$ is not a structurally smaller type than $\mu\alpha\ldotp \tau$.

To solve this issue, we index the interpretation by a natural number, $k$, which we write as follows:
\[
  \vpres{\tau} = \{v \mid \dots \}
\]
Hence $v \in \vpres{\tau}$ is read as ``$v$ belongs to the interpretation of $\tau$ for $k$ steps.''
We interpret this in the following way: given a value that is part of an evaluation for $k$ or fewer steps (in the sense that the value is plugged into a program context, and the resulting expression evaluates for fewer than $k$ steps), then the value will have type $\tau$.
If we use the same value in a program context evaluates for more than $k$ steps, then we might notice that the value does not have type $\tau$ which means that we might get stuck.
This gives us an approximate guarantee.

We use the step as an inductive metric to make our definition well-founded.
That is we define the interpretation inductively on the step-index followed by an inner induction on the type structure.
Let us start by adding the step-index to our existing value interpretation:
\begin{align*}
  \vpres{\bool} &= \{\true,\false\} \\
  \vpres{\tarrow{\tau_1}{\tau_2}} &= \{\tlabs{x}{\tau_1}{e} \mid \forall j \leq k. 
  \; \forall v \in \vpres[j]{\tau_1}. \; \sub{e}{v}{x} \in \epres[j]{\tau_2} \}
\end{align*}
$\true$ and $\false$ are in the value interpretation of $\bool$ for any $k$, so $\true$ and $\false$ will for any $k$ look like it has type $\bool$.
To illustrate how to understand the value interpretation of $\tarrow{\tau_1}{\tau_2}$, please consider the following time line:  \\
\begin{center}
\begin{tikzpicture}
    \draw[->] (0,0) -- (8,0);

    \foreach \x in {0,3,5,8}
      \draw (\x cm,3pt) -- (\x cm,-3pt);

    \draw (-2,0) node[below=3pt] { } node[above=30pt] { \footnotesize{$\lambda$ time-line} };
    \draw (0,0) node[below=3pt] {$k$} node[above=3pt] {\footnotesize{$(\tlabs{x}{\tau_1}{e}) \; e_2$}};
    \draw (3,0) node[below=3pt] {$ j+1 $} node[above=3pt] {\footnotesize{$(\tlabs{x}{\tau_1}{e})\; v $}};
    \draw (4.3,0) node[below=3pt] {$   $} node[above=3pt] {\footnotesize{$ \evalto $}};    
    \draw (5,0) node[below=3pt] {$ j $} node[above=3pt] {\footnotesize{$ \sub{e}{v}{x} $}};
    \draw (8,0) node[below=3pt] {$ 0 $} node[above=3pt] {$  $};
    \draw (9,0) node[below=3pt] { } node[above=3pt] { \footnotesize{"future"} };
  \end{tikzpicture}
\end{center}
Here we start at index $k$ and as we run the program, we use up steps until we at some point reach $0$ and run out of steps.
Say we at step $k$, we have an application of a lambda abstraction.
The lambda abstraction is already ready to be applied, but the application may not happen right away.
The $\beta$-reduction happens when the argument is an value, but the application may contain a non-value expression as it argument, i.e.\ $(\tlabs{x}{\tau_1}{e})\; e_2$.
It takes a number of steps to reduce $e_2$ to a value, and we have no way to tell how many it will take.
All we can do is to assume that it has taken some number of steps to evaluate $e_2$ to $v$, so we have $j+1$ steps left.
At this time, we can perform the $\beta$-reduction which means that we have $\sub{e}{v}{x}$ with $j$ steps left.


We can now define the value interpretation of $\mu\alpha\ldotp \tau$:
\[
  \vpres{\mu\alpha\ldotp \tau} = \{\fold \; v \mid \forall j < k. \; v \in \vpres[j]{\sub{\tau}{\mu\alpha\ldotp\tau}{\alpha}} \}
\]
This definition is almost the same as the one we proposed above but step-indexed.
In order to make the definition well-founded, we require $j$ be \emph{strictly} less than $k$.
We do not define a value interpretation for type variables $\alpha$, as we have no polymorphism yet.
The only place we have a type variable at the moment is in $\mu\alpha\ldotp \tau$, but in the interpretation we immediately close off the $\tau$ under the $\mu$, so we will never encounter a free type variable.

Finally, we define the expression interpretation:
\[
  \epres{\tau} = \{e \mid \forall j < k. \; \forall e'. \; e \evaltos[j] e' \wedge \irred(e') \; \implies \; e' \in \vpres[k-j]{\tau}\}
\]
To illustrate what is going on here, consider the following time line: \\
\begin{center}
\begin{tikzpicture}
    \draw[->] (0,0) -- (4,0);

    \foreach \x in {0,2,4}
      \draw (\x cm,3pt) -- (\x cm,-3pt);

    \draw (0,0) node[below=3pt] {$k$} node[above=3pt] {$e$};
    \draw (1,0) node[below=3pt] {$ $} node[above=3pt] {$\evalto \evalto \evalto \evalto$};
    \draw (2,0) node[below=3pt] {$k-j$} node[above=3pt] {$e'$};
    \draw (4,0) node[below=3pt] {$0$} node[above=3pt] {$  $};

    \draw [decorate,decoration={brace,amplitude=10pt,mirror}]
    (0,-0.6) -- (2,-0.6) node [black,midway,yshift=-0.5cm] 
          {\footnotesize $j$};
\end{tikzpicture}
\end{center}
We start with an expression $e$, then we take $j$ steps and get to expression $e'$.
At this point, if $e'$ is irreducible, then it must be in the value interpretation of $\tau$ for $k-j$ steps.
As explained above, the step-index approximates where we only can be sure that a value has a given type if we still have steps left.
In other words, if we allow the expression interpretation to spend all the steps, then we cannot say anything meaningful about the value.

We also need to lift the interpretation of type environments to step-indexing:
\begin{align*}
  \gpres{\mtenv} & = \{\emptyset \} \\
  \gpres{\Gamma, x : \tau} & = \{ \gamma[x \mapsto v] \mid \gamma \in \gpres{\Gamma} \wedge v \in \vpres{\tau} \}
\end{align*}
We are now in a position to lift the definition of semantic type safety to one with step-indexing.
\[
  \Gamma \models e : \tau \eqdef \forall k \geq 0. \; \forall \gamma \in \gpres{\Gamma} \ldotp \gamma(e) \in \epres{\tau}
\]
To actually prove type safety, we do it in two steps. First we state and prove the fundamental theorem:
\begin{theorem}[Fundamental property]
\label{thm:rec-ftlr}
 ~\\
  If $\Gamma \vdash e : \tau$, then $\Gamma \models e : \tau$.
\end{theorem}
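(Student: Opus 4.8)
The plan is to prove the statement by induction on the derivation of $\Gamma \vdash e : \tau$, packaged as one compatibility lemma per typing rule so that the theorem is obtained by reassembling them. Before the induction I would record three routine auxiliary facts. First, the substitution–composition lemma $\extsub{\gamma}{x}{v}(e) = \sub{\gamma(e)}{v}{x}$, exactly as in the earlier type-safety development, justifying pushing substitutions under binders. Second, \emph{downward closure in the step index}: for every $\tau$, if $v \in \vpres{\tau}$ and $j \le k$ then $v \in \vpres[j]{\tau}$; likewise $e \in \epres{\tau}$ with $j \le k$ gives $e \in \epres[j]{\tau}$, and $\gamma \in \gpres{\Gamma}$ gives $\gamma \in \gpres[j]{\Gamma}$. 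Each follows by a short induction on $\tau$ (resp. on $\Gamma$) directly from the definitions — the $\forall j' \le k$ in the arrow clause and the $\forall j < k$ in the $\mu$ clause make monotonicity immediate, and for $\epres{}$ one combines $k - i \ge j - i$ with value monotonicity. These facts are used pervasively to realign residual step counts. Note that no typing substitution lemma is needed, since (as in the previous section) neither $\vpres{\tau}$ nor $\epres{\tau}$ mentions well-typedness.

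The base cases of the main induction are immediate: $\true,\false \in \vpres{\bool}$ for every $k$, hence in $\epres{\bool}$; \textsc{T-Var} is the definition of $\gamma \in \gpres{\Gamma}$; and \textsc{T-If} follows by running the scrutinee to a boolean value and accounting for the consumed steps. The two "standard" \STLC{} cases still need care with indices. For \textsc{T-Abs}, to show $\tlabs{x}{\tau_1}{\gamma(e)} \in \vpres{\tarrow{\tau_1}{\tau_2}}$ I take $j \le k$ and $v \in \vpres[j]{\tau_1}$ and must produce $\sub{\gamma(e)}{v}{x} \in \epres[j]{\tau_2}$; I instantiate the induction hypothesis $\Gamma,x:\tau_1 \models e : \tau_2$ at index $j$ with $\extsub{\gamma}{x}{v}$, using downward closure to see $\extsub{\gamma}{x}{v} \in \gpres[j]{\Gamma, x:\tau_1}$, and rewrite with substitution–composition. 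For \textsc{T-App} I evaluate $\gamma(e_1)$ to a value — which, by the induction hypothesis for $e_1$ together with the fact that the arrow interpretation contains only $\lambda$-abstractions, must be some $\tlabs{x}{\tau_2}{e_1'}$ — say in $j_1$ steps, then $\gamma(e_2)$ to a value $v_2$ in $j_2$ steps, then take the $\beta$-step. The key point is that if the whole application reaches an irreducible term in $j < k$ steps then $j_1 + j_2 + 1 \le j < k$, so the indices $k - j_1$ and $k - j_1 - j_2 - 1$ I need are non-negative; I apply the arrow obligation at index $k - j_1 - j_2 - 1$ (legitimate since that is $\le k - j_1$), feed in $v_2$ after a downward-closure step, and land the result in $\vpres[k-j]{\tau}$.

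The genuinely new cases are the recursive-type rules, and they are where the step index earns its keep. For \textsc{T-Fold}, given $\Gamma \vdash \fold\; e : \mu\alpha\ldotp\tau$ from $\Gamma \vdash e : \sub{\tau}{\mu\alpha\ldotp\tau}{\alpha}$, I evaluate $\gamma(e)$ inside the context $\fold\;[\cdot]$ to a value $v$; the induction hypothesis gives $v \in \vpres[m]{\sub{\tau}{\mu\alpha\ldotp\tau}{\alpha}}$ for the residual index $m = k - j$, and then $\fold\; v \in \vpres[m]{\mu\alpha\ldotp\tau}$ because the $\mu$ clause only asks for membership at strictly smaller indices, which downward closure supplies — and $\fold$ costs no reduction step, so there is no off-by-one here. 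For \textsc{T-Unfold}, given $\gamma(e) \in \epres{\mu\alpha\ldotp\tau}$, I run $\unfold\;\gamma(e)$: the subterm $\gamma(e)$ must reduce to some irreducible term which, since its residual index is still positive, lies in $\vpres{\mu\alpha\ldotp\tau}$ and hence has the shape $\fold\; v$; one further step $\unfold(\fold\; v) \evalto v$ fires, and unfolding the definition of $\vpres{\mu\alpha\ldotp\tau}$ gives exactly $v$ at every strictly smaller index, in particular at the residual index $k - j$ demanded by $\epres{\sub{\tau}{\mu\alpha\ldotp\tau}{\alpha}}$.

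The main obstacle I anticipate is not conceptual but the disciplined bookkeeping of step indices: ensuring in \textsc{T-App} and \textsc{T-Unfold} that residual indices never go negative, that the instantiations of the arrow- and $\mu$-clauses are at legal indices, and that the single extra step introduced by $\unfold(\fold\; v) \evalto v$ is absorbed by the strict inequality $j < k$ in the definition of $\vpres{\mu\alpha\ldotp\tau}$ — precisely the inequality that was inserted to make the definition well-founded. Getting those inequalities to line up is the whole of the work; everything else is unfolding definitions and invoking downward closure.
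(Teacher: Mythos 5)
Your proposal is correct and takes essentially the same approach as the paper: induction on the typing derivation, with the step-index monotonicity (downward-closure) lemma as the key auxiliary fact for realigning residual indices, and your \textsc{T-Fold} argument matches the paper's almost step for step. The paper only works out \textsc{T-Fold} in detail and leaves the other cases (including \textsc{T-Abs} and \textsc{T-App}) as exercises, so your sketches of those go beyond what is written but are entirely consistent with it.
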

When we have proven the fundamental property, we prove that it entails type safety, i.e.\
\[
 \mtenv \models e : \tau \implies \safe(e)
\]
Thanks to the way we defined the logical predicate, this second step should be trivial to prove.

The difficult part is to prove the fundamental property.
This proof requires a lemma that says the value interpretation is monotone, i.e.\ if a value is in the value interpretation for some step, then it is also in there for any smaller step.
\begin{lemma}[Monotonicity] ~\\
  \label{lem:rec-mono}
  If $v\in \vpres{\tau}$ and $j \leq k$,  then $v \in \vpres[j]{\tau}$.
\end{lemma}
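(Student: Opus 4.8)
The plan is to prove the lemma by an induction that mirrors the recursion structure of the definition: an outer induction on the step index $k$ with an inner induction on the structure of $\tau$. Along the way I would also carry the companion monotonicity statement for the expression interpretation — if $e \in \epres{\tau}$ and $j \le k$ then $e \in \epres[j]{\tau}$ — since that is what makes the arrow clause usable and it is needed in the fundamental property anyway. The observation that makes the whole thing routine is that $k$ occurs in every clause of the value interpretation only as an \emph{upper bound} on a universally quantified inner step index, so replacing $k$ by a smaller $j$ simply restricts the range of that quantifier and therefore can only weaken a membership condition; in effect the interpretation is downward closed in the index by construction.

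Going through the cases of $\vpres{\tau}$: for $\tau = \bool$ there is nothing to prove, as $\vpres{\bool} = \{\true,\false\}$ regardless of the index. For $\tau = \tarrow{\tau_1}{\tau_2}$, if $\tlabs{x}{\tau_1}{e} \in \vpres{\tarrow{\tau_1}{\tau_2}}$ then $\sub{e}{v}{x} \in \epres[i]{\tau_2}$ for every $i \le k$ and every $v \in \vpres[i]{\tau_1}$; since every $i \le j$ also satisfies $i \le k$, the same condition holds with $j$ in place of $k$, which is exactly $\tlabs{x}{\tau_1}{e} \in \vpres[j]{\tarrow{\tau_1}{\tau_2}}$. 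For $\tau = \mu\alpha\ldotp\tau'$ the argument is the same with the strict bound: from $\fold\;v \in \vpres{\mu\alpha\ldotp\tau'}$ we get $v \in \vpres[i]{\sub{\tau'}{\mu\alpha\ldotp\tau'}{\alpha}}$ for all $i < k$, and since $\{i : i < j\} \subseteq \{i : i < k\}$ this gives $\fold\;v \in \vpres[j]{\mu\alpha\ldotp\tau'}$. It is worth noting that none of these value cases actually consumes the induction hypothesis for $\curly{V}$; the hypothesis is used only for the expression interpretation.

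For the expression interpretation, suppose $e \in \epres{\tau}$ and $j \le k$, and take any $i < j$ and any irreducible $e'$ with $e \evaltos[i] e'$. Since $i < j \le k$, the hypothesis gives $e' \in \vpres[k-i]{\tau}$, and since $j - i \le k - i$, the value-interpretation part of the lemma, applied at the same type $\tau$, yields $e' \in \vpres[j-i]{\tau}$, which is precisely what $e \in \epres[j]{\tau}$ requires. I do not expect a real obstacle here: the only things to watch are setting up the induction so that it respects the definition's well-foundedness (step index first, then type structure) — so that the $\mu$-case is not secretly relying on a type-structure hypothesis applied to the larger type $\sub{\tau'}{\mu\alpha\ldotp\tau'}{\alpha}$ — and getting the arithmetic $j - i \le k - i$ right before invoking value-interpretation monotonicity in the expression step.
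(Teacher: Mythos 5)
Your proposal is correct and matches the paper's proof in substance: the paper argues by a direct case analysis on $\tau$, using exactly your key observation that the index occurs only as an upper bound on inner quantifiers, so each value case goes through without any induction hypothesis. The induction scaffolding and the companion statement for $\epres{\tau}$ that you carry along are harmless but unnecessary for the lemma as stated, as you yourself note.
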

\begin{proof}
The proof is by case on $\tau$.
\case{$\tau = \bool$}
assume $v \in \vpres{\bool}$ and $j \leq k$, we then need to show $v \in \vpres[j]{\bool}$. As $v \in \vpres{\bool}$, we know that either $v= \true$ or $v=\false$. If we assume $v=\true$, then we immediately get what we want to show, as $\true$ is in $\vpres[j]{\bool}$ for any $j$. Likewise for the case $v=\false$.
\case{$\tau = \tarrow{\tau_1}{\tau_2}$}
assume $v \in \vpres{\tarrow{\tau_1}{\tau_2}}$ and $j \leq k$, we then need to show $v \in \vpres[j]{\tarrow{\tau_1}{\tau_2}}$. As $v$ is a member of $\vpres{\tarrow{\tau_1}{\tau_2}}$, we can conclude that $v = \tlabs{x}{\tau_1}{e}$ for some $e$. By definition of $v \in \vpres[j]{\tarrow{\tau_1}{\tau_2}}$ we need to show
\[
  \forall i \leq j. \forall v' \in \vpres[i]{\tau_1}.\; \sub{e}{v'}{x} \in \epres[i]{\tau_2}
\]
Suppose $i \leq j$ and $v' \in \vpres[i]{\tau_1}$, we then need to show $\sub{e}{v'}{x} \in \epres[i]{\tau_2}$.

By assumption, we have $v \in \vpres{\tarrow{\tau_1}{\tau_2}}$ which gives us
\[
  \forall n \leq k. \; \forall v' \in \vpres[n]{\tau_1}.\; \sub{e}{v'}{x} \in \epres[n]{\tau_2}
\]
By transitivity, $j \leq k$, and $i \leq j$, we get $i \leq k$.
We use this with $v' \in \vpres[i]{\tau_1}$ to get $\sub{e}{v'}{x} \in \epres[i]{\tau_2}$ which is what we needed to show.
\case{$\tau = \mu \alpha\ldotp x$} assume $v \in \vpres{\mu\alpha\ldotp \tau}$ and $j \leq k$, we then need to show $v \in \vpres[j]{\mu\alpha\ldotp \tau}$.
From $v \in \vpres[k]{\tau}$, we conclude that there must exist a $v'$ such that $v = \fold \; v'$.
Now assume $i<j$ and show $v' \in \vpres[i]{\subst{\tau}{\mu\alpha\ldotp \tau}{\alpha}}$.
From $i<j$ and $j \leq k$, we can conclude $i < k$.
We use with
\[
  \forall n < k.\; v' \in \vpres[n]{\subst{\tau}{\mu\alpha\ldotp \tau}{\alpha}},
\]
which we get from $v \in \vpres{\mu\alpha\ldotp \tau}$, to get $v' \in \vpres[i]{\subst{\tau}{\mu\alpha\ldotp \tau}{\alpha}}$.
\end{proof}
\begin{proof}[Proof (Fundamental Property, Theorem~\ref{thm:rec-ftlr})]
Proof by induction on the typing derivation.
\case{\texttt{T-Fold}} \\~
Assume
\[
 \Gamma \vdash \fold \; e : \mu\alpha. \; \tau
\]
We need to show 
\newcommand{\mat}{\ensuremath{\mu\alpha\ldotp\tau}}
\[
  \Gamma \models \fold \; e : \mat
\]
So suppose we have $k \geq 0$ and $\gamma \in \gpres{\mat}$, then we need to show $\gamma(\fold\; e) \in \epres{\mat}$ which amounts to showing $\fold\; \gamma(e) \in \epres{\mat}$.

So suppose that $j<k$ and that $\fold\; \gamma(e) \evaltos[j] e'$ and $\irred(e')$, then we need to show $e' \in \vpres[k-j]{\mat}$.
As we have assumed that $\fold\; \gamma(e)$ reduces to something irreducible, and the operational semantics of this language are deterministic, we know that $\gamma(e)$ must have evaluated down to something irreducible.
We therefore know that $\gamma(e) \evaltos[j_1] e_1$ where $j_1 \leq j$ and $\irred(e_1)$.
Now we use our induction hypothesis: \newcommand{\tsub}{\ensuremath{\sub{\tau}{\mat}{\alpha}}}
\[
  \Gamma \models e : \tsub
\]
We instantiate this with $k$ and $\gamma \in \gpres{\Gamma}$ to get $\gamma(e) \in \epres{\tsub}$.
Which we then can instantiate with $j_1$ and $e_1$ to get $e_1 \in \vpres[k-j_1]{\tsub}$.

Now let us take a step back and see what happened: 
\[
\begin{array}{rcl}
  \fold\; \gamma(e) & \evaltos[j_1]& \fold\; e_1 \\
                    & \equiv{}& \fold\;v_1 \\
                    & \equiv{}& e'
\end{array}
\]
We started with a $\fold\; \gamma(e)$ which took $j_1$ steps to $\fold\; e_1$.
We have just shown that this $e_1$ is actually a value because it is in the value interpretation of $\vpres[k-j_1]{\tsub}$.
To remind us $e_1$ is a value let us henceforth refer to it as $v_1$.
We further know that $\fold\; \gamma(e)$ reduces to $e'$ in $j$ steps and that $e'$ is irreducible.
We can further conclude that $e' = \fold\; v_1$ and $j = j_1$ as the language is deterministic and $\fold\; v_1$ is irreducible (because it is a value).
Our proof obligation is to show $e' = \fold \; v_1 \in \vpres[k-j]{\mat}$ to show this suppose we have $l < k-j$ (this also gives us $l < k-j_1$ as $j = j_1$).
We then need to show $v_1 \in \vpres[l]{\tsub}$, we obtain this result from the monotonicity lemma using the assumption $v_1 \in \vpres[k-j_1]{\tsub}$ and $l < k-j_1$.
\end{proof}

The $\listt$ type from Section~\ref{subsec:stlc-rec-def} uses the sum type.
Sums are a straight forward extension to this language.
The extension of the value interpretation would be:
\[
  \vpres{\tau_1 + \tau_2} = \{\inl \; v_1 \mid v_1 \in \vpres{\tau_1}\} \cup 
                            \{\inr \; v_2 \mid v_2 \in \vpres{\tau_2}\}
\]
We can use $k$ directly or $k$ decremented by one depending on whether casing should take up a step or not.
Either way the definition is well-founded.

\subsection{Exercises}
\begin{enumerate}
\item Do the lambda and application case of the proof of the \emph{Fundamental Property} (Theorem~\ref{thm:rec-ftlr}).
\item Try to prove the \emph{monotonicity} lemma where the definition of the value interpretation has been adjusted with:
\[
\vpres{\tarrow{\tau_1}{\tau_2}} = \{\tlabs{x}{\tau_1}{e} \mid \forall v \in \vpres{\tau_1}. \; \sub{e}{v}{x} \in \epres{\tau_2} \}
\]
This will fail, but it is instructive to see how it fails.
\end{enumerate}
\section{References and Worlds}
\label{sec:references}
This section is not based on Amal Ahmed's lectures at OPLSS '15, so the reader will have to make do with these notes.

\subsection{\STLC{} with References}
In order to add references to \STLC{}, we add means to allocate new references, make assignments to existing references, and dereference existing references:
\begin{align*}
  \tau & ::= \dots \mid \Tref{\tau} \\
  e    & ::= \dots \mid \Ealloc{e} \mid \Eassign{e}{e} \mid \Ederef{e} \mid l\\
  E    & ::= \dots \mid \Ealloc{E} \mid \Eassign{E}{e} \mid \Eassign{v}{E} \mid \Ederef{E} \\
  v    & ::= \dots \mid l
\end{align*}
The expression $\Ealloc{e}$ allocates a new cell with an initial value specified by $e$.
The expression $\Eassign{e}{e}$ makes an assignment to an existing reference.
The expression $\Ederef{e}$ dereferences an existing reference.
We also add locations $l$ to the language.
Locations are references to the heap.

In order to model references, we need to add a store or a heap to our language.
Our heap $h$ is simply a finite partial map from location $\Loc$ to the values of the language:
\[
  h \; : \; \Loc \finfp \Val
\]
This is a reasonable model of a real heap: The map contains all the allocated locations.
Any real program will only ever allocated a finite amount of memory, and we have infinitely many available heap cells, so we do not have to worry about running out of memory.

We also need to update the operational semantics, so it has the heap available during evaluation.
We update the step relation to be a partial relation on configurations where a configuration is a pair of a heap and an expression.
\[
  \tuple{h,e} \evalto \tuple{h',e'}
\]
The pure reductions of the language are all the reductions of \STLC{}.
These reductions are pure because they do not make changes to the heap.
\begin{mathpar}
  \inferrule{ e \evalto e'}
            { \tuple{h,E[e]} \evalto \tuple{h,E[e']} }
\end{mathpar}
We introduce an impure reduction, i.e.\ a reduction that manipulate or interacts with the heap, for each of the three new ways to interact with the heap.
\begin{mathpar}
  \inferrule*[right=\textsc{E-Alloc}]{ l \not\in \dom(h) }
             {\tuple{h,E[\Ealloc{v}]} \evalto \tuple{\extendh{h}{l}{v},E[l]}}
\and
  \inferrule*[right=\textsc{E-Assign}]{ l \in \dom(h) }
             {\tuple{h,E[\Eassign{l}{v}]} \evalto \tuple{\extendh{h}{l}{v},E[v]}}
\and
  \inferrule*[right=\textsc{E-Deref}]{ l \in \dom(h) }
             { \tuple{h,E[\Ederef{l}]} \evalto \tuple{h,E[h(l)]} }
\end{mathpar}
An allocation $\Ealloc{v}$ allocates a new location $l$ on the heap with the initial value $v$.
The location must be new in the sense that the heap did not previously use that location.
Further, we do not expose the locations in the surface language (this is enforced by the type system), so allocation is the only way to obtain locations.
An assignment $\Eassign{l}{v}$ updates the heap at location $l$ to point to $v$.
Here the result of an assignment is the assigned value, but often it will just be unit value (we do not opt for this as we have not introduced unit value).
Finally, $\Ederef{l}$ dereferences location $l$ which means that it looks up the value denoted by $l$ in the heap.

With the heap introduced, we will also have to change some of the definitions that have been reoccurring in previous sections.
One such definition is $\irred$:
\[
  \begin{gathered}
    \irred(h,e)\\
    \text{iff}\\
    \nexists h',e' \ldotp \tuple{h,e} \evalto \tuple{h',e'}
  \end{gathered}
\]

For each of the new expressions, except locations, we introduce a new typing rule to the type system:
\begin{mathpar}
  \infer*[right=\textsc{T-Alloc}]{ \Gamma \vdash e : \tau}
        { \Gamma \vdash \Ealloc{e} : \Tref{\tau}}
\and
  \infer*[right=\textsc{T-Assign}]{ \Gamma \vdash e_1 : \Tref{\tau}\\
         \Gamma \vdash e_2 : \tau}
        { \Gamma \vdash \Eassign{e_1}{e_2} : \tau }
\and
  \infer*[right=\textsc{T-Deref}]{ \Gamma \vdash e : \Tref{\tau}}
        { \Gamma \vdash \Ederef{e} : \tau }
\end{mathpar}

We also introduce the notion of a well-types heap:
\[
  \Gamma \vdash h : \Sigma \quad \text{iff} \quad \left\{
    \begin{array}{l}
      \dom(h) = \dom(\Sigma) \wedge\\
      \forall l \in \dom(h)\ldotp \Sigma; \Gamma \vdash h(l) : \Sigma(l)
    \end{array}\right.
\]

\subsection{Properties of \STLC{} with References}
With the language defined, we take a step back to consider what properties the language have.
\begin{description}
\item[The operational semantics is non-deterministic:]
  In the reduction rule \textsc{E-Alloc}, we only require $l$ to be a new location that is not in the domain of the heap.
  It can, however, be any new location which is the cause of the non-determinism.
  By leaving allocation underspecified, our system can handle different real implementations of allocation.
  One drawback of non-determinism is that we cannot rely on determinism in our proofs (which we have previously done).
\item[Evaluation can get stuck:] The reduction rules for assignment and dereference require the location to be in the heap.
  If this is not the case, then the evaluation is stuck.
  One way to think of this is as a memory fault.
  No surface language for expressing locations is provided, so to obtain a reference one has allocate it which means that there should (hopefully) be no way to obtain a reference for memory that has not been allocated.
\item[Only mutation on the heap:]
  The only place we have introduced mutation is on the heap.
  Variables bound with a $\lambda$-expression can still not be modified.
\item[Values in the same memory cell stay the same type:]
  The type system enforces the invariant that memory cells always store values of the same type.
\item[Recursion:] 
  It may come as a surprise that the language has recursion as it does not have recursive types or a fixed-point operator.
  However, recursion can be achieved through the heap with a technique known as \emph{Landin's knot}\citep{Landin:CJ64}.

We demonstrate this technique with the following program
\begin{lstlisting}[escapeinside={@}{@},basicstyle=\footnotesize\ttfamily]
(let x = ref (\ y : int. y) in
    x := (\ n : int. (!x 0));
    !x) 0
\end{lstlisting}
which recurses through the heap to diverge.
We have written the program in an ML-style, but it can be written as a well-typed expression (see Appendix~\ref{app:landins-knot}) in the language we have presented here (assuming the trivial extension with integers).
The program first allocates a new reference with a dummy value as the default value.
Specifically, the dummy is chosen, so it has the type of the value we actually want to store (in this case, the dummy is the identity function on integers), but otherwise the function does not matter.
The reference is stored in variable \texttt{x}.
Next, the program assigns the function we are actually interested in to \texttt{x}.
This function dereferences \texttt{x} to get a function which it applies to $0$.
The function takes an integer as an argument in order to have the correct type.
At the point when this function is actually applied, the location denoted by \texttt{x} will contain the function itself, so it will call itself repeatedly.
To start the recursion, the program dereferences \texttt{x}, which yields the function we just described, and applies the result to 0.

The example should give an idea about how Landin's knot can be used to emulate a fixed-point operator which gives general recursion.
\end{description}

\subsection{Logical Predicate}
In this section, we define a logical relation that we use to prove type safety of \STLC{} with references.

We will have to deal with a couple of new things, but the main thing is how to interpret locations in the value interpretation.
The interpretation will of course look something like this
\[
  \vpred{\Tref{\tau}} = \{l \mid \dots \},
\]
but what should we require from the location?
To answer this question, we should consider the elimination form of things of reference type which is dereference.
Dereferencing a location yields the value stored in that location of the heap, so that value should be in the value interpretation, i.e.\ $h(l) \in \vpred{\tau}$.
However, the value interpretation does not consider a heap in its interpretation, so we cannot write this immediately.
We could try to index the value interpretation with a heap, but we are not just interested in safety with respect to one heap.
Instead we need a semantic model of the heap.

A semantic model for heaps is called a world.
It specifies what values one can expect to find at a given location.
We will define worlds in detail later, but for now we will just assume that a world is a function from locations to value predicates.
We index our value interpretation with a world, i.e.\ 
\[
  \vpred{\Tref{\tau}}(W) = \{l \mid \dots \},
\]
Returning to the elimination form, we still cannot express that the contents of the heap at a specific location has to be in the value interpretation.
We can, however, put restrictions on what the world can allow to be at a specific position.
Specifically, we want to make sure that if a location $l$ is in the value interpretation, then $l$ can only dereference values that respect safety.
This can be achieved by requiring the world to only allow values that respect safety to reside at location $l$.
The values that respect safety are exactly what is captured by the value interpretation, so we require the predicate at location $l$ in the world to be equal to the value interpretation.
\[
  \vpred{\Tref{\tau}}(W) = \{l \mid l \in \dom(W) \wedge W(l) = \vpred{\tau} \},
\]
This is exactly the definition we would like for our value interpretation of reference types, but unfortunately we cannot use it.
The problem is that the domain of the worlds cannot exist.
To see why, we need to take a step back and consider what we have been doing when defining value interpretations so far.
The value interpretations defined in previous sections have been elements in the space of all predicates over values.
We call such a space \emph{the space of semantic types}.
For instance in Section~\ref{sec:stlc-type-safety}, the space of semantic types $T$ was $\Pred{\Val}$.
The value interpretation we want now is world indexed, so it needs to be from a different space of semantic types, namely
\[
  T = \World \fun \Pred{\Val}
\]
A world is a function from locations to safe values.
A predicate of safe values is in the space of semantic types, so the domain of worlds is
\[
  \World = \Loc \finfp T
\]
If we inline $T$ in the definition of $\World$, we get
\[
  \World = \Loc \finfp (\World \fun \Pred{\Val})
\]
which is a recursive domain equation for which no solution exists.
This means that we cannot have the proposed definition of the value interpretation as the worlds we need for the definition do not exist.
It may not come as a surprise that the proposed definition does not work.
After all, this language is similar in its attributes to \STLC{} with recursive types.
When we defined the logical relation for that language, we had to introduce step-indexing to approximate the interpretation of values.
In our proposed definition, we have nothing like step-indexing, and it would be a bit surprising if we could make do without here.

The solution to circularity in the domain equation seems intuitively similar to the one for recursive types.
Rather than using an exact solution to the recursive domain equation, we use an approximate solution.
Developing such a solution is beyond the scope of this note, so we only present the definitions necessary to understand the rest.
The following is the approximate solution:
\[
\xi : \hat{T} \cong \blater (( \Loc \finfp \hat{T}) \monfp \UPred{\Val})
\]
(ignore the $\blater$ for now) which allows us to define the domain of worlds
\[
  \World = \Loc \finfp \hat{T}
\]
and finally to define the space of semantic types as
\[
  T = \World \monfp \UPred{\Val}
\]

In order to have an approximation, we approximate over a metric.
In this case, the metric is steps.
This means that our predicates should be step-indexed with a natural number which is where $\mathrm{UPred}$ comes into the picture.
$\mathrm{UPred}$ defines the step-indexed, downwards closed predicates over some domain\footnote{$\UPred{\Val}$ was the space of semantic types from which we picked the value interpretation for the logical predicate for \STLC{} with recursive types}:
\[
  \UPred{\Val} = \{ A \subseteq  \nats \times \Val \mid \forall (n, v) \in A. \forall m \leq n\ldotp (m, v) \in A\}
\]

Because of the approximation, equalities only hold true as long as we have steps left.
When we run out of steps, we can no longer distinguish elements from each other and all bets are off.
Such an equality is an $k$-equality: $\nequal$.
An $k$-equality has properties reminiscent of step-indexed logical relation for \STLC{} with recursive types (\mulr{}): In \mulr{} the step-index was the amount of steps we could take without observing that a value was not of a certain type.
This meant that when we hit zero steps, everything would be in \mulr{}.
Similarly, for an $k$-equality $\nequal[0]$ is the total relation.
For \mulr{} we had a lemma that stated that the value interpretation was downwards closed.
Here we require that the $k$-equalities are downwards closed, i.e.\ $\nequal \subseteq \nequal[k+1]$.

We could not solve the original domain equation due to cardinality issues related to sets.
In some sense, $\nats \finfp T$ was simply too big.
In fact in order to solve the recursive domain equation, we move away from the familiar space of sets into a different space.
This space can be thought of as the space of sets with additional structure.
The structure is related to the step-indices, and everything in the space must preserve this structure.
We mention this because all the definitions that constitute our logical predicate are in this new space which means that they must preserve the added structure of the spaces.
In our proofs, we will also have to make sure this structure is preserved which we sometimes do without mention.
The proofs also use some of this structure without mention.
We will not prove that our definitions have the necessary structure, but we will leave it as an exercise in the end of this section.

There are still a couple of things we have not mentioned in the domain equation.
The first is that the space of semantic types are monotone functions with respect to the world.
Later in this section, we present the preorder the functions are monotone with respect to along with some intuition for why we want them to be monotone.
We will state this as a lemma later, but we will also note that all the predicates mapped to by a world have to be monotone with respect to worlds.
The second is the fact that the solution to the recursive domain equation is an isomorphism $\xi$.
This means that our definitions will use $\xi$ to move between the two sides.
This is also a technicality, but we include it so our definitions are correct.
The third and last thing is the black triangle, known as \emph{later}.
The purpose of the later in the recursive domain equation is to make that space of structured sets under it small enough for a solution to exist.
In practice, later makes all $k$-equalities under it go one step down.
The later is important for the construction of the solution for the recursive domain equation, but that is beyond the scope of this section (we refer the interested reader to~\citet{Birkedal:tutorial-notes}).

We can now define the $k$-equality for $\mathrm{UPred}$.
\begin{definition}
  For $B \in \UPred{A}$ and $k\in\nats$
  \[
    \cut{B} \eqdef \{(n,a)\in B | n < k \} \qedhere
  \]
\end{definition}
\begin{definition}
  For $B,C \in \UPred{A}$ and $k\in\nats$
  \[
    B \nequal C \quad \text{iff}\quad \cut{B} = \cut{C} \qedhere
  \]
\end{definition}
That is, two predicates are $k$-equal if they are equal on all elements with smaller steps than $k$.

We can now state the value interpretation for the reference type:
\[
  \vpred{\Tref{\tau}}(W) = \{(k,l) \mid l \in \dom(W) \wedge \xi(W(l)) \nequal[k] \vpred{\tau} \},
\]
This is almost the same as the definition we proposed first.
The only difference is that the predicate is step indexed (here we use an explicit step, but it is essentially the same as what we had for recursive types), and it uses a $k$-equality rather than a normal equality.

We also need to look at the remainder of the value interpretation.
As in all the other logical predicates and relations, the boolean values are always safe, so we just need to add a step index for it to conform with the remaining definitions
\[
  \vpred{\bool}(W) =\{ (k,\true), (k,\false) \mid k \in \nats\}
\]
For the function type, we can start with a similar definition to what we had for recursive types:
\[
  \vpred{\tarrow{\tau_1}{\tau_2}}(W) = \{ (k,\lambda  x\ldotp e) \mid \forall j\leq k \ldotp \forall (j,v) \in \vpred{\tau_1}(W) \ldotp (j,e[v/x]) \in \epred{\tau_2}(W)\}
\]
Recall that we consider all $j \le k$ because an application of a lambda-abstraction may not happen immediately.
Specifically, the argument may be an expression that needs to evaluate to a value before the application can take place.
In this language, evaluating an expression may have side effects.
It may assign new values to certain locations, which is fine as long as they are in the value interpretation, or it may also allocate new references which corresponds to adding new locations to the heap.
Because of the non-determinism of location allocation, we cannot statically (that is before the execution) say what type of values will be stored where on the heap, so we the world needs to be updated dynamically with this information.
To this end, we construct a future world relation that captures allocation of new locations.
\begin{definition}[Future worlds]
  For worlds $W$ and $W'$:
  \[
    W' \sqsupseteq W \quad \text{ iff } \dom(W') \supseteq \dom(W) \wedge \forall l \in \dom(W)\ldotp W'(l) = W(l)
  \]
  For $W' \sqsupseteq W$, we say \emph{$W'$ is a future world of $W$.}
\end{definition}
The future world relation is extensional in the sense that a future world remains the same as the past world with respect to everything but possible extensions.
We use this to finish our value interpretation by saying that the argument should be valid in a future world which means that it may have allocated new locations.
\[
  \vpred{\tarrow{\tau_1}{\tau_2}}(W) = \left\{ (k,\lambda x\ldotp e) \middle| 
    \begin{multlined}
    \forall W' \sqsupseteq W, j\leq k, (j,v) \in \vpred{\tau_1}(W') \ldotp\\ (j,e[v/x]) \in \epred{\tau_2}(W')
  \end{multlined}
  \right\}
\]
Generally speaking, if a value is safe with respect to a world, then it should remain that even if the presence of changes to the heap.
For instance, if we have a safe location $l$, i.e.\ a location that is in the value interpretation, then its safety should not be affected by allocation of new locations.
We should still only be able to use $l$ to dereference safe values.
As the future world relation models changes to the heap, we want our value interpretation to be monotone with respect to future worlds, so safe values are resilient to allocation of new references.

The expression interpretation is also going to look like the one from the logical predicate for \STLC{} with recursive types.
That is, if an expression reduces to an irreducible expression, then it must be in the value interpretation.
However in this language, it is configurations and not expressions that take evaluation steps, so we need to specify with respect to what heap the expression evaluates in.
To this end, we simply take any heap that satisfies the world.
For a heap to satisfy a world, it should have the locations specified by the world; and for all the locations in the heap, it should contain a value permitted by the world.
\begin{definition}[Heap satisfaction]
For a heap $h$ and world $W$, we have
\[
  \hsat{h}{W} \text{ iff } \dom(h) = \dom(W) \wedge \forall l \in \dom(h)\ldotp (k,h(l)) \in \xi(W(l))(W)
\]
When $\hsat{h}{W}$ we say: \emph{$h$ ($k$-)satisfies $W$}.
\end{definition}
The future world relation is a partial order (exercise).
With this definition, we are ready to state the expression interpretation:
\[
  (k,e)\in\epred{\tau}(W) \text{ iff } \left\{
    \begin{multlined}
    \forall j \le k, i < j, h, h', e', W' \sqsupseteq W \ldotp\\ \phantom{space}h :_j W' \wedge \tuple{h,e} \evaltos[i] \tuple{h',e'} \wedge \irred(h',e') \\ \implies\\ \exists W'' \sqsupseteq W' \ldotp h' :_{j-i} W'' \wedge (j-i,e') \in \vpred{\tau}(W'')
  \end{multlined}
\right.
\]
Note that for the result heap $h'$ there should exists some future world of $W$ such that $h'$ satisfies that world.
This means that during the evaluation the expression can only have made allocations consistent with the heap it started evaluation in.

As per usual, we need to have an interpretation of typing contexts.
It is defined in a straight forward manner by lifting the simple definition with worlds and step-indexing:
\begin{align*}
  \gpred{\emptyset}(W)     & ::= \{(k,\emptyset)\} \\
  \gpred{\Gamma,x:\tau}(W) & ::= \{(k,\gamma[x \mapsto v]) \mid (k,v) \in \vpred{\tau}(W) \wedge (k,\gamma) \in \gpred{\Gamma}(W)\} \\
\end{align*}
Finally, we can state semantic type safety:
\[
  \Gamma \models e : \tau \quad\text{iff}\quad \forall W, k \geq 0, (k,\gamma) \in \gpred{\Gamma}(W)\ldotp(k,\gamma(e))\in\epred{\tau}(W)
\]

\subsection{Safety Proof}
The safety proof has the same structure as in the previous sections where we proved safety.
We do need a number of lemmas for the proof.
We leave the proofs of these lemmas as exercises.

First of all, we need our logical predicate to be monotone with respect to worlds.
\begin{lemma}[World monotonicity]
  \label{lem:world-monotonicity}
  For $W' \sqsupseteq W$ we have
  \begin{itemize}
  \item If $(k,v) \in \vpred{\tau}(W)$, then $(k,v) \in \vpred{\tau}(W')$
  \item If $(k,e) \in \epred{\tau}(W)$, then $(k,e) \in \epred{\tau}(W')$
  \item If $(k,\gamma) \in \gpred{\Gamma}(W)$, then $(k,\gamma) \in \gpred{\Gamma}(W')$
    \qedhere
  \end{itemize}
\end{lemma}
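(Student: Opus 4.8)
The plan is to prove the three claims simultaneously: the statement about the value interpretation $\vpred{\tau}(\cdot)$ by structural induction on $\tau$, the statement about the expression interpretation $\epred{\tau}(\cdot)$ directly, and the statement about the context interpretation $\gpred{\Gamma}(\cdot)$ by induction on $\Gamma$ using the value case. The only external inputs are that the future-world relation $\sqsupseteq$ is a preorder (in particular transitive), as noted just above, and that it is \emph{extensional}: $W' \sqsupseteq W$ forces $\dom(W') \supseteq \dom(W)$ and $W'(l) = W(l)$ for every $l \in \dom(W)$.

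For the value interpretation, the case $\tau = \bool$ is immediate since $\vpred{\bool}(W)$ does not depend on $W$. For $\tau = \Tref{\sigma}$, if $(k,l) \in \vpred{\Tref{\sigma}}(W)$ then $l \in \dom(W)$ and $\xi(W(l)) \nequal[k] \vpred{\sigma}$; since $\dom(W') \supseteq \dom(W)$ we get $l \in \dom(W')$, and since $W'(l) = W(l)$ the $k$-equality is literally the same equality, so $(k,l) \in \vpred{\Tref{\sigma}}(W')$ — we do not even invoke the induction hypothesis. For $\tau = \tarrow{\tau_1}{\tau_2}$, suppose $(k, \lambda x\ldotp e) \in \vpred{\tarrow{\tau_1}{\tau_2}}(W)$; to show membership at $W'$ we must check that for every $W'' \sqsupseteq W'$, every $j \le k$, and every $(j,v) \in \vpred{\tau_1}(W'')$ we have $(j, e[v/x]) \in \epred{\tau_2}(W'')$. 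By transitivity $W'' \sqsupseteq W$, so this is exactly an instance of the hypothesis; again no induction hypothesis is needed. The point is that the Kripke-style quantification over future worlds built into the definition is precisely what makes the value interpretation monotone essentially by construction.

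The expression interpretation is handled the same way. Given $(k,e) \in \epred{\tau}(W)$ and unfolding membership at $W'$, we are handed $j \le k$, $i < j$, heaps $h, h'$, an expression $e'$, and $W_1 \sqsupseteq W'$ with $\hsat[j]{h}{W_1}$, $\tuple{h,e} \evaltos[i] \tuple{h',e'}$ and $\irred(h',e')$, and must produce $W_2 \sqsupseteq W_1$ with $\hsat[j-i]{h'}{W_2}$ and $(j-i, e') \in \vpred{\tau}(W_2)$; transitivity gives $W_1 \sqsupseteq W$, so the hypothesis applied to the very same data yields such a $W_2$, and $W_2 \sqsupseteq W_1 \sqsupseteq W'$. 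For the context interpretation we induct on $\Gamma$: the case $\Gamma = \mtenv$ is trivial, and for $\Gamma, x:\sigma$ we have $\gamma = \gamma'[x\mapsto v]$ with $(k,v) \in \vpred{\sigma}(W)$ and $(k,\gamma') \in \gpred{\Gamma}(W)$; the already-proven value case promotes the former to $W'$ and the induction hypothesis promotes the latter.

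There is essentially no hard step; the ``obstacle'' is conceptual rather than technical — one must recognise that monotonicity was designed into the interpretations by quantifying over future worlds wherever a world is consumed (function arguments, and the starting world of an evaluation), so that the proof collapses to transitivity of $\sqsupseteq$ plus extensionality of future worlds in the reference case. The only point demanding a moment's care is observing that in the reference case the relevant $k$-equality is preserved on the nose (because $W'(l) = W(l)$), not merely up to a coarser approximation. It is also worth remembering that this lemma is exactly what certifies that the value interpretation lands in the space $T = \World \monfp \UPred{\Val}$ of monotone semantic types, so the statement doubles as a well-definedness check for the whole construction.
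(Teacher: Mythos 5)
Your proof is correct. The paper itself leaves this lemma as an exercise, so there is no official proof to compare against; your argument — case analysis on $\tau$ (with the reference case resting on extensionality of $\sqsupseteq$, the arrow and expression cases collapsing to transitivity because the definitions already quantify over future worlds, and the context case by induction on $\Gamma$ using the value case) — is exactly the intended standard argument, and your observation that no induction hypothesis is actually consumed in the value cases is accurate.
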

Note that it is not all of our definitions that are monotone with respect to future worlds.
Specifically, heap satisfaction is not monotone because it would be nonsensical.
If heap satisfaction was monotone, then a heap would have to be able to handle that worlds that requires more locations allocated than it has.

We also need to make sure that our definitions are downwards closed.
This is reminiscent of Lemma~\ref{lem:rec-mono} from Section~\ref{sec:rec-types} (we do not call it monotonicity here to not confuse it with world monotonicity).
For instance, if a value is in the value interpretation at index $k$, then it should also be in there for any smaller step.
\begin{lemma}[Downwards closure]
  \label{lem:ref-downwards-closed}
  For $j \leq k$
  \begin{itemize}
  \item If $(k,v) \in \vpred{\tau}(W)$, then $(j,v) \in \vpred{\tau}(W)$
  \item If $(k,e) \in \epred{\tau}(W)$, then $(j,e) \in \epred{\tau}(W)$
  \item If $(k,\gamma) \in \gpred{\Gamma}(W)$, then $(j,\gamma) \in \gpred{\Gamma}(W)$
  \item If $\hsat{h}{W}$, then $\hsat[j]{h}{W}$
\qedhere
  \end{itemize}
\end{lemma}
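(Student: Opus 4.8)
The plan is to read all four statements off directly from the definitions: in each case the only task is to check that shrinking the index from $k$ to some $j \le k$ breaks none of the conditions, and every such check is index arithmetic rather than an excursion into the semantic model.

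I would treat the value and expression interpretations together, by case analysis on the shape of $\tau$; the value interpretation is structurally well founded (there are no recursive types, and $\vpred{\Tref{\tau}}$ mentions only $\vpred{\tau}$), so an induction on $\tau$ is available, but it is not actually needed, since every step-index quantifier in the definitions is already bounded above by the current index. For $\tau=\bool$ there is nothing to do, as membership in $\vpred{\bool}(W)$ is independent of the index. For $\tau=\Tref{\tau'}$, if $(k,l)\in\vpred{\Tref{\tau'}}(W)$ then $l\in\dom(W)$ and $\xi(W(l))\nequal\vpred{\tau'}$; since $j\le k$, truncation at $j$ factors through truncation at $k$ (this is where the downwards closure of the $k$-equalities $\nequal[\cdot]$ is used, immediate from the definition of $\cut{\cdot}$), so $\xi(W(l))\nequal[j]\vpred{\tau'}$, hence $(j,l)\in\vpred{\Tref{\tau'}}(W)$. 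For $\tau=\tarrow{\tau_1}{\tau_2}$, given $(k,\lambda x\ldotp e)\in\vpred{\tarrow{\tau_1}{\tau_2}}(W)$, to show $(j,\lambda x\ldotp e)\in\vpred{\tarrow{\tau_1}{\tau_2}}(W)$ take $W'\future W$, $i\le j$ and $(i,v)\in\vpred{\tau_1}(W')$; by transitivity $i\le k$, so the assumption applied at $W',i,v$ already yields $(i,e[v/x])\in\epred{\tau_2}(W')$. The expression clause follows the same template: given $(k,e)\in\epred{\tau}(W)$ and $j\le k$, to show $(j,e)\in\epred{\tau}(W)$ pick any $m\le j$, $i<m$, heaps $h,h'$, an expression $e'$ and $W'\future W$ with $h:_m W'$, $\tuple{h,e}\evaltos[i]\tuple{h',e'}$ and $\irred(h',e')$; since $m\le j\le k$, the instance of $(k,e)\in\epred{\tau}(W)$ at $m$ applies verbatim and supplies the required $W''\future W'$ with $h':_{m-i}W''$ and $(m-i,e')\in\vpred{\tau}(W'')$.

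The environment clause I would prove by induction on $\Gamma$. The base case is immediate since $\gpred{\emptyset}(W)$ ignores the index. For $\Gamma,x:\tau$, if $(k,\gamma[x\mapsto v])\in\gpred{\Gamma,x:\tau}(W)$ then $(k,v)\in\vpred{\tau}(W)$ and $(k,\gamma)\in\gpred{\Gamma}(W)$; for $j\le k$, the value clause just established gives $(j,v)\in\vpred{\tau}(W)$ and the induction hypothesis gives $(j,\gamma)\in\gpred{\Gamma}(W)$, so $(j,\gamma[x\mapsto v])\in\gpred{\Gamma,x:\tau}(W)$. Finally, heap satisfaction is essentially by definition: if $\hsat{h}{W}$ then $\dom(h)=\dom(W)$ and $(k,h(l))\in\xi(W(l))(W)$ for each $l\in\dom(h)$, and since $\xi(W(l))(W)\in\UPred{\Val}$ is downwards closed, $j\le k$ gives $(j,h(l))\in\xi(W(l))(W)$; hence $\hsat[j]{h}{W}$.

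The only place that touches the step-indexed structure rather than bare inequalities is the reference case, where one unfolds $\nequal$ as the truncation $\cut{\cdot}$ and observes that re-truncating at the smaller level $j$ is stable — so that is where I would be most careful. It is worth noting that all four objects are genuinely downwards-closed, step-indexed predicates, which is precisely what makes the arguments above one-liners; checking that the definitions actually inhabit the structured space in which the model is built is the separate exercise alluded to just above, and I would not carry it out here.
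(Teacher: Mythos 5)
Your proof is correct, and it is the intended argument: the paper itself leaves Lemma~\ref{lem:ref-downwards-closed} as an exercise, and the definitions are deliberately set up so that every internal step-index quantifier is bounded by the outer index, making each clause immediate. You correctly isolate the two places where anything beyond quantifier restriction happens --- the reference case, which uses that $\nequal[k]$ implies $\nequal[j]$ for $j \leq k$ (itself one of the listed exercises on $k$-equalities), and heap satisfaction, which uses that elements of $\UPred{\Val}$ are downwards closed by construction.
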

Finally, we need a substitution lemma for the \texttt{T-Abs} case for the Fundamental Property proof.
\begin{lemma}
  \label{lem:ref-substitution}
  Let $e$ be syntactically well-formed, $v$ a closed value, $\gamma$ a substitution of closed values where $x$ is not mapped
  \[
    \gamma(\subst{e}{v}{x}) \equiv \gamma[x \mapsto v](e)
  \]
\end{lemma}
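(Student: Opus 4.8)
The plan is to prove Lemma~\ref{lem:ref-substitution} by induction on the size of $\gamma$, exactly mirroring the Substitution Lemma proved in Section~\ref{sec:stlc-type-safety}. First a remark on what there is to do: if one reads $\extsub{\gamma}{x}{v}$ through Definition~\ref{def:substitution}, it is \emph{defined} by $\extsub{\gamma}{x}{v}(e) = \gamma(\subst{e}{v}{x})$, so at that level the statement holds on the nose (and the $\equiv$ in the statement only absorbs renaming of bound variables). The real content of the lemma is the reconciliation of this ``sequential'' reading of a substitution with the ``simultaneous'', finite-map reading that we implicitly use when we write $\extsub{\gamma}{x}{v}$ in $\gprep{\Gamma,x:\tau}$. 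I carry out the induction to make that reconciliation explicit, and the hypotheses (``$v$ a closed value'', ``$\gamma$ maps to closed values'', ``$x \notin \dom(\gamma)$'') are precisely what keeps variable capture out of the way.

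For the base case $\gamma = \emptyset$, both $\emptyset(\subst{e}{v}{x})$ and $\extsub{\emptyset}{x}{v}(e)$ unfold to $\subst{e}{v}{x}$ by Definition~\ref{def:substitution}, so they are equal. For the inductive step, write $\gamma = \extsub{\gamma'}{y}{v'}$ with $y \neq x$, $y \notin \dom(\gamma')$, and $v'$ closed. The induction hypothesis is $\gamma'(\subst{e''}{v}{x}) \equiv \extsub{\gamma'}{x}{v}(e'')$ for every syntactically well-formed $e''$; its side conditions are inherited by $\gamma'$ from $\gamma$. The chain of equalities is: $\extsub{\extsub{\gamma'}{y}{v'}}{x}{v}(e) = \extsub{\extsub{\gamma'}{x}{v}}{y}{v'}(e)$, swapping the two bindings, which is sound since $v$ and $v'$ are closed and hence neither can capture a variable of the other; then $= \extsub{\gamma'}{x}{v}(\subst{e}{v'}{y})$ by Definition~\ref{def:substitution}; then $= \gamma'(\subst{(\subst{e}{v'}{y})}{v}{x})$ by the induction hypothesis applied to the (still well-formed) term $\subst{e}{v'}{y}$; then $= \gamma'(\subst{(\subst{e}{v}{x})}{v'}{y})$ because substitutions of closed values for distinct variables commute ($x \neq y$, and $v,v'$ are closed so they contain neither $x$ nor $y$); then $= \extsub{\gamma'}{y}{v'}(\subst{e}{v}{x})$ by Definition~\ref{def:substitution} again. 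The left end is $\extsub{\gamma}{x}{v}(e)$ and the right end is $\gamma(\subst{e}{v}{x})$, which is what we wanted.

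I do not expect a genuine obstacle here: the proof is entirely parallel to the earlier Substitution Lemma, and the only delicate point is the bookkeeping around variable capture and commuting substitutions, which is handled by the closedness hypotheses (and up to $\equiv$). One could alternatively dispatch the whole lemma in a single line by citing Definition~\ref{def:substitution}, but spelling out the induction documents that the finite-map notation $\extsub{\gamma}{x}{v}$ used in $\gprep{-}$ behaves as intended.
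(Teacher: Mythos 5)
Your proof is correct and follows essentially the same route the paper itself uses: the paper leaves this lemma as an exercise but proves the analogous Substitution Lemma in Section~\ref{sec:stlc-type-safety} by exactly this induction on the size of $\gamma$, with the same binding-swap justified by closedness of the values. The one place you deviate — the extra step commuting $\subst{(\subst{e}{v'}{y})}{v}{x}$ with $\subst{(\subst{e}{v}{x})}{v'}{y}$ — is forced by the different orientation of this statement ($\gamma(\subst{e}{v}{x})$ rather than $\subst{\gamma(e)}{v}{x}$) and is handled correctly by the hypotheses that $x\neq y$ and $v,v'$ are closed.
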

We can now state and prove the fundamental property.
\begin{theorem}[Fundamental Property]
  \label{thm:ref-ftlr}
  \[
    \Gamma \vdash e : \tau \implies \Gamma \models e : \tau
    \qedhere
  \]  
\end{theorem}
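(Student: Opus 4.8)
The plan is to prove Theorem~\ref{thm:ref-ftlr} by induction on the derivation of $\Gamma \vdash e : \tau$, handling one case per typing rule (the usual ``compatibility lemma'' decomposition). The proof leans throughout on the three auxiliary lemmas stated just above — world monotonicity (Lemma~\ref{lem:world-monotonicity}), downwards closure (Lemma~\ref{lem:ref-downwards-closed}), and the substitution lemma (Lemma~\ref{lem:ref-substitution}) — and, since the operational semantics is non-deterministic, on the fact that a reduction $\tuple{h, E[e_0]} \evaltos \tuple{h'', e''}$ to an irreducible configuration must pass through a stage where $e_0$ itself has reduced to a value, so we may split such a sequence into a prefix that evaluates the redex-to-be and a suffix working in the remaining context.

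Unfolding the goal: in each case we fix $W$, $k \ge 0$, and $(k,\gamma) \in \gpred{\Gamma}(W)$, and must show $(k, \gamma(e)) \in \epred{\tau}(W)$; unfolding $\epred{-}$ we further fix $j \le k$, $i < j$, heaps $h, h'$, a world $W' \future W$ with $\hsat[j]{h}{W'}$, and $\tuple{h, \gamma(e)} \evaltos[i] \tuple{h', e'}$ with $\irred(h', e')$, and must produce $W'' \future W'$ with $\hsat[j-i]{h'}{W''}$ and $(j-i, e') \in \vpred{\tau}(W'')$. For \textsc{T-True} and \textsc{T-False}, $\gamma(e)$ is already an irreducible value, so $i = 0$, we take $W'' = W'$, and $(j, \true) \in \vpred{\bool}(W')$ (resp.\ $\false$) holds by definition. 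For \textsc{T-Var}, $\gamma(x)$ is a value that $(k,\gamma) \in \gpred{\Gamma}(W)$ guarantees lies in $\vpred{\Gamma(x)}(W)$; world monotonicity moves it to $W'$ and downwards closure drops the index to $j$. For \textsc{T-Abs} on $\tlabs{x}{\tau_1}{\gamma(e)}$ (again a value, so $i=0$, $W''=W'$), we must check membership in $\vpred{\tarrow{\tau_1}{\tau_2}}(W')$: given $W''' \future W'$, $m \le j$, and $(m,v) \in \vpred{\tau_1}(W''')$, we build $(m, \gamma[x\mapsto v]) \in \gpred{\Gamma, x:\tau_1}(W''')$ by transporting $\gamma$ along $W''' \future W$ (transitivity of $\future$) and shrinking its index, apply the induction hypothesis for $e$ to obtain $(m, \gamma[x\mapsto v](e)) \in \epred{\tau_2}(W''')$, and rewrite with the substitution lemma to the required $(m, \gamma(e)[v/x]) \in \epred{\tau_2}(W''')$.

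The application and conditional cases are where worlds get threaded through evaluation. For \textsc{T-App} on $\gamma(e_1)\,\gamma(e_2)$, decompose the reduction using the contexts $[]\,\gamma(e_2)$ then $v_1\,[]$: the induction hypothesis on $e_1$ produces a value $v_1$ (a $\lambda$, by the shape of $\vpred{\tarrow{\tau_2}{\tau}}$), a world $W_1 \future W'$, and a heap satisfying $W_1$ with some steps left; the induction hypothesis on $e_2$ — after moving $\gamma$ to $W_1$ by monotonicity — produces $v_2$, a world $W_2 \future W_1$, and a membership $(v_2 \in \vpred{\tau_2})$ at the residual index over $W_2$; then world monotonicity on $v_1$ together with the definition of $\vpred{\tarrow{\tau_2}{\tau}}(W_2)$ gives $v_1\,v_2 \in \epred{\tau}(W_2)$, which we unfold once more against the remaining reduction, reconciling the several step counts with downwards closure. \textsc{T-If} is the same pattern but simpler, using $\vpred{\bool}$ to pick the branch.

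The genuinely new work — and the step I expect to be the main obstacle — is the three reference rules, of which \textsc{T-Alloc} is the hardest because it is the only one that must \emph{construct} a new world. For \textsc{T-Alloc} on $\Ealloc{\gamma(e)}$, reduce inside $\Ealloc{[]}$: the induction hypothesis on $e$ gives a value $v$, a world $W_1 \future W'$, a heap $h_1$ with $\hsat{h_1}{W_1}$, and $v \in \vpred{\tau}(W_1)$ at the residual index; the next step allocates a fresh $l \notin \dom(h_1) = \dom(W_1)$. We then set $W'' = W_1$ extended at $l$ by the semantic type $W \mapsto \vpred{\tau}(W)$ (wrapped into $\hat T$ via the inverse of the isomorphism $\xi$) — legitimate precisely because $\vpred{\tau}$ is world-monotone and downwards closed, hence a genuine element of the space of semantic types — check $W'' \future W_1$, check $l \in \vpred{\Tref{\tau}}(W'')$ (which reduces to $\xi(W''(l)) \nequal \vpred{\tau}$, immediate by construction), and re-establish $\hsat{h_1[l\mapsto v]}{W''}$: at the new cell we need $v \in \vpred{\tau}(W'')$, got from $v \in \vpred{\tau}(W_1)$ by monotonicity, and at the old cells we transport the old satisfaction along $W'' \future W_1$ using monotonicity of the stored predicates. \textsc{T-Deref} on $\Ederef{\gamma(e)}$ evaluates $\gamma(e)$ to a location $l$ with $l \in \dom(W_1)$ and $\xi(W_1(l)) \nequal \vpred{\tau}$; combined with $\hsat{h'}{W_1}$ at $l$ and downwards closure, the $k$-equality yields $h'(l) \in \vpred{\tau}(W_1)$ at the index remaining after the dereference step, which is exactly what $\tuple{h', \Ederef{l}} \evalto \tuple{h', h'(l)}$ requires. \textsc{T-Assign} on $\Eassign{\gamma(e_1)}{\gamma(e_2)}$ evaluates $\gamma(e_1)$ to a location $l$ and $\gamma(e_2)$ to a value $v$, threading worlds as in \textsc{T-App}, keeps the same world $W_2$, and re-establishes heap satisfaction at the overwritten cell from $v \in \vpred{\tau}(W_2)$ and $\xi(W_2(l)) \nequal \vpred{\tau}$. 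The recurring subtlety in all three is index arithmetic: each of $\Ealloc{}$, $\Eassign{}$, $\Ederef{}$ (and $\unfold$, for the $\fold$ value it opens) costs one step, so the $k$-equalities only line up because the residual index is strictly below the index at which $\xi(W(l)) \nequal \vpred{\tau}$ holds — which is exactly where the truncation $\cut{B}$ in the definition of $\nequal$ earns its keep.
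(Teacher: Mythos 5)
Your proposal follows essentially the same route as the paper: induction on the typing derivation, the same three auxiliary lemmas (world monotonicity, downwards closure, substitution), the same choice of extended world $W_1[l \mapsto \xi^{-1}(\vpred{\tau})]$ in the \textsc{T-Alloc} case, and the same use of the $k$-equality truncation to absorb the step spent by the dereference. The one imprecision worth fixing is your opening claim that a reduction $\tuple{h,E[e_0]} \evaltos \tuple{h',e'}$ to an irreducible configuration ``must pass through a stage where $e_0$ itself has reduced to a value'': that is false as a fact about the operational semantics alone, since $e_0$ could get stuck at a non-value (e.g.\ a dereference of an unallocated location) and thereby leave $E[e_0]$ irreducible; the paper instead case-splits on whether the subterm reaches a value or a stuck non-value, and in the latter case applies the induction hypothesis to place the irreducible subterm in a value interpretation, contradicting that it is not a value. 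With that case analysis made explicit, your argument coincides with the paper's.
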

\begin{proof}
  By induction over the typing derivation.
  In all cases, the case for $k=0$ is vacuously true as no $j<0$ exists.
 \case{\textsc{T-True}} Let $W$, $k > 0$, and $(k,\gamma) \in \gpred{\Gamma}(W)$ be given and show:
  \[
    (k,\true)\in\epred{\bool}(W)
  \]
  To this end let $j < k$, $h$, $h'$, and $e'$ be given such that
  \begin{itemize}
  \item $\hsat[j]{h}{W}$,
  \item $\tuple{h,\true}\evaltos[j] \tuple{h',e'}$, and
  \item $\irred(h',e')$.
  \end{itemize}
  By the evaluation relation, it must be the case that $j=0$, $h'=h$ and $e'=\true$.
  Now pick $W' = W$.
  By assumption we have $\hsat[j]{h}{W}$ which is one of the two things we need to show.
  The other thing we must show is $(k,\true)\in\epred{\bool}(W)$.
  This follows immediately from the definition of the value interpretation as boolean values are always in there.
  \case{\textsc{T-False}} Analogous to the case for \textsc{T-True}.
  \case{\textsc{T-Var}} Also analogous to the case for \textsc{T-True} but to argue $(k,\gamma(x))\in\epred{\tau}(W)$, we use assumptions $(k,\gamma) \in \gpred{\Gamma}(W)$ and $\Gamma \vdash x : \tau$.

~
  \case{\textsc{T-Deref}}
  Let $W$, $k > 0$, and $(k,\gamma) \in \gpred{\Gamma}(W)$ be given and show:
  \[
    (k,\Ederef{\gamma(e)})\in\epred{\tau}(W)
  \]
  To this end let $j < k$, $h$, $h'$, and $e'$ be given such that
  \begin{itemize}
  \item $\hsat[j]{h}{W}$,
  \item $\tuple{h,\Ederef{\gamma(e)}}\evaltos[j] \tuple{h',e'}$, and
  \item $\irred(h',e')$.
  \end{itemize}
  By the assumed evaluation, either the dereferenced expression $\gamma{e}$ gets stuck, or it evaluates to some value, i.e.\ 
  \begin{enumerate}
  \item \label{item:ref-ftlr-deref-stuck} $\tuple{h,\gamma(e)} \evaltos[j] \tuple{h'',e''}$ and $\irred(h'',e'')$ for some $h''$ and non-value $e''$.
  \item \label{item:ref-ftlr-deref-val} $\tuple{h,\gamma(e)} \evaltos[i] \tuple{h'',v}$ and for some $h''$, value $v$, and $i<j$.
  \end{enumerate}
  In case~\ref{item:ref-ftlr-deref-stuck}, we use our induction hypothesis: $\Gamma \models e : \Tref{\tau}$.
  From this we conclude that $(k-j,e'') \in \vpred{\Tref{\tau}}(W')$ for some $W'$ which means that $e''$ is a value contradicting that it is a non-value.

  In case~\ref{item:ref-ftlr-deref-val}, we use the same induction hypothesis to get
  \begin{itemize}
  \item $\hsat[k-i]{h''}{W''}$ and
  \item $(k-i,v) \in \vpred{\Tref{\tau}}(W'')$
  \end{itemize}
  for some $W'' \future W$.
  By the definition of the value interpretation this means that $v = l$ for some location $l$.
  Now the dereference can take place, i.e.
  \[
    \tuple{h'',l} \evalto \tuple{h'',h''(l)}
  \]
  The heap only contains values, so $\tuple{h'',h''(l)}$ must be irreducible.
  Since we consider the assumed evaluation it must be the case that $h' = h''$ and $e' = h'(l)$.
  We now want to pick the world necessary for the expression relation.
  Our pick needs to satisfy $\hsat[k-j]{h'}{W'}$ and $W' \future W$.
  Since the dereference did not change the heap and we assumed heap satisfaction for $h''$ under $W''$, we can simply pick $W' = W''$.
  However, the assumed heap satisfaction is for step $k-i$, but we need it for the smaller step $k-j$, i.e. $\hsat[k-j]{h'}{W'}$.
  Generally when need something for a specific step, it suffices to show it for a greater step because the result will follow from Lemma~\ref{lem:ref-downwards-closed} which it also does in this case.

  It remains to show $(k-j,h'(l)) \in \vpred{\tau}(W')$.
  That is, the value dereferenced from the heap is safe. 
  To show this we need the following:
  \begin{itemize}
  \item From $\hsat[k-i]{h'}{W'}$, we get that $(k-i,h'(l)) \in \xi(W'(l))$.
  \item From $(k-i,l) \in \vpred{\Tref{\tau}}(W')$, we get $\xi(W'(l)) \nequal[k-i] \vpred{\tau}$.
  \end{itemize}
  From this, we conclude $(k-i-1,h'(l)) \in \vpred{\tau}(W)$ (remember that the definition of $k$-equality is defined in terms of $k$-cut which takes away a step).
  We know that $i<j$, so $k-j \le k-i-1$, so the desired result follows from Lemma~\ref{lem:ref-downwards-closed}.
 \case{\textsc{T-Alloc}} 
  Let $W$, $k > 0$, and $(k,\gamma) \in \gpred{\Gamma}(W)$ be given and show:
  \[
    (k,\Ealloc{\gamma(e)})\in\epred{\Tref{\tau}}(W)
  \]
  To this end let $j < k$, $h$, $h'$, and $e'$ be given such that 
  \begin{itemize}
  \item $\hsat[j]{h}{W}$ and
  \item $\tuple{h,\Ederef{\gamma(e)}} \evaltos[j] \tuple{h',e'}$, and
  \item $\irred(h',e')$.
  \end{itemize}
  By the assumed evaluation one of the following must be the case
  \begin{enumerate}
  \item \label{item:ref-ftlr-alloc-stuck} $\tuple{h,\gamma(e)} \evaltos[j] \tuple{h'',e''}$ and $\irred(h'',e'')$ for some $h''$ and non-value $e''$.
  \item \label{item:ref-ftlr-alloc-val} $\tuple{h,\gamma(e)} \evaltos[i] \tuple{h'',v}$ and for some $h''$, value $v$, and $i<j$.
  \end{enumerate}
  In case~\ref{item:ref-ftlr-alloc-stuck}, we use our induction hypothesis: $\Gamma \models e : \tau$.
  This gives us that $(k-j,e'') \in \vpred{\tau}(W')$ for some $W'$ which means that $e''$ is a value contradicting that it is not.

  In case~\ref{item:ref-ftlr-alloc-val}, we use the same induction hypothesis to get
  \begin{itemize}
  \item $\hsat[k-i]{h''}{W''}$ and
  \item $(k-i,v) \in \vpred{\Tref{\tau}}(W'')$
  \end{itemize}
  for some $W'' \future W$.
  Now that $e$ has been evaluated to a value, the allocation can happen.
  Therefore, the next evaluation step is
  \[
    \tuple{h'',v} \evalto \tuple{h''[l \mapsto v], l}\text{ for $l\notin\dom(h'')$}
  \]
  As we have just been looking at the assumed evaluation and this expression is irreducible, it must be the case that $e'=l$ and $h'=h''[l\mapsto v]$.
  Now we need to pick the world for the expression interpretation.
  In the last step of the evaluation, a new location has been allocated on the heap, so we cannot pick the assumed world $W''$ as it does not mention this location.
  We do, however, use it as a basis for our world and pick $W' = W''[l \mapsto \xi^{-1}(\vpred{\tau})]$, i.e.\ we allow anything from the value interpretation of $\tau$ to reside at location $l$ (note that we need to apply $\xi^{-1}$ for technical reasons - it can be safely ignored).
  This means that we need to show the following
  \begin{itemize}
  \item $W' \future W$

    By transitivity of $\future$ it suffices to show $W' \future W''$ as we already know $W'' \future W$.
    Intuitively, this follows from the fact that $W'$ extends $W''$ with $l$.
    We know that $l$ is not in the domain of $W''$ by assumptions $l\notin\dom(h'')$ and $\hsat[k-i]{h''}{W''}$.

  \item $\hsat[k-j]{h''[l \mapsto v]}{W'}$

    By the above argument, we know that $l$ is not in the domain of $h''$ and that $\dom(h'') = \dom(W'')$, so from the way we defined $W'$ it easily follows that $\dom(h''[l \mapsto v]) = \dom(W')$.

    Take $l'\in\dom(W')$. We need to consider two cases:
    \begin{itemize}
    \item $l' = l$: In this case we must show $(k-j,h''[l \mapsto v](l)) \in \xi(W'(l))(W')$, i.e.\ $(k-j,v) \in \vpred{\tau}(W')$, which follows by assumption and Lemma~\ref{lem:ref-downwards-closed}.

    \item $l' \neq l$: This follows by $\hsat[k-i]{h''}{W''}$ and the fact that $\xi(W''(l))$ is monotone with respect to the world (to see that this must be the case, take a look at the recursive domain equation).
    \end{itemize}
  \item $(k-i-1,l) \in \vpred{\Tref{\tau}}(W')$

    This amounts to showing $\xi(W(l)) = \xi(\xi^{-1}(\vpred{\tau})) \nequal[k-i-1] \vpred{\tau}$ which is true as if they are equal, then they are also equal if we limit them to everything of index less than $k-i-1$.
  \end{itemize}
~
\end{proof}
We leave the remaining cases of the above proof as exercises.

The safety predicate we used in previous sections needs to change a bit in this setting, so it takes the heap into account
\[
  \begin{gathered}
    \safe(e)\\
    \text{iff}\\
    \forall h', e' \ldotp \tuple{\emptyset,e} \evaltos \tuple{h',e'} \implies \val(e') \vee \exists h'', e''\ldotp \tuple{h',e'} \evalto \tuple{h'',e''}
  \end{gathered}
\] 
We now show the standard ``second lemma'' that says that the logical predicate is adequate to show safety.
\begin{lemma}
  \label{lem:ref-adeq}
  \[
    \emptyset \models e : \tau \implies \safe(e)
  \]
\end{lemma}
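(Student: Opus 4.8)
The plan is to unfold the hypothesis $\emptyset \models e : \tau$ at a suitably large step index and at the empty world, and then read safety straight off the expression interpretation. First I would note that $\gpred{\emptyset}(W) = \{(k,\emptyset)\}$ for every $W$, so $\emptyset \models e : \tau$ says precisely that $(k,e) \in \epred{\tau}(W)$ for every $k \ge 0$ and every world $W$; in particular $(k,e) \in \epred{\tau}(\emptyset)$, and the empty heap satisfies the empty world, $\hsat{\emptyset}{\emptyset}$, since $\dom(\emptyset) = \dom(\emptyset)$ and the location clause is vacuous.

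To prove $\safe(e)$, suppose $\tuple{\emptyset,e} \evaltos \tuple{h',e'}$; this uses some number $n$ of reduction steps, i.e.\ $\tuple{\emptyset,e} \evaltos[n] \tuple{h',e'}$. If $\neg\irred(h',e')$ then $\exists h'',e''\ldotp \tuple{h',e'} \evalto \tuple{h'',e''}$, which is the second disjunct of $\safe$. Otherwise $\irred(h',e')$, and I would instantiate $(n+1,e) \in \epred{\tau}(\emptyset)$, choosing in the definition of the expression relation the outer index $j := n+1$, the inner step-count variable equal to $n$ (which satisfies $n < n+1$), heaps $\emptyset$ and $h'$, expression $e'$, and $W' := \emptyset \future \emptyset$. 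The three premises $\hsat[n+1]{\emptyset}{\emptyset}$, $\tuple{\emptyset,e} \evaltos[n] \tuple{h',e'}$, and $\irred(h',e')$ all hold, so we obtain a world $W'' \future \emptyset$ with $\hsat[1]{h'}{W''}$ and $(1,e') \in \vpred{\tau}(W'')$.

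It remains to observe that any term in the value interpretation is a syntactic value: a short case analysis on $\tau$ shows every element of $\vpred{\tau}(W'')$ is $\true$, $\false$, a $\lambda$-abstraction, or a location $l$, all of which are values. Hence $\val(e')$, which is the first disjunct of $\safe(e)$, and the lemma follows.

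The proof is entirely a matter of unfolding definitions, so there is no genuine obstacle. The only points needing a moment of care are: choosing the step index strictly larger than the length $n$ of the given reduction sequence, so that the strict inequality in the clause of $\epred{\tau}$ is met and the residual index $j - n = 1$ stays positive; checking $\hsat{\emptyset}{\emptyset}$; and the small fact that membership in $\vpred{\tau}$ entails being a value, which could alternatively be isolated as a reusable observation.
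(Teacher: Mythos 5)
Your proposal is correct and follows essentially the same route as the paper's proof: case-split on $\irred(h',e')$, instantiate $\emptyset \models e : \tau$ at step index one greater than the length of the given reduction sequence and at the empty world, use $\hsat{\emptyset}{\emptyset}$, and conclude $\val(e')$ from membership in the value interpretation. If anything, you are slightly more careful than the paper about the index bookkeeping ($j = n+1$, $i = n$, residual index $1$) and about making explicit that $\vpred{\tau}$ contains only syntactic values.
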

\begin{proof}
  Assume $\emptyset \models e : \tau$ and let $h$, $h'$, $e'$ be given such that $\tuple{h,e} \evaltos \tuple{h',e'}$.
  Proceed by case on $\irred(h',e')$.
  \case{$\neg\irred(h',e')$}: In this case, it follows by definition of $\irred$ that there exists $h''$ and $e''$ such that $\tuple{h',e'} \evalto \tuple{h'',e''}$.
  \case{$\irred(h',e')$}: Say the evaluation takes $k$ steps to do the assumed evaluation, i.e.\ $\tuple{\emptyset,e} \evaltos[k] \tuple{h',e'}$.
  Now use assumption $\emptyset \models e : \tau$ to get $(k+1,e) \in \epred{\tau}(\emptyset)$.
  If we use this with the assumed evaluation, $\irred(h',e')$, and the fact that $\hsat[k]{\emptyset}{\emptyset}$ (trivially true), then we get $W' \future \emptyset$ such that $\hsat[1]{h'}{W'}$ and $(1,e) \in \vpred{\tau}(W')$.
  This means that $e$ is indeed a value.
\end{proof}
It is now a simple matter to prove type safety.
\begin{theorem}[STLC with references is type safe]
  If $\emptyset \vdash e : \tau$, then $\safe(e)$
\end{theorem}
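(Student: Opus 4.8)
The plan is to derive the result by chaining together the two main ingredients already established for this language: the Fundamental Property (Theorem~\ref{thm:ref-ftlr}) and the adequacy lemma (Lemma~\ref{lem:ref-adeq}). Since the statement concerns a closed, well-typed term, the typing context is empty, which is exactly the degenerate case both of those results specialize to cleanly.

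Concretely, I would first instantiate the Fundamental Property with $\Gamma = \emptyset$: from the hypothesis $\emptyset \vdash e : \tau$ we immediately obtain $\emptyset \models e : \tau$. Then I would feed this directly into Lemma~\ref{lem:ref-adeq}, which states $\emptyset \models e : \tau \implies \safe(e)$, yielding $\safe(e)$ as desired. That is the entire argument; there is no induction or case analysis left to perform at this level.

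It is worth noting where the real work sits, so that the brevity of this proof is not misleading. All of the difficulty has been discharged earlier: the Fundamental Property is proved by induction on the typing derivation and relies on world monotonicity (Lemma~\ref{lem:world-monotonicity}), downwards closure (Lemma~\ref{lem:ref-downwards-closed}), and the substitution lemma (Lemma~\ref{lem:ref-substitution}); and the adequacy lemma unwinds the definition of $\epred{\tau}$ against the trivial heap satisfaction $\hsat[k]{\emptyset}{\emptyset}$, choosing the step index one larger than the length of the observed reduction so that a residual step remains to force the final configuration to be a value. If I were expanding this, the only genuinely delicate point to re-check would be that the step-index bookkeeping in the adequacy argument lines up — picking $(k+1,e)\in\epred{\tau}(\emptyset)$ rather than $(k,e)$ — but as stated we may simply invoke the lemma. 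Hence the proof is a two-line composition.

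\begin{proof}
  Assume $\emptyset \vdash e : \tau$. By the Fundamental Property (Theorem~\ref{thm:ref-ftlr}) applied with the empty typing context, we have $\emptyset \models e : \tau$. By Lemma~\ref{lem:ref-adeq}, this gives $\safe(e)$.
\end{proof}
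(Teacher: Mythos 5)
Your proof is correct and is exactly the paper's own argument: instantiate the Fundamental Property (Theorem~\ref{thm:ref-ftlr}) at the empty context to obtain $\emptyset \models e : \tau$, then apply the adequacy lemma (Lemma~\ref{lem:ref-adeq}) to conclude $\safe(e)$. The paper states this composition in one line; your additional remarks about where the real work lives are accurate but not needed for the proof itself.
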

\begin{proof}
  Follows from Theorem~\ref{thm:ref-ftlr} and Lemma~\ref{lem:ref-adeq}.
\end{proof}
\subsection{Exercises}
\begin{enumerate}
\item Verify that the $k$-equality on $\UPred{A}$ satisfy the necessary properties. That is
  \begin{itemize}
  \item $\nequal[0]$ is the total relation, i.e.\ $\forall a,a' \in \UPred{A}\ldotp a \nequal[0] a'$.
  \item $\nequal[k+1] \subseteq \nequal$, i.e.\ $\forall a,a' \in \UPred{A}\ldotp a \nequal[k+1] a' \implies a \nequal a'$
  \end{itemize}
  A $k$-equality must also satisfy the following property omitted from the above presentation
  \[
    \forall a,a'\in \UPred{A}\ldotp (\forall k \in \nats a \nequal a') \implies a = a'
  \]
  That is, if two elements approximates each other for any index, then they should in fact be equal.

\item Verify that the future world relation is a partial order. That is for worlds $W$, $W'$ and $W''$ verify that it satisfies the following three properties
  \begin{itemize}
  \item Reflexivity: $W \future W$
  \item Antisymmetry: If $W' \future W$ and $W \future W'$, then $W = W'$
  \item Transitivity: If $W' \future W$ and $W'' \future W'$, then $W'' \future W$
  \end{itemize}
\item Prove the future world monotonicity lemma, i.e.\ Lemma~\ref{lem:world-monotonicity}.
\item Prove the downwards closure lemma, i.e.\ Lemma~\ref{lem:ref-downwards-closed}.
\item Prove the substitution lemma, i.e.\ Lemma~\ref{lem:ref-substitution}.
\item In the space where the recursive domain equation is solved, functions must preserve the added structure, i.e. step indices.
  Specifically, all functions must be non-expansive.
  A function is non-expansive if it preserves $k$-equalities, i.e. $f : X \rightarrow Y$ is non-expansive if the following holds true
  \[
    \forall x,x'\in X\ldotp \forall k \in N\ldotp x \nequal x' \implies f(x) \nequal f(x')
  \]
  Note that the first $k$-equality is defined for elements in $X$ and the second $k$-equality is defined for elements in $Y$.
  It is these $k$-equalities that must be preserved.
  The value interpretation is a function of the space in which the recursive domain equation is solved. Prove that this is the case.
\item Prove the remaining cases of Theorem~\ref{thm:ref-ftlr}, i.e.\ \textsc{T-Assign}, \textsc{T-Abs}, \textsc{T-App}, and \textsc{T-If}.
\end{enumerate}



\subsection{Further Reading}
We refer the interested reader to the note \emph{Logical Relations and References}~\citep{Skorstengaard:references} for further reading.
It contains: important details that we omitted here for the sake of presentation, a more expressive language which allows for interesting examples but also calls for a different kind of world, and a logical relation rather than a predicate.
There is some overlap between this section and the note in question.

\section*{Acknowledgements} It is established practice for authors to accept responsibility for any and all mistakes in documents like this.
I, however, do not.
If you find anything amiss, please let me know so I can figure out who of the following are to blame: Amal Ahmed, Mathias Høier, Morten Krogh-Jespersen, Kent Grigo, and Kristoffer Just Andersen.

\bibliography{ref}
\appendix
\section{Landin's Knot}
\label{app:landins-knot}
In this section, we show an implementation of
\begin{lstlisting}[escapeinside={@}{@},basicstyle=\footnotesize\ttfamily]
(let x = ref (\ y : int. y) in
    x := (\ n : int. (!x 0));
    !x) 0
\end{lstlisting}
in the language with references presented in Section~\ref{sec:references}.
We also demonstrate how this causes a recursion.

First we define the following gadgets:
\begin{align*}
  \dummy & = \lambda \; x : int \ldotp x\\
  \return & = \lambda y : \tarrow{\int}{\int} \ldotp \Ederef{x} \\
  \knot & = \Eassign{x}{(\lambda n : \int\ldotp (\Ederef{x} \; 0))}
\end{align*}
The implementation of the program is
\[
((\lambda x : \Tref{(\tau)}\ldotp \return\; \knot) \; (\Ealloc{ \dummy})) \; 0
\]
and it does indeed type check which the type derivation tree on the next page demonstrates:

\begin{landscape}
Let $\tau = \tarrow{\int}{\int}$
\begin{mathpar}
\inferrule*{
\inferrule*{
    \inferrule*{ }
             { x : \Tref{\tau} ; \mtenv \vdash x : \Tref{\tau} }
\\  
\inferrule*{
  \inferrule*{ 
    \inferrule*{
      \inferrule*{  }
                 { x : \Tref{\tau}, n : \int ; \mtenv \vdash x : \Tref{\tau} }}
               { x : \Tref{\tau}, n : \int ; \mtenv \vdash \Ederef{x} : \tau }
    \\
    \inferrule*{ }
               { x : \Tref{\tau}, n : \int ; \mtenv \vdash  0 : \int }
}
             { x : \Tref{\tau}, n : \int ; \mtenv \vdash \Ederef{x} \; 0 : \int }
}
{ x : \Tref{\tau} ; \mtenv \vdash \lambda n : \int\ldotp (\Ederef{x} \; 0) : \tau }
}
{ x : \Tref{\tau} ; \mtenv \vdash \Eassign{x}{(\lambda n : \int\ldotp (\Ederef{x} \; 0))} : \tau }}
{ x : \Tref{\tau} ; \mtenv \vdash \knot : \tau }
\end{mathpar}

\begin{mathpar}
  \inferrule* {
    \inferrule*{
      \inferrule*{
        \inferrule*{
          \inferrule* {
            \inferrule*{ }
                       {x : \Tref{\tau}, y : \tau ; \mtenv \vdash x : \Tref{\tau}} }
                      { x : \Tref{\tau}, y : \tau ; \mtenv \vdash \Ederef{x} : \tau }
        }
                    {x : \Tref{\tau} ; \mtenv \vdash \return : \tarrow{\tau}{\tau}}
\\
\inferrule*{
\vdots
}
          { x : \Tref{\tau} ; \mtenv \vdash \knot : \tau}
 }
                   { x : \Tref{\tau} ; \mtenv \vdash \return\; \knot : \tau }
 }
               {\mtenv ; \mtenv\vdash (\lambda x : \Tref{\tau}\ldotp\return\; \knot) : \tarrow{\Tref{\tau}}{\tau} }
    \\
    \inferrule*{ 
      \inferrule*{
        \inferrule*{ }
      { x : \int ; \mtenv \vdash x : \int }}
      { \mtenv ; \mtenv \vdash \dummy : \tarrow{\int}{\int} }
      }
  { \mtenv ; \mtenv \vdash \Ealloc{ \dummy} : \Tref{\tau} }
}
{\mtenv ; \mtenv \vdash (\lambda x : \Tref{(\tau)}\ldotp \return\; \knot) \;
  (\Ealloc{ \dummy}) : \tau
 }
\end{mathpar}
\begin{mathpar}
  \inferrule* {
    \inferrule*{ \vdots }
    { \mtenv ; \mtenv \vdash (\lambda x : \Tref{(\tau)}\ldotp \return\; \knot) \;
    (\Ealloc{ \dummy}) : \tau }
    \\
    \inferrule*{ }
    { \mtenv ; \mtenv \vdash 0 : \int }
  }
  {\mtenv ; \mtenv \vdash ((\lambda x : \Tref{(\tau)}\ldotp \return\; \knot) \;
    (\Ealloc{ \dummy})) \; 0 : \int}
\end{mathpar}
\end{landscape}
In order to see how the expression diverges, we first consider the left side of the application in which the knot is prepared:
\[
  \arraycolsep=0pt
  \begin{array}{rcl}
    \multicolumn{3}{l}{\tuple{h,(\lambda x : \Tref{(\tau)}\ldotp \return\; \knot) \; (\Ealloc{ \dummy})}} \\
    \phantom{quad}&\evalto{} & \tuple{h[l\mapsto\dummy],(\lambda x : \Tref{(\tau)}\ldotp \return\; \knot) \; l} \\
     &\evalto{} & \tuple{h[l\mapsto\dummy], \subst{(\return\; \knot)}{l}{x} } \\
     &\equiv{} & \tuple{h[l\mapsto\dummy], \subst{\return}{l}{x} \; (\Eassign{l}{(\lambda n : \int\ldotp (\Ederef{l} \; 0))}) } \\
     &\evalto{} & \tuple{h[l\mapsto \lambda n : \int\ldotp (\Ederef{l} \; 0)], \subst{\return}{l}{x} \; (\lambda n : \int\ldotp (\Ederef{l} \; 0)) } \\
     &\evalto{} & \tuple{h[l\mapsto \lambda n : \int\ldotp (\Ederef{l} \; 0)], \Ederef{l} } \\
     &\evalto{} & \tuple{h[l\mapsto \lambda n : \int\ldotp (\Ederef{l} \; 0)], \lambda n : \int\ldotp (\Ederef{l} \; 0) }
  \end{array}
\]
With the not prepared, $0$ is applied to activate it
\[
  \arraycolsep=0pt
  \begin{array}{rll}
    \multicolumn{3}{l}{\tuple{h,((\lambda x : \Tref{(\tau)}\ldotp \return\; \knot) \; (\Ealloc{ \dummy})) \; 0}} \\
    \phantom{quad}&\evaltos{} & \tuple{h[l\mapsto \lambda n : \int\ldotp (\Ederef{l} \; 0)], (\lambda n : \int\ldotp (\Ederef{l} \; 0)) \; 0 } \\
                  &\evalto{} & \tuple{h[l\mapsto \lambda n : \int\ldotp (\Ederef{l} \; 0)], (\Ederef{l} \; 0) }\\
                  &\evalto{} & \tuple{h[l\mapsto \lambda n : \int\ldotp (\Ederef{l} \; 0)], ((\lambda n : \int\ldotp (\Ederef{l} \; 0)) \; 0 } \\
                  &\evalto{} & \tuple{h[l\mapsto \lambda n : \int\ldotp (\Ederef{l} \; 0)], (\Ederef{l} \; 0) }\\
                  &\evalto{} & \tuple{h[l\mapsto \lambda n : \int\ldotp (\Ederef{l} \; 0)], ((\lambda n : \int\ldotp (\Ederef{l} \; 0)) \; 0 } \\
                  &\evalto{} & \tuple{h[l\mapsto \lambda n : \int\ldotp (\Ederef{l} \; 0)], (\Ederef{l} \; 0) }\\
                  &\evalto{} & \tuple{h[l\mapsto \lambda n : \int\ldotp (\Ederef{l} \; 0)], ((\lambda n : \int\ldotp (\Ederef{l} \; 0)) \; 0 } \\
                  &\evalto{} & \dots
  \end{array}
\]
the execution keeps alternating between two configurations, so it diverges.

\end{document}